\begin{document}

\title{Structure learning of Hamiltonians from real-time evolution}
\author{
Ainesh Bakshi \\
\texttt{ainesh@mit.edu} \\
MIT
\and
Allen Liu \\
\texttt{cliu568@mit.edu} \\
MIT
\and
Ankur Moitra \\
\texttt{moitra@mit.edu} \\
MIT
\and
Ewin Tang \\
\texttt{ewin@berkeley.edu} \\
UC Berkeley
}
\date{}

\maketitle

\begin{abstract}
    We study the problem of Hamiltonian structure learning from real-time evolution: given the ability to apply $e^{-\ii Ht}$ for an unknown local Hamiltonian $H = \sum_{a = 1}^\terms \lambda_a E_a$ on $\qubits$ qubits, the goal is to recover $H$.
    This problem is already well-understood under the assumption that the interaction terms, $E_a$, are given, and only the interaction strengths, $\lambda_a$, are unknown.
    But how efficiently can we learn a local Hamiltonian without prior knowledge of its interaction structure?
    
    We present a new, general approach to Hamiltonian learning that not only solves the challenging structure learning variant, but also resolves other open questions in the area, all while achieving the gold standard of Heisenberg-limited scaling.
    In particular, our algorithm recovers the Hamiltonian to $\eps$ error with total evolution time $\bigO{\log (n)/\eps}$, and has the following appealing properties:
    \begin{enumerate}
        \item It does not need to know the Hamiltonian terms;
        \item It works beyond the short-range setting, extending to any Hamiltonian $H$ where the sum of terms interacting with a qubit has bounded norm;
        \item It evolves according to $H$ in constant time $t$ increments, thus achieving constant time resolution.
    \end{enumerate}
    As an application, we can also learn Hamiltonians exhibiting power-law decay up to accuracy $\eps$ with total evolution time beating the standard limit of $1/\eps^2$.
   
\end{abstract}

\thispagestyle{empty}
\setcounter{page}{0}
\clearpage
\newpage

\microtypesetup{protrusion=false}
\tableofcontents{}
\thispagestyle{empty}
\microtypesetup{protrusion=true}
\setcounter{page}{0}
\clearpage
\setcounter{page}{1}

\section{Introduction}

In this work, we study \emph{Hamiltonian learning from real-time evolution}.
This problem models a fundamental algorithmic challenge in the development of controllable quantum devices: supposing we can engineer a system which performs quantum evolution, how can we characterize its behavior accurately and efficiently?
In physics, this question has a long history in the domains of quantum metrology and quantum sensing~\cite{caves81,hb93,biwh96,drc17}, where it is studied for specific quantum devices with relatively simple evolutions.
We consider Hamiltonian learning construed more broadly, to larger and more complicated systems.
Algorithms for this more general version of the task have seen increased interest as a potentially scalable method for benchmarking quantum computers~\cite{shnbdu14,wgfc14}, since they can be applied to detecting errors or certifying correctness of implementation.

We now define the problem: consider a system of $\qubits$ qubits.
Associated with this system is a Hamiltonian, $H = \sum_{a=1}^\terms \lambda_a E_a$, which encodes the kinds of interactions occurring between the qubits and the strengths of these interactions.
Throughout, we consider Hamiltonians which are local, meaning that the terms $E_1,\dots,E_\terms$ all act on at most a constant number of qubits.
Our goal is to determine $H$, given the ability to evolve quantum states according to $H$, i.e.\ apply the unitary $e^{-\ii Ht}$ for any $t > 0$.

Let us situate this problem in a larger context.
For a Hamiltonian $H$ describing a system, its properties of interest fall into two broad classes: \emph{dynamics}, how an initial state evolves with respect to $H$ over time; and \emph{statics}, how the system behaves at equilibrium, corresponding to its Gibbs states, ground states, or more generally any state which is fixed by evolution by $H$.
There are also two broad classes of algorithmic tasks: first, we could ask the \emph{forward} problem, to simulate the quantum system given the Hamiltonian.
This is called Hamiltonian simulation~\cite{lloyd96} when we want to simulate $H$'s dynamics and Gibbs sampling~\cite{bkllsw19} or ground state preparation when we want to simulate $H$'s statics.
Second, we could ask the \emph{inverse} problem, to learn the Hamiltonian given copies of the quantum system.
In the static case, this is known as Hamiltonian learning from Gibbs states~\cite{AAKS21}.
We study the dynamic case of the inverse problem, Hamiltonian learning from real-time evolution~\cite{slp11,wgfc14}.
These two versions of Hamiltonian learning are appropriate in different experimental settings, and have different challenges.
While learning from time evolution requires a greater degree of experimental control, it also allows us to get stronger learning guarantees.
Since it is the main focus of our paper, we will refer to Hamiltonian learning from real-time evolution as Hamiltonian learning for brevity.

The primary figure of merit for Hamiltonian learning algorithms is \emph{total evolution time}, $\tet$.
This is the amount of time the unknown evolution is applied over the course of the algorithm: for applying $e^{-\ii Ht}$ we associate a cost of $t$, and track the total cost of the algorithm.

It is well-understood how to perform Hamiltonian learning in wide generality using ``derivative estimation'' techniques, achieving $\tet = \bigO{\log(\qubits) / \eps^3}$~\cite{zylb21}.\footnote{
    We are not aware of a rigorous proof in the literature, but see \cref{sec:related} for a derivation.
    Note that the $\log(\qubits)$ here is actually a $\log(\terms)$, but since $H$ is local, it can be specified by $\poly(\qubits)$ parameters, so $\log(\terms) = \bigO{\log(\qubits)}$.
}
For our current discussion, we narrow our focus to algorithms which strictly improve on the $\tet$ from derivative estimation.\footnote{
    It is known how to perform structure learning with Heisenberg-limited error scaling, but worse dependence on system size: $\tet = \poly(\qubits)/\eps$~\cite{oksm24}.
    We discuss this work in \cref{sec:related}.
}
Recent work has made advances in efficiency of Hamiltonian learning under the simplifying assumption that the interaction terms, $E_a$, are known, and only the interaction strengths, $\lambda_a$, are unknown.
In particular, as proved by Huang, Tong, Fang, and Su~\cite{htfs23}, to estimate $H$ to $\eps$ error, one can achieve the scaling $\tet = \bigTheta{\log(\qubits)/\eps}$.
This has the optimal, ``Heisenberg-limited'' scaling in the error, $1/\eps$, better than the ``standard limit'' of $1/\eps^2$ that one might expect.

However, assuming that the interaction terms are known is often not realistic. 
The aforementioned prior work fundamentally requires knowledge of the interaction terms, using it either to compute commutator expansions with respect to these terms~\cite{hkt24} or to reshape the Hamiltonian to decouple terms from the rest of the space~\cite{htfs23}.
The following basic question will be the focus of our work:

\begin{question}
\label{question:central-open-q}
\begin{center}
    \emph{How efficiently can we learn a local Hamiltonian \\ without prior knowledge about what interactions are allowed?}
\end{center}
\end{question}

We refer to this problem as Hamiltonian \emph{structure} learning.
Taking a step back, the hope is that frameworks and modes of analysis from theoretical machine learning can be useful in designing algorithms for learning about quantum systems.
This agenda has already seen important successes, relating algorithms for classical spin systems to learning from Gibbs states~\cite{AAKS21, hkt24, blmt24} and Gibbs sampling~\cite{rfa24,blmt24b}.
We view Hamiltonian structure learning as the next frontier.
It is a natural counterpart to the well-studied classical problem of structure learning in graphical models~\cite{km17}, and with a structure learning algorithm, we can characterize quantum devices without imposing an underlying locality structure.
Further, we aim to achieve the gold standard of $\tet = \bigTheta{\log(\qubits)/\eps}$ even when all interactions are unknown; existing improvements to derivative estimation cannot handle a Hamiltonian with even one unspecified long-range interaction.

Along the way to structure learning, we will also revisit other limitations of existing Hamiltonian learning technology.
Prior work assumes that the Hamiltonian has bounded-range interactions, meaning that, if we imagine the qubits of our system as the vertices of a lattice of small dimension, interaction strengths are exactly zero for interaction terms which extend beyond a certain constant range.
However, many classes of Hamiltonians do not have this structure, and a central open question posed by prior work~\cite{htfs23} is to understand what kind of structure is necessary for Hamiltonian learning.
For example, many physically relevant Hamiltonians have interactions whose strengths exhibit \emph{algebraic decay} in the range of the interaction~\cite{ddmprt23}.
Hamiltonian learning algorithms for this class have been studied prior~\cite{smdwb24}, but are sub-optimal in evolution time and time resolution, and do not work across the full range of decays.
Comparing to the classical setting, learning algorithms for Markov random fields work even under just the constraint that $\sum_{\supp(E_a) \ni i} \abs{\lambda_a}$, the sum of the interaction strengths on site $i$, is bounded for every $i \in [\qubits]$.
This leads us to ask:

\begin{question}
\label{question:learning-without-locality}
\begin{center}
    \emph{How efficiently can we learn a local Hamiltonian \\ without the assumption of strictly bounded range?}
\end{center}
\end{question}

Next, we consider another figure of merit on the quantum resources required by a Hamiltonian learning algorithm.
\emph{Time resolution}, $\tres$, is the smallest value of $t$ we need to apply over the course of the algorithm.
Accurately implementing many small time evolutions, interleaved with other operations, requires a large degree of quantum control over the evolution, which is experimentally challenging.
Thus, time resolution becomes a concern for making Hamiltonian learning feasible in practice~\cite{smdwb24,dos23}.
Ideally, we want the time resolution to be constant; above this, one runs into issues of identifiability \cite[Remark A.6]{hkt24}.

Currently, there is an algorithm with $\tet = \bigTheta{\log(\qubits)/\eps}$ but $\tres = \bigTheta{\sqrt{\eps}}$~\cite{htfs23} and an algorithm with $\tet = \bigTheta{\log(\qubits)/\eps^2}$ and $\tres = \bigTheta{1}$~\cite{hkt24}.
It has been conjectured that it is possible to get the best of both worlds~\cite{dos23}.  Thus, we ask:

\begin{question}
\label{question:time-resolution}
\begin{center}
    \emph{Can we learn a local Hamiltonian \\ with Heisenberg scaling} and \emph{constant time resolution?}
\end{center}
\end{question}

The limitations we have identified are all downstream of a lack of techniques for Hamiltonian learning.
Algorithms for structure learning perform suboptimally in evolution time, and existing approaches for improvement~\cite{htfs23, dos23, hkt24} require a large amount of information about the structure of the Hamiltonian, scale poorly with the number of candidate terms supported on a given site, and, to achieve Heisenberg scaling, interleave the evolution with quantum control at time intervals which depend on $\eps$.
Addressing these issues motivates us to investigate alternative strategies for achieving Heisenberg scaling.

\subsection{Results}

Our main result is an algorithm for structure learning from real-time evolution with $\tet = \bigTheta{\log(\qubits)/\eps}$, addressing Question~\ref{question:central-open-q}.
This algorithm also simultaneously supports Hamiltonians without bounded range and has constant time resolution, addressing Questions~\ref{question:learning-without-locality} and \ref{question:time-resolution}.

First, let us define the class of Hamiltonians under consideration; see \cref{subsec:hamiltonian-of-interacting-system} for more details.
Let $\locals = \{\id, \sigma_x, \sigma_y, \sigma_z\}^{\otimes \qubits} \in \mathbb{C}^{\dims \times \dims}$ denote the set of tensor products of Pauli matrices on $\qubits = \log_2(\dims)$ qubits, and let $\locals_k \subset \locals$ denote those Pauli matrices which are non-identity on at most $k$ qubits.
We consider a $\locality$-local Hamiltonian with $\terms$ terms, $H = \sum_{a=1}^\terms \lambda_a E_a$.
We assume no particular locality structure.
We only assume that the interaction strengths on any particular qubit are bounded:
\begin{align}\label{eq:local1norm-intro}
    \lonorm{H} \deq \max_{i \in [\qubits]} \sum_{\substack{a \in [\terms] \\ \supp(E_a) \ni i}} \abs{\lambda_a} \leq \degree.
\end{align}
Our algorithm will also depend on an \textit{effective sparsity} parameter $\sparse$, which one can think of as smoothly bounding the number of coefficients larger than $\eps$ acting on any given site:
\begin{align}
    \sparse \deq
    \max_{i \in [\qubits]} \sum_{\substack{a \in [\terms] \\ \supp(E_a) \ni i}}\min(1, \lambda_a^2/\eps^2)
    \leq \terms.
\end{align}
With this, we can state our main theorem.

\begin{theorem}[Learning a local Hamiltonian from real-time evolution, \cref{thm:main}] \label{thm:main-informal}
Let $H = \sum_{a=1}^\terms \lambda_a E_a$ be an unknown $n$-qubit, $\locality$-local Hamiltonian.  Suppose $\locality = \bigO{1}$ and $\lonorm{H} = \bigO{1}$.
Then given $0< \eps <1$, there exists a quantum algorithm $\calA$ that outputs a set of estimated terms, $\{(P, \wh{\lambda}_P)\}_{P \in \locals_\locality}$ with the following guarantees:
\begin{enumerate}
    \item \textup{(Accuracy)} With probability $0.99$, $\abs{\wh{\lambda}_{E_a} - \lambda_a} < \eps$ for all $a \in [\terms]$, and $\wh{\lambda}_{P} = 0$ otherwise;
    \item \textup{(Evolution time)} $\alg$ applies $e^{-\ii Ht}$ with a total evolution time of $\tet = \bigO{\sparse \log(\qubits) / \eps  }$;
    \item \textup{(Time resolution)} $\alg$ only applies $e^{-\ii Ht}$ with $t \geq \tres = \bigTheta{1/\sparse}$;
    \item \textup{(Experiment count)} $\alg$ runs $\bigOt{\sparse^2 \log(\qubits)\log(1/\eps )  }$ quantum circuits, all of the form in \cref{fig:the-circuit};
    \item \textup{(Classical overhead)} $\alg$ uses classical computation with $\bigOt{\qubits^2\sparse^3\log(1/\eps)}$ total running time.
\end{enumerate}
\end{theorem}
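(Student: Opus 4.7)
The plan is to reduce the theorem to estimating a single coefficient $\lambda_P$ for each candidate Pauli $P \in \locals_\locality$, and then to orchestrate many such estimates efficiently by exploiting the sparsity $\sparse$ and the local $1$-norm bound $\lonorm{H} \leq \degree$.

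For the per-Pauli subroutine, I would design a Hadamard-test-style circuit whose output encodes $\lambda_P$ in the phase of an ancilla qubit. A natural starting point is the conjugation identity $P e^{-\ii H t} P = e^{-\ii (P H P) t}$: a circuit comparing $e^{-\ii H t}$ against $e^{-\ii (P H P) t}$ picks out the ``anticommuting sector'' $H_P^- = \sum_{a:\{E_a,P\}=0} \lambda_a E_a$ to leading order. To isolate the single term $\lambda_P$ from this sum without knowing the other terms, I would insert further random Pauli twirls supported near $\supp(P)$: averaging over a carefully chosen distribution of single-qubit Paulis produces, in expectation, a signal proportional to $\lambda_P$ alone, with variance controlled by $\lonorm{H}^2 \leq \degree^2$.

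To reach Heisenberg scaling while keeping the time step fixed at $\tres = \bigTheta{1/\sparse}$, I would amplify the ancilla phase by coherent repetition. Concatenating the gadget $k$ times yields a phase proportional to $k \lambda_P \tres$, and robust phase estimation over depths $k = 1, 2, 4, \ldots, \bigO{1/(\eps \tres)}$ extracts $\lambda_P$ to additive error $\eps$ using $\bigO{\log(1/\eps)}$ experiments and total evolution time $\bigO{1/\eps}$ per Pauli, each call taking exactly time $\tres$. To avoid paying $\qubits^{\locality}$ in total work, I would probe many Paulis in parallel: those with disjoint light cones under evolution for time $\tres$ interfere negligibly, so a coloring of the qubit graph into $\bigO{\log \qubits}$ classes of spatially separated sites allows $\Omega(\qubits)$ Paulis to be estimated from a single batch of evolutions.

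The main obstacle is controlling the cross-talk from the rest of the Hamiltonian: during each evolution of length $\tres$, all terms act, not just the target. In the strictly bounded-range regime a finite light cone limits cross-talk to constantly many neighbors, but under only the local $1$-norm bound arbitrarily many distant terms could in principle contribute. I would resolve this with a Lieb-Robinson-style bound phrased in terms of $\lonorm{H}$ rather than strict locality, showing that after time $\tres = \bigTheta{1/\sparse}$ the Heisenberg-evolved Pauli $e^{\ii H t} P e^{-\ii H t}$ is concentrated on an effective ball of radius $\bigO{\sparse \, \tres} = \bigO{1}$ with exponentially decaying tails. This bounds the per-call estimator bias by $\bigO{\eps \tres}$, which is preserved under the $\bigO{1/(\eps \tres)}$ amplification rounds and leaves a net $\bigO{\eps}$ error. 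Combining the per-Pauli cost with $\bigO{\sparse \log \qubits}$ effective batches then yields the claimed $\tet = \bigO{\sparse \log(\qubits)/\eps}$ and $\bigOt{\sparse^2 \log(\qubits) \log(1/\eps)}$ experiments.
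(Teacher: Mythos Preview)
The central gap is the claim that ``concatenating the gadget $k$ times yields a phase proportional to $k \lambda_P \tres$''. Your isolation of $\lambda_P$ relies on \emph{incoherent} averaging over random Pauli twirls: for any fixed choice of twirl, the gadget sees the full anticommuting sector $H_P^-$ (which, incidentally, does not even contain the term $\lambda_P P$, since $P$ commutes with itself), and only after averaging over twirls could the signal become proportional to a single coefficient. Coherent repetition of the gadget with independent random twirls does not amplify the mean; it amplifies each individual realization, whose phase depends on all of $H_P^-$. Conversely, repeating with a fixed twirl amplifies a quantity that is not $\lambda_P$. Robust phase estimation requires a single well-defined eigenphase, and there is none here. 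The paper resolves exactly this tension by a different mechanism: rather than isolate one coefficient and then amplify, it coherently amplifies the \emph{entire residual} $H - H_0$ via the Trotterized evolution $Z = (e^{-\ii Ht}e^{\ii H_0 t})^{s}$, proves the first-order identity $Z^\dagger P Z \approx P + \ii s t[H-H_0,P]$ with constant time resolution (the key Trotter lemma), and then reads off all coefficients at once with classical shadows. Bootstrapping on the residual shrinks $\ltnorm{H - H_0}$ geometrically, which is what permits taking $s$ larger in later rounds.

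Two further issues block the stated resource bounds. Your parallelization step---coloring the qubit graph and batching Paulis with ``disjoint light cones''---presupposes knowledge of which qubits interact, which is precisely what structure learning must do without; under only the bound $\lonorm{H} = \bigO{1}$ there is no underlying metric, so ``effective ball of radius $\bigO{1}$'' and ``spatially separated sites'' are undefined. And your classical overhead is $\qubits^\locality$, not $\qubits^2$: you propose to estimate $\lambda_P$ for every $P \in \locals_\locality$, and even if the quantum experiments parallelize perfectly, merely enumerating $\abs{\locals_\locality} = \bigTheta{\qubits^\locality}$ candidates exceeds the budget in item (5). The paper's $\qubits^2$ bound comes from a Goldreich--Levin-style search that locates the large Pauli coefficients of $Z^\dagger P Z$ without enumerating all of $\locals_\locality$.
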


Prior work on Hamiltonian learning assumes the unknown Hamiltonian is \emph{low-intersection}, meaning that only a constant number of terms interact with each qubit (\cref{def:low-insersection-ham}).
In this case, $\sparse = \bigO{1}$ for all $\eps$ and thus our result matches \cite{htfs23} in evolution time and \cite{hkt24} in time resolution.
Further, our algorithm also works when the structure is unknown, recovering the interaction structure and the interaction strengths to $\eps$ accuracy.\footnote{
    Algorithms which assume knowledge of input terms can handle unspecified terms to a small extent, provided that these do not change the underlying locality structure of the Hamiltonian.
    This can be done by simply considering a broader class of Hamiltonians with more terms.
    For example, for a Hamiltonian on a line, $H = \sum_{i=1}^{n-1} h_{i, i+1}$ with $h_{i, i+1}$ corresponding to unknown terms acting on qubits $i$ and $i+1$, we can learn the terms by expanding every term in the Pauli basis, $h_{i, i+1} = \sum_{P, Q \text{ Pauli}} c_{i,i+1}^{(P, Q)} P_i \otimes Q_{i+1}$, and then learning the parameters $c_{i,i+1}^{(P, Q)}$ of the expanded Hamiltonian.
    However, this fails when the Hamiltonian has even one unspecified long-range interaction, e.g.\ there is an additional term between two qubits, but we are not told which ones.
}
Additionally, our result extends beyond the low-intersection setting to the setting where $H$ could have arbitrary coefficients on all interaction terms.
Provided $\lonorm{H} = \bigO{1}$, our algorithm is guaranteed to recover all of the coefficients that are larger than $\eps$.

Our algorithm even applies to $\bigO{1}$-local Hamiltonians with no locality structure and arbitrary $\lonorm{H}$, though the total time evolution $\tet = \bigO{\terms \log(\qubits)/\eps}$ becomes linear in the number of terms.
This follows from bounding $\sparse \leq \terms$ in \cref{thm:main}.

\begin{figure}
\begin{align*}
    \Qcircuit @R=.6em @C=0.65em {
    \push{\ket{0}} & \qw & \multigate{3}{U} & \multigate{3}{e^{\ii H_0 t}} & \multigate{3}{e^{-\ii H t}} & \qw & {\cdots} & & \multigate{3}{e^{\ii H_0 t}} & \multigate{3}{e^{-\ii H t}}& \multigate{3}{V} &  \qw & \meter \\
    \push{\vdots} &&&&&& \ddots &&&&&& {\vdots}\\
    &&&&&&&&&&& \\
    \push{\ket{0}} & \qw & \ghost{U} & \ghost{e^{\ii H_0 t}} & \ghost{e^{-\ii H t}} & \qw &{\cdots} & & \ghost{e^{\ii H_0 t}} & \ghost{e^{-\ii Ht}}& \ghost{V} & \qw & \meter
    \gategroup{1}{4}{4}{10}{.5em}{_\}} \\
    &&&&&& \leq 1/\eps \text{ times}
    }
\end{align*}
\caption{
    The basic subroutine of our main algorithm.
    All of the circuits we consider are of this form.
    Here, $H_0$ is a known Hamiltonian, and $U$ and $V$ are layers of single-qubit Clifford gates.
} \label{fig:the-circuit}
\end{figure}

\begin{remark}[On $\eps$ dependence]
    As discussed previously, $\tet = \bigO{1/\eps}$ and $\tres = \bigOmega{1}$ attains the optimal dependence on $\eps$.
    By a prior lower bound~\cite[Theorem 18]{dos23}, the number of times we need to interleave the unknown Hamiltonian with other operations, $\bigO{1/\eps}$, is also optimal.
    Further, $\log\frac{1}{\eps}$ is a lower bound on the number of experiments and the classical overhead, since specifying the output Hamiltonian requires $\qubits \log\frac{1}{\eps}$ bits of information, supposing that there are at least $\qubits$ non-zero Hamiltonian coefficients.
    Every experiment output gives $\qubits$ bits of information, so $\log\frac{1}{\eps}$ experiments are necessary.
    Similarly, writing down the output requires $\qubits \log\frac{1}{\eps}$ time.
    So, all of our figures of merit have optimal $\eps$ dependence up to $\log \log\frac{1}{\eps}$ terms.
\end{remark}

\begin{remark}[Other properties]
Prior algorithms on learning unitary operations to Heisenberg scaling, like that of \cite{htfs23}, typically have several other nice properties not mentioned above.
Our algorithm also matches these properties, all while supporting structure learning, long-range interactions, and constant time resolution.
We list them here.
\begin{itemize}
    \item All quantum circuits to be performed are ``prepare-apply-measure'' circuits, in that they take the form of \cref{fig:the-circuit}: initialize every qubit to some Pauli eigenvector; apply $(e^{-\ii H t}e^{\ii H_0 t})^k$ for some $k$, the alternating evolution between a known Hamiltonian $H_0$ and the unknown Hamiltonian $H$; and measure every qubit in some Pauli basis.
    \item Consequently, these circuits can be performed with $\qubits$ qubits, and no space overhead.
    \item The algorithm has $\log(1/\eps)$ rounds of adaptivity, so the quantum circuits can be significantly parallelized.
    \item The algorithm still succeeds even when the quantum circuit has up to $\bigTheta{1/\sparse}$ error per-experiment, such as that caused by SPAM (state preparation and measurement) error.
\end{itemize}
\end{remark}

\begin{remark}[On gate complexity]
    For simplicity, we do not track the gate complexity of the quantum circuits, and simply assume we can apply any unitary matrix.
    This can be computed, though: the dominant cost is the time evolution of a known Hamiltonian, $e^{\ii H_0 t}$.
    The known Hamiltonian and the unknown Hamiltonian are evolved for the same length of time, so this cost is the cost of evolving the known Hamiltonian for $\tet$ time.
    Generally, $H_0$ will always be a Hamiltonian of a similar form to the unknown one---for example, it has the same set of terms---so it is reasonable to expect that evolving with respect to $H_0$ is also efficient.

    For example, suppose that the unknown Hamiltonian is geometrically local on a constant-dimensional lattice.
    Then, the gate complexity of our algorithm is the complexity of evolving with respect to such a Hamiltonian for $\tet$ time, which is $\bigOt{\qubits \tet} = \bigOt{\qubits/\eps}$ when using QSVT-style techniques (which require ancilla qubits)~\cite{hhkl21}, or $\bigO{(\qubits \tet)^{1 + \frac{1}{2k}}} = \bigO{(\qubits/\eps)^{1 + \frac{1}{2k}}}$ when using $(2k)$th order product formulas~\cite{cs19}.
    The latter is comparable to the gate complexity of \cite{htfs23} in this setting, which is $\bigOt{\qubits \eps^{-1.5}}$, though the product formulas demand the use of constant-local gates, whereas single-qubit Cliffords suffice for \cite{htfs23}.
\end{remark}

One surprising aspect of \cref{thm:main-informal} is that we are able to do structure learning in time $\bigOt{\qubits^2}$.
This is fixed-parameter tractable (FPT), meaning that the constant in the exponent does not depend on the locality of the underlying Hamiltonian.
By contrast, classical results about learning Markov random fields state that $\qubits^{\locality}$ running time is required for structure learning with $\locality$-local interactions \cite{km17} (under standard hardness assumptions about learning sparse parities with noise).
This classical structure learning task is a special case of Hamiltonian learning from Gibbs states.
This demonstrates an interesting separation between learning a Hamiltonian from its dynamics, as in time evolution, and learning from its steady-states, as in the Gibbs state.

What allows us to get a FPT algorithm is a Goldreich--Levin-like version of the classical shadows protocol of Huang, Kueng, and Preskill~\cite{hkp20}.
For an $\qubits$-qubit state, classical shadows allows us to estimate all its $\locality$-local Pauli coefficients in time $\qubits^{\locality}$.
However, consider the ``dual'' access model where we have an unknown observable $O$ and we can efficiently estimate $\tr(O \rho)$ for input states $\rho$.
The classical shadows formalism also works in this setting~\cite{hcp23}.
We give a subroutine with improved efficiency, able to estimate all of the nontrivial $\locality$-local Pauli coefficients of $O$ in time $\bigOt{\qubits \cdot f(\locality)}$, which is FPT.
So, this observable-centered perspective gives more efficient algorithms in our setting of learning quantum dynamics.
See \cref{prop:gl-general-intro} and \cref{lem:gl-general} for more details.

As a corollary of our main theorem, we get algorithms for learning Hamiltonians that exhibit power law decay which scale better than the standard limit.
\begin{corollary}[Informal version of \cref{cor:power-law}]
    Let $H = \sum_{a=1}^\terms \lambda_a E_a$ be a $\locality$-local Hamiltonian on a $d$-dimensional lattice with $\alpha$-power law decay for $\alpha > d$ (\cref{def:power-law-Ham}).
    Let
    \[
        \kappa = \frac{d\locality}{d\locality + (\alpha - d)} \,.
    \]
    Then we can find some $\hat{\lambda}$ such that $\infnorm{\hat{\lambda} - \lambda} < \eps$ with probability $\geq 1-\delta$ using $\tet = \bigO{\frac{1}{\eps^{1 + \kappa}}\log\frac{\qubits}{\delta}}$ total time evolution.
\end{corollary}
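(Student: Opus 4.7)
The plan is to reduce this corollary to \cref{thm:main-informal} by bounding the effective sparsity $\sparse$ of a power-law Hamiltonian. Once we establish $\sparse = \bigO{\eps^{-\kappa}}$, the main theorem's evolution-time guarantee $\tet = \bigO{\sparse \log(\qubits)/\eps}$ immediately yields $\tet = \bigO{\eps^{-1-\kappa}\log(\qubits)}$, and a standard probability-boosting step will then produce the $\log(1/\delta)$ factor.

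\emph{Bounding $\sparse$.} Fix a qubit $i$ and consider the per-site sum $\sum_{a : i \in \supp(E_a)} \min(1, \lambda_a^2 / \eps^2)$ that defines $\sparse$. We group the terms by the scale $r$ measuring the spatial extent of $\supp(E_a)$ around $i$: the $d$-dimensional lattice geometry bounds the number of $\locality$-local supports containing $i$ with extent at most $r$ by a polynomial in $r$ whose exponent depends on $d$ and $\locality$, while the $\alpha$-power-law assumption (\cref{def:power-law-Ham}) controls the magnitude of any such coefficient by an inverse power of $r$. We then split the sum at a threshold scale $r^\star$: for $r \leq r^\star$ we use the trivial clip $\min(\cdot) \leq 1$ and count supports; for $r > r^\star$ we use $\min(\cdot) \leq \lambda_a^2/\eps^2$ and the convergent tail of the squared-coefficient sum (the condition $\alpha > d$ being exactly what ensures convergence). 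Choosing $r^\star$ to equate the two contributions and back-substituting should yield $\sparse = \bigO{\eps^{-\kappa}}$ with the stated $\kappa = d\locality/(d\locality + (\alpha - d))$.

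\emph{Assembling the corollary.} With the sparsity bound in hand, we invoke \cref{thm:main-informal} on $H$ with accuracy $\eps$, which with probability $0.99$ returns $\hat{\lambda}$ satisfying $\infnorm{\hat{\lambda} - \lambda} < \eps$ using total evolution time $\bigO{\sparse \log(\qubits)/\eps} = \bigO{\eps^{-1-\kappa}\log(\qubits)}$. To boost confidence from $0.99$ to $1 - \delta$ we repeat the algorithm $\bigO{\log(1/\delta)}$ times independently and take the coordinate-wise median of the estimates; standard concentration then guarantees that the median is within $\eps$ of the true value on every coefficient simultaneously with probability at least $1 - \delta$, producing the advertised $\log(\qubits/\delta)$ dependence. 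The main obstacle will be the sparsity bookkeeping of the first step: matching the stated exponent $\kappa$ exactly requires a careful balancing of the polynomial support count (driven by $d$ and $\locality$) against the power-law tail (driven by $\alpha - d$), and this identity only falls out cleanly once we are precise about the geometric counting of $\locality$-local tuples on the lattice and the exact decay rate encoded in \cref{def:power-law-Ham}.
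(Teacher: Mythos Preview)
Your plan matches the paper's proof almost exactly: the paper proves a sparsity lemma by splitting the per-site sum at a diameter threshold $R$, bounding the short-range piece by a support count $(2R)^{d\locality}$ and the long-range piece by $\eps^{-1}\sum_{\ell>R}2^d\ell^{d-1-\alpha}\lesssim \eps^{-1}R^{d-\alpha}/(\alpha-d)$, and then balancing to obtain $\sparse_\eps\lesssim\eps^{-\kappa}$; it then simply plugs this into the main theorem. Two small points where you diverge. First, the paper bounds $\sum\min(1,|\lambda_a|/\eps)$ rather than the squared version and only at the end uses $\min(1,x^2)\le\min(1,x)$; this is more natural because the power-law hypothesis controls $\ell^1$ sums of coefficients, not $\ell^2$ sums, so your squared tail bound would anyway have to pass through $|\lambda_a|^2\le|\lambda_a|$. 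Second, your probability-boosting step does not quite give the advertised bound: a coordinatewise median over $O(\log(1/\delta))$ repetitions, followed by a union bound over the $\poly(\qubits)$ coefficients, costs $O(\log(\qubits/\delta))$ repetitions and hence total evolution time $O(\eps^{-1-\kappa}\log(\qubits)\log(\qubits/\delta))$, an extra $\log\qubits$ compared to the stated $O(\eps^{-1-\kappa}\log(\qubits/\delta))$. The paper sidesteps this entirely by invoking the formal version of the main theorem (\cref{thm:main}), which already carries the $\log(\qubits/\delta)$ dependence internally.
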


\begin{remark} \label{rmk:power-law}
The best prior algorithm for learning power law Hamiltonians is due to Stilck França, Markovich, Dobrovitski, Werner, and Borregaard~\cite{smdwb24}.
It has a total time evolution of $1/\eps^{2 + \kappa'}$, where $\kappa'$ is a parameter which goes to zero as the decay rate $\alpha$ increases.\footnote{
    Their algorithm also learns Lindbladians, a more general class than Hamiltonians.
    Lindbladian learning requires $1/\eps^2$ evolution time, since tasks like estimating transition matrices from (continuous-time) queries reduce to it.
}
On the other hand, our algorithm always performs better than $1/\eps^2$, and tends to Heisenberg-limited scaling as the decay rate increases.
Further, their algorithm only works for $\alpha$ larger than a multiple of $d$; ours holds up to $\alpha = d$.
This is a natural barrier, as beyond it, the strength of interactions can diverge with distance and quantities like energy are no longer extensive~\cite{ddmprt23}.
\end{remark}

\subsection{Technical overview}

We now illustrate the key technical components of our algorithm.
We consider a Hamiltonian $H(\lambda) = \sum_{a=1}^\terms \lambda_a E_a$, where each $E_a$ has support size at most $\locality = \bigO{1}$.
For notational convenience, we pad the Hamiltonian such that there is a term $E_a$ for every local Pauli in $\locals_\locality$; with this change, the set of terms is now known (since it is the set of all possible interaction terms), and what is unknown is which terms have non-zero weight.
A motivating example to keep in mind is when $E_a$ is geometrically local with an unknown geometry: the terms $E_a$ are local with respect to some constant-dimensional lattice, but the position of the qubits on the lattice is unknown.
In this setting, both the local norm $\lonorm{H}$ and effective sparsity $\sparse$ are constant.
With this example, we focus on showing \emph{structure learning} (\cref{question:central-open-q}) and \emph{constant time resolution} (\cref{question:time-resolution}).
Our algorithm will naturally extend to Hamiltonians with long-range interactions (\cref{question:learning-without-locality}) with some additional care.

In this section, we focus on three main contributions.
We begin by describing a bootstrapping framework for achieving Heisenberg scaling which was introduced by Dutkiewicz, O'Brien, and Schuster~\cite{dos23}.
This will form the ``outer loop'' of our algorithm.
This prior work's instantiation of this framework ultimately does not have guarantees beyond those of \cite{htfs23}, only working with an alternative access model and still needing knowledge of the interaction terms.
We make the observation that, unlike \cite{htfs23}, this framework can be performed without locality knowledge, and so is well-suited for structure learning.
Through this framework, Heisenberg-limited structure learning reduces to a problem of estimate improvement: given a coefficient estimate $\lambda^{(0)}$ such that $\infnorm{\lambda - \lambda^{(0)}} < \eta$, find some $\lambda^{(1)}$ such that $\infnorm{\lambda - \lambda^{(1)}} < \eta/2$ using evolution time at most $1/\eta$.

If we could evolve with respect to $e^{-\ii (H - H_0)t}$ for a chosen $H_0$ instead of $e^{-\ii Ht}$, then existing results about computing expectations of local observables can be used to perform estimate improvement with the desired guarantees.
However, we are only given $e^{-\ii Ht}$, so implementing $e^{-\ii (H - H_0)t}$ requires Trotterization, alternating between time-evolutions of $H$ and $-H_0$.
Naively, this only approximates the evolution when the alternating occurs at time intervals of size $1/\qubits$, giving a time resolution with system size dependence.
Our second contribution is to show that our algorithm works when alternating happens only at constant-sized intervals, by a novel bound on a certain form of Trotter error.

We show that this suffices to obtain an algorithm for structure learning with evolution time $1/\eps$ and constant time resolution, but the classical overhead is $\qubits^{\locality}$, scaling with the locality of the unknown Hamiltonian.
Our third contribution is to give an algorithm to locate these terms in $\qubits^2$ time by implementing Goldreich--Levin-like queries on the Pauli spectrum of a Hamiltonian to efficiently identify its large coefficients.

\paragraph{A recursive framework for Hamiltonian learning with Heisenberg scaling.}
First, we describe the bootstrapping algorithm of \cite{dos23}.
The algorithm performs recursion on the residual: $\lambda$ and $\lambda^{(j)}$ are $2^{-j}$-close, and to get an improved estimate, we estimate the new Hamiltonian $H(\lambda) - H(\lambda^{(j)})$ to $2^{-j-1}$ error.
It may not be immediately clear why this recursion will be more efficient than learning $H(\lambda)$ to $\eps$ error in a single shot.
The underlying principle is that one wants to transform the task of estimating a parameter of a unitary $Z$ to $\eps$ error, which naively requires $1/\eps^2$ applications of $Z$, to the task of estimating a parameter of $Z^{1/\eps}$ to constant error, which only requires a constant number of applications of $Z^{1/\eps}$, or $1/\eps$ applications of $Z$ total.
This idea is standard in quantum metrology, including in the work on robust phase estimation~\cite{Kimmel2015}, and for more complicated types of parameters, it becomes important to use an initial estimate in the amplification subroutine to generate an improved estimate~\cite{hkot23}.
This approach is reminiscent of gradient descent, where in each iteration, we apply a linear update to the vector of our current estimate for $\lambda$.

\begin{algorithm}[Bootstrapping a Hamiltonian learning algorithm to Heisenberg scaling]\mbox{} \label{algo:intro-strap}
\begin{algorithmic}[1]
    \State Let $\lambda^{(0)} = (0,\dots,0)$;
    \For{$j$ from $0$ to $T = \floor{\log_2(1/\eps)}$}
        \Comment{We maintain that $\infnorm{\lambda - \lambda^{(j)}} \leq 2^{-j}$}
        \State Learn the Hamiltonian $H(\lambda) - H(\lambda^{(j)})$ to error $1/2^{j+1}$;
        \Comment{Recall $H(x) = \sum_{a=1}^\terms x_a E_a$}
        \State Let $\wh{\lambda}^{(j)}$ be the estimated coefficients;
        \State Let $\lambda^{(j+1)} \gets \lambda^{(j)} + \wh{\lambda}^{(j)}$;
    \EndFor
    \State \Output $\lambda^{(T+1)}$.
\end{algorithmic}
\end{algorithm}

In summary, to learn $H = H(\lambda)$ to $\eps$ error, it suffices to be able to improve an $\eta$-good estimate $H_0 = H(\lambda^{(0)})$ to an $\eta/2$-good one.
To achieve Heisenberg scaling, we want to perform this subroutine with evolution time $1/\eta$.
We make the key observation that this framework does not reduce to problems on individual terms, nor does it use any locality information about $H$: the reduction is global and agnostic to the structure of $H$.
Prior work~\cite{dos23} uses this framework with estimation improvement being done by the algorithm of \cite{hkt24} as a black box.
Since the algorithm in the subroutine requires knowledge of the terms, their full algorithm also does, and cannot perform structure learning; see \cref{sec:related} for more details.
Our goal is now to perform estimation improvement without knowledge of the terms.

\paragraph{Improving an estimate with continuous quantum control.}
We first make a simplifying assumption: suppose, for now, that we could perform the evolution $e^{-\ii \wh{H}t}$, where $\wh{H} = (H - H_0)/\eta$.
Then our goal of improving an estimate amounts to estimating the coefficients of $\wh{H}$ to $1/2$ error.
The simple algorithm of ``time derivative estimation'' does the trick here: to learn a term $\wh{\lambda}_a$ of $\wh{H}$, take $P_a$ to be a single-qubit Pauli such that $[E_a, P_a]$ is not zero, and let $Q_a = \tfrac{\ii}{2} [E_a, P_a] = \ii E_a P_a$. 
Then initialize a state with density matrix $\frac{\id + Q_a}{\dims}$ where $\dims = 2^\qubits$, apply $e^{-\ii \wh{H} t}$, and then measure the qubit associated to $P_a$ in the eigenbasis of $P_a$ to produce an unbiased estimator of $\tr\parens{P_a e^{-\ii \wh{H} t}\frac{\id + Q_a}{\dims} e^{\ii \wh{H}t}}$. 

Next, we argue that $\tr\parens{P_a e^{-\ii \wh{H} t}\frac{\id + Q_a}{\dims} e^{\ii \wh{H}t}}$ itself is a reasonable estimate of the $a$-th coefficient of $\wh{H}$. To see this, we observe that for small $t$, $e^{-\ii \wh{H}t} P_a e^{\ii \wh{H}t}$ is well-approximated by its first-order behavior (\cref{lem:first-order-approx}):
\begin{equation} \label{eq:tech-fo}
    e^{\ii \wh{H}t} P_a e^{-\ii \wh{H}t}
    = P_a + [\ii \wh{H}t, P_a] + \Delta,
    \text{ where } \fnorm{\Delta} \leq \frac{t^2}{2}\fnorm{[\wh{H}, [\wh{H}, P_a]]} = \bigO{t^2}
\end{equation}
This is a consequence of the Hadamard formula.
Using this approximation, we have
\begin{align}
    \tr\parens[\Big]{P_a e^{-\ii \wh{H} t}\tfrac{\id + Q_a}{\dims} e^{\ii \wh{H}t}}
    &= \frac{1}{\dims}\tr\parens{e^{\ii \wh{H} t}P_a e^{-\ii \wh{H} t} Q_a} \nonumber \\
    &= \frac{1}{\dims}\tr\parens{(P_a + [\ii \wh{H} t, P_a])Q_a} + \bigO{t^2} \nonumber \\
    &= \frac{\ii t}{\dims}\sum_{b=1}^\terms \wh{\lambda}_b \tr\parens{[E_b, P_a]Q_a} + \bigO{t^2} \nonumber\\
    &= 2\wh{\lambda}_a t + \bigO{t^2}.
    \label{eq:tech-pauli-query}
\end{align}
Here, we used that $\tr\parens{[E_a, P_a]Q_a} = \tr((2E_a P_a)(\ii E_a P_a)) = -2\ii\dims$ and $\tr\parens{[E_b, P_a] Q_a} = 0$ for $b \neq a$.
Taking $t$ to be a small enough constant, and with a large constant number of copies of the unbiased estimator, we can extract an estimate of $\wh{\lambda}_a$ to $\frac{1}{2}$ error, as desired.
This algorithm as stated holds for one particular term, but can be parallelized to estimate all coefficients with $\bigO{\log(\qubits)}$ applications of $e^{\ii \wh{H} t}$.
In fact, $\bigO{\log(\qubits)}$ applications suffice to get estimates to $\tr(P e^{-\ii \wh{H} t} Q e^{\ii \wh{H}t})$ for all $P$ and $Q$ with constant support size (\cref{lem:shadows}).
This follows because, as observed by \cite{hkt24}, the same experiment of preparing a density matrix, applying $e^{-\ii \wh{H} t}$, and measuring, can be used to estimate $\bigO{\qubits^2}$ many $P, Q$ pairs.
This is, in some sense, a ``process'' version of the classical shadows protocol of Huang, Kueng, and Preskill~\cite{hkp20}.

\paragraph{Implementing the algorithm without continuous quantum control.}
We have just described a Hamiltonian learning algorithm with Heisenberg-limited scaling, assuming that we could apply $e^{-\ii (H - H_0)t}$ for our unknown $H$ and a known $H_0$.
Now, we show how to modify this algorithm to work when we are only given the ability to perform $e^{-\ii Ht}$.

A naive first attempt is to replace the continuous control with a \textit{Trotter} approximation, where we alternate between applying $e^{-\ii Ht}$ and $e^{\ii H_0 t}$:
\begin{align*}
    e^{-\ii \wh{H}t} = e^{-\ii (H - H_0)t/\eta} = (e^{-\ii Ht/(\eta s)} e^{\ii H_0t/(\eta s)})^{s} + \text{error},
\end{align*}
where the error term goes to zero as $s$ goes to infinity.
When the error is small enough, this approach yields essentially the same total evolution time as the continuous control algorithm.
A large body of work is devoted to understanding the error term~\cite{CSTWZ19,blanes16,suzuki85,bacs06}, but in summary, to make its operator norm small here requires $s$ to depend on the system size, $\qubits$, so the time resolution scales inversely with system size.

However, recall that our algorithm only used $e^{\ii\wh{H}t} P_a  e^{-\ii\wh{H}t}$ up to first order, as shown in \eqref{eq:tech-fo}.
So, a much weaker bound on the error above suffices.
Formally, let $Z$ be our Trotter approximation to $e^{-\ii\wh{H}t}$.
Then $Z$ only needs to satisfy that, for a Pauli $P$ supported on one qubit,
\begin{align} \label{eq:intro-key}
    Z^\dagger P Z
    = P + [-\ii \wh{H}t, P] + \Delta,
    \text{ where } \fnorm{\Delta} \leq ct\fnorm{P}
\end{align}
for a sufficiently small constant $c$.
Notice that the error bound of $t^2\fnorm{P}$ in \eqref{eq:tech-fo} is here replaced with $ct\fnorm{P}$, which, importantly, is weaker.
Our key lemma is that, for local Hamiltonians $H$ and $H_0$, there is such a discrete approximation to the time evolution $e^{-\ii \wh{H} t}$ with \emph{constant} time resolution.

\begin{lemma}[Informal version of \cref{lem:main}] \label{lem:key-trotter-intro}
Let $H,H_0$ be geometrically local Hamiltonians and set 
\begin{align} \label{eq:tech-z}
    Z = (e^{-\ii Hc} e^{\ii H_0c})^{t/(\eta c)} \,.
\end{align}
Assume that $t < 1$, $\infnorm{\lambda - \lambda^{(0)}} \leq c'\eta$ where $c,c'$ are sufficiently small constants.  Then for any Pauli $P$ supported on a single qubit, \eqref{eq:intro-key} holds, where $\wh{H} = (H - H_0)/\eta$.
\end{lemma}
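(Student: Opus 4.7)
The plan is to decompose $Z^\dagger P Z - P - [-\ii \wh{H}t, P]$ into two pieces: the Hadamard remainder from the continuous first-order expansion of $e^{\ii \wh H t} P e^{-\ii \wh H t}$, and a Trotter error measuring how far $Z^\dagger P Z$ deviates from $e^{\ii \wh{H} t} P e^{-\ii \wh{H} t}$. I would bound each piece in Frobenius norm separately.

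For the continuous piece, the Hadamard formula gives
\begin{align*}
e^{\ii \wh{H}t}P e^{-\ii \wh{H}t} = P + [\ii \wh{H}t, P] + \Delta_{\mathrm{cont}}, \quad \fnorm{\Delta_{\mathrm{cont}}}\leq \tfrac{t^2}{2}\fnorm{[\wh{H},[\wh{H},P]]}.
\end{align*}
Since $P$ is a single-qubit Pauli, each nested commutator with $\wh{H}=(H-H_0)/\eta$ only picks up terms of $\wh{H}$ supported on a bounded region around $P$'s site. Combining the hypothesis $\infnorm{\lambda-\lambda^{(0)}}\leq c'\eta$ with the bounded-local-norm assumption on $H$ and $H_0$ then yields $\fnorm{[\wh{H},[\wh{H},P]]}=\bigO{\fnorm{P}}$, so $\fnorm{\Delta_{\mathrm{cont}}}=\bigO{t^2\fnorm{P}}\leq \bigO{ct\fnorm{P}}$ whenever $t\leq 1$ and $c$ is a sufficiently small constant.

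For the Trotter piece, I would set $V=e^{-\ii Hc}e^{\ii H_0 c}$ and $W=e^{-\ii(H-H_0)c}$, so that $Z=V^s$ and the ideal continuous evolution is $W^s$ with $s=t/(\eta c)$. Introducing hybrids $R_k=V^{\dagger k}W^{\dagger(s-k)}P W^{s-k}V^k$ with $R_0=W^{\dagger s}P W^s$ and $R_s=Z^\dagger P Z$, I telescope
\begin{align*}
Z^\dagger P Z - W^{\dagger s}P W^s = \sum_{k=0}^{s-1} V^{\dagger k}\bigl(V^\dagger Q_k V - W^\dagger Q_k W\bigr)V^k,
\end{align*}
where $Q_k=W^{\dagger(s-k-1)}P W^{s-k-1}$. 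Since unitary conjugation preserves Frobenius norm, it suffices to bound the single-step error $\fnorm{V^\dagger Q_k V-W^\dagger Q_k W}$ for each $k$ and sum over $k$. A BCH expansion gives $V=\exp(-\ii(H-H_0)c+\tfrac{c^2}{2}[H,H_0]+\bigO{c^3})$, so expanding the conjugation yields $V^\dagger Q V-W^\dagger Q W=-\tfrac{c^2}{2}[[H,H_0],Q]+\bigO{c^3}$, reducing the problem to controlling $\fnorm{[[H,H_0],Q_k]}$.

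I would exploit quasi-locality of $Q_k$: writing $[H,H_0]=\sum_{a,b}\lambda_a\lambda_b^{(0)}[E_a,E_b^{(0)}]$, the nested commutator $[[E_a,E_b^{(0)}],Q_k]$ vanishes unless $E_a$ and $E_b^{(0)}$ overlap and their union intersects the effective support of $Q_k$. A Lieb--Robinson-type light-cone bound shows that $Q_k$ is well-approximated by an operator supported in a ball of radius $\bigO{(s-k-1)c}$ about $P$'s site, with Frobenius-norm tails that decay fast enough for the summed bound over $k$ to be $\bigO{ct\fnorm{P}}$, provided $c$ and $c'$ are sufficiently small constants. The main obstacle is exactly this per-step Frobenius control: naive operator-norm Trotter bounds scale with $\norm{[H,H_0]}_{\mathrm{op}}$, which can grow with system size, so it is essential to work in Frobenius norm and exploit quasi-locality of $Q_k$; the delicate part of the accounting is that $Q_k$'s effective support grows with $s-k$, and one must ensure this growth does not cause the accumulated error to blow up once we sum the $s = t/(\eta c)$ per-step contributions.
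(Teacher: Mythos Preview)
Your decomposition into a continuous Hadamard remainder plus a Trotter telescoping piece is a reasonable plan, and the heuristic counting lands on the right answer, but two steps are not yet rigorous and the overall route differs from the paper's in an instructive way.

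First, the step ``a BCH expansion gives $V=\exp(-\ii(H-H_0)c+\tfrac{c^2}{2}[H,H_0]+\bigO{c^3})$'' is unjustified: the Baker--Campbell--Hausdorff series for $\log(e^{-\ii Hc}e^{\ii H_0 c})$ only converges when $\norm{Hc}+\norm{H_0 c}$ is small, and those operator norms scale with $\qubits$. The correct way to expand $V^\dagger Q V-W^\dagger Q W$ is to apply the Hadamard formula twice (once for $e^{\ii Hc}$, once for $e^{-\ii H_0 c}$) and then reorder the resulting double sum of nested commutators via the Jacobi identity so that every surviving term of order $\ell\ge 2$ carries exactly one copy of $[H,-H_0]$. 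This is precisely the paper's ``observable-based Trotter error'' lemma, and it replaces your vague $\bigO{c^3}$ by an explicit series $\sum_{\ell\ge 2}\frac{(\ii c)^\ell}{\ell!}C_\ell$ of nested commutators of $H$, $-H_0$, $[H,-H_0]$, and $Q_k$ that must be bounded term by term.

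Second, and more structurally, your hybrids $Q_k=W^{\dagger(s-k-1)}PW^{s-k-1}$ are not strictly local, only quasi-local, so bounding $\fnorm{[[H,H_0],Q_k]}$ and all the higher-$\ell$ terms would require a genuine Lieb--Robinson light-cone estimate with tail control; you flag this as ``the delicate part'' but do not carry it out. The paper sidesteps this entirely by a different induction: rather than telescoping against the continuous evolution $W^{\dagger s}PW^s$, it proves by induction on $s$ that $(e^{-\ii H_0 c}e^{\ii Hc})^sP(e^{-\ii Hc}e^{\ii H_0 c})^s$ is close to the \emph{linear} approximant $X_s=P+\ii sc[H-H_0,P]$. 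The key point is that $X_s$ is \emph{exactly} $(\locality+1)$-local for every $s$; its support never grows. Hence the one-step error $e^{-\ii H_0 c}e^{\ii Hc}X_s e^{-\ii Hc}e^{\ii H_0 c}-X_{s+1}$ can be bounded using only local-norm commutator estimates on a fixed-support operator, with no Lieb--Robinson input at all. This is what lets the paper's formal version hold for any Hamiltonian with bounded $\lonorm{\cdot}$, whereas your route, even once the gaps above are filled, remains tied to geometric locality.
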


Next, we discuss this Trotter approximation in more detail.

\paragraph{Understanding constant-time Trotterization.}
\cref{lem:key-trotter-intro} is quite subtle and we focus on proving \eqref{eq:intro-key} with just one alternating layer.
The goal is to bound
\begin{equation}
    \Delta' = Z^\dagger P Z - (P + [-\ii (H - H_0)c, P]),
    \text{ where } Z = e^{-\ii Hc} e^{\ii H_0c}.
\end{equation}
Since we want an ultimate bound of $\fnorm{\Delta} = \bigO{ct \fnorm{P}}$ when we have $t/\eta c$ alternating layers, for just one layer we aim for a bound of $\fnorm{\Delta'} = \bigO{c^2\eta \fnorm{P}}$.
Since $P + [-\ii (H - H_0)c, P]$ is a first order approximation of $e^{-\ii(H - H_0)c} P e^{\ii (H-H_0)c} + \bigO{c^2\eta^2}$, i.e.  
$$P + [-\ii (H - H_0)c, P] = e^{-\ii(H - H_0)c} P e^{\ii (H-H_0)c} + \bigO{c^2\eta^2},$$ the error $\Delta'$ can be interpreted as a Baker--Campbell--Hausdorff bound, relating evolution with respect to $H - H_0$ to the alternating evolution of $H$ and $-H_0$.
Bounding $\Delta'$ amounts to understanding nested commutators through the Hadamard formula (\cref{fact:hadamard}), $e^{X} P e^{-X} = P + [X, P] + \frac12 [X, [X, P]] + \cdots$.
Expanding the exponentials in this way, we can confirm that the zeroth and first order terms of $\Delta'$ cancel, and the second order terms are
\begin{align*}
    &\frac12[\ii cH, [\ii cH, P]] + [-\ii cH_0, [\ii cH, P]] + \frac12[-\ii cH_0, [-\ii cH_0, P]] \\
    &\quad = \frac12[\ii c(H - H_0), [\ii c(H - H_0), P]] + \frac{1}{2}[[-\ii cH_0, \ii cH], P] \\
    &\quad = \frac12\underbrace{[\ii c(H - H_0), [\ii c(H - H_0), P]]}_{\bigO{\eta^2c^2}} + \frac{1}{2}\underbrace{[[-\ii c(H - H_0), \ii cH], P]}_{\bigO{\eta c^2}} .
\end{align*}
Analogous arguments show that the higher-order terms also scale linearly in $\eta$.
The full proof of \cref{lem:main} requires additional care to treat the layers of commutators: simply iterating the one-layer bound does not suffice, since one also needs to show that the error terms, i.e.\ the higher-order nested commutators, are ``locally small'' in an appropriate sense.
Technical work also goes into proving the lemma beyond the finite-range setting, to the class of Hamiltonians $H$ with bounded $\lonorm{\cdot}$ norm.
In particular, we prove an $L^2$ bound on commutator expansions which are typically bounded in $L^1$, so that the lemma holds in greater generality.

Existing Trotter error bounds control $e^{X} e^{Y} - e^{X + Y}$ in terms of commutators like $[X, Y]$~\cite{somma16,CSTWZ19}, as well as control $e^{X} P e^{-X}$ in terms of commutators like $[X, P]$~\cite[Theorem 10]{CSTWZ19}.
This lemma can be seen as a combination of the two, bounding $e^X e^Y P e^{-Y} e^{-X} - e^{X + Y}Pe^{-(X+Y)}$ in terms of commutators of the form $[[X, Y], P]$.
We believe it may be of independent interest, as it allows us to achieve constant time resolution.

\paragraph{Performing Hamiltonian learning with weaker locality guarantees.}
In summary, we can take the continuous quantum control algorithm and apply the Trotterization lemma to replace the time evolution by $H - H_0$ with alternating constant-time evolutions between the unknown Hamiltonian $H$ and a known Hamiltonian $H_0$.
This suffices to obtain our main theorem, albeit with an inefficient running time of $\qubits^{\locality}$.

\begin{algorithm}[Improving an $\eta$-good estimate $H_0 = H(\lambda^{(0)})$, informal version of Lines 8 through 15 of \cref{algo:strap}]\mbox{} \label{algo:intro-boot}
\begin{algorithmic}[1]
    \State Let $c$ be a sufficiently small constant;
    \State Perform the following circuit $\bigO{\log(\qubits/\delta)}$ times and record the measurements:
    \[
        \Qcircuit @R=.6em @C=0.65em {
        \push{\ket{0}} & \qw & \multigate{3}{U} & \multigate{3}{e^{\ii H_0 c}} & \multigate{3}{e^{-\ii H c}} & \qw & {\cdots} & & \multigate{3}{e^{\ii H_0 c}} & \multigate{3}{e^{-\ii H c}}& \multigate{3}{V} &  \qw & \meter \\
        \push{\vdots} &&&&&& \ddots &&&&&& {\vdots}\\
        &&&&&&&&&&& \\
        \push{\ket{0}} & \qw & \ghost{U} & \ghost{e^{\ii H_0 t}} & \ghost{e^{-\ii H t}} & \qw &{\cdots} & & \ghost{e^{\ii H_0 t}} & \ghost{e^{-\ii Ht}}& \ghost{V} & \qw & \meter
        \gategroup{1}{4}{4}{10}{.5em}{_\}} \\
        &&&&&& \floor{1/\eta} \text{ times}
        }
    \]
    where $U$ and $V$ are random layers of single-qubit Clifford gates;
    \For{$a$ from $1$ to $\terms$}
        \State Let $P_a \in \locals_1$ be some 1-local Pauli that anticommutes with $E_a$;
        \State Let $Q_a \leftarrow \frac{\ii}{2}[P_a,E_a] = \ii P_a E_a$;
        \Comment{This choice of $P_a$ and $Q_a$ comes from the computation in \eqref{eq:tech-pauli-query}}
    \EndFor
    \State Use the circuit measurements to produce estimates $\wh{\lambda} \in [-1,1]^\terms$ to the expectation values $\mu \in [-1,1]^\terms$, where
    \[
        \mu_a = \frac{1}{2\dims}\tr\parens*{P_a \parens[\big]{e^{-\ii H c} e^{\ii H_0 c}}^{\floor{1/\eta}} Q_a \parens[\big]{e^{-\ii H_0 c} e^{\ii H c}}^{\floor{1/\eta}}};
    \]
    \Comment{The estimates $\wh{\lambda}$ will satisfy $\infnorm{\wh{\lambda} - \mu} < c/20$ with probability $\geq 1-\delta$ (\cref{lem:shadows})}
    \State Set $\lambda^{(1)} \gets \lambda^{(0)} + \frac{1}{c\floor{1/\eta}} \wh{\lambda}$;
    \State Set all coefficients of $\lambda^{(1)}$ smaller than $\eta/4$ to $0$;
    \State \Output $\lambda^{(1)}$;
\end{algorithmic}
\end{algorithm}

Throughout, we did not require knowledge of the structure of the Hamiltonian: the estimate improvement protocol proceeds by running quantum circuits with the alternating evolution to create a shadows dataset, and then using the dataset to estimate every coefficient.
This algorithm has one minor snag: the Hamiltonian of improved estimates, $H(\lambda^{(1)})$, may be much larger than $H(\lambda)$, since it can have many non-zero coefficients: our only guarantee is that every coefficient is $\eta/10$-close to the true value.
To apply \cref{lem:main} in the next iteration of estimation improvement, the estimate must be well-behaved, i.e.\ $\lonorm{H^{(1)}}$ must be bounded.
So, to maintain this, we round entries smaller than $\eta/4$ in $\lambda^{(1)}$ to zero, and then proceed.
We summarize the algorithm in \cref{algo:intro-boot}.
Repeating this estimation improvement protocol $\log_2(1/\eps)$ times and keeping track of the relevant parameters gives \cref{thm:main-informal}, except with a worse classical overhead of $\bigThetat{\terms} = \bigThetat{\qubits^\locality}$.

Next, we discuss how to make this algorithm more time-efficient.

\paragraph{Time-efficient structure learning.} 
Our goal is now to make our structure learning algorithm time-efficient, with a scaling better than $\qubits^\locality$.
Our algorithm currently proceeds by considering a Trotterized Hamiltonian evolution $Z$ \eqref{eq:tech-z}, which we apply in circuits to get estimates of the quantity $\frac{1}{\dims}\tr(P Z Q Z^\dagger)$, where $P$ and $Q$ are Paulis with small support (\cref{algo:intro-boot}).
Viewing $O = Z^\dagger P Z$ as an observable in $\C^{N \times N}$ (note $\norm{O} \leq 1$ and $\tr(O) = 0$), this estimator is of $\frac{1}{\dims}\tr(OQ)$.
This is a coefficient $c_Q$ of $O$ in its Pauli decomposition, $O = \sum_{R \in \locals} c_R R$ where $c_R = \frac{1}{\dims}\tr(OR)$.
Further, our key linear approximation \eqref{eq:intro-key} shows that $O \approx P + [-\ii \wh{H} t, P]$, so the $c_Q$'s are coefficients of $\wh{H}$ up to scaling and error.
This follows from the computation in \eqref{eq:tech-pauli-query}.
So, to estimate the coefficients of $\wh{H}$, it suffices to estimate the $c_Q$'s.
This recovers the coefficients of terms that do not commute with $P$, and ranging over all choices of single-qubit $P$, we can recover all coefficients of $\wh{H}$, thereby giving our improved estimate of $H$.

Upon establishing this connection to the Pauli decomposition of $O$, we can now abstract out the underlying Hamiltonian and consider structure learning for an arbitrary observable $O$.
The problem is as follows: suppose we are given an observable $O = \sum_{Q \in \locals} c_Q Q$, which we can access by preparing a state $\rho$ and measuring the POVM\footnote{
    For a pair of positive semi-definite matrices $O_{+1}, O_{-1}$ such that $O_{+1} + O_{-1} = \id$, its associated POVM is a quantum measurement that takes a quantum state with density matrix $\rho$ and outputs ``$+1$'' or ``$-1$'' with probability $\tr(O_{+1} \rho)$ and $\tr(O_{-1} \rho)$, respectively~\cite[Section~2.2.6]{NielsenChuang00}.
} associated with $\{\frac{\id + O}{2}, \frac{\id - O}{2}\}$, giving a $\{\pm 1\}$-valued random variable with expectation $\tr(O \rho)$.
Then the goal is to get an estimate of every $c_Q$, under the assumption that $O$ is ``low-degree'', in the sense that $c_Q = 0$ when $Q$ has support size greater than $\locality$, and ``sparse'', in the sense that all but $\bigO{1}$ of the $c_Q$'s are zero.\footnote{
    The problem as described assumes that the approximation $O \approx P + [-\ii \wh{H} t, P]$ is exact and that the Hamiltonian $\wh{H}$ is geometrically local, or more generally, finite-range on a bounded-degree graph.
}

We previously described how to query Pauli coefficients of $O$ with this access model: for the state $\rho = \frac{\id + Q}{\dims}$, $\tr(O\rho) = c_Q$, so by preparing multiple copies of this state we can get an estimate of $c_Q$.
So, to estimate all Pauli coefficients, we can query every local $Q$, costing $\bigO{\qubits^{\locality}}$ time.
This is necessary for this strategy since we are not told which $c_Q$'s are non-zero.

However, we can actually estimate the nontrivial interactions more efficiently, in $\bigO{ 2^{\locality} n^2}$ time.
This is fixed-parameter tractable.
To obtain a FPT runtime for structure learning, we develop quantum versions of classical machinery from boolean function learning.
Classically, given samples of a degree-$d$ function $f: \{-1, 1\}^n \to [-1,1]$ under the uniform distribution, we can estimate all of its Fourier coefficients in time $\bigO{n^d}$.
However, given query access to $f$, the Goldreich--Levin algorithm~\cite{gl89} can learn all of the Fourier coefficients larger than a constant in time $\bigOt{n}$, which suffices to estimate $f$ when it is Fourier-sparse.
Analogously, in the quantum setting, while naively it takes $\bigOt{\qubits^{\locality}}$ time to estimate all $\locality$-local Pauli coefficients, our ability to choose $\rho$ gives us a sort of query access, so we can estimate its non-trivial coefficients more efficiently.

We can see this with a simple example.
Suppose $O$ only has weight on Pauli terms made up of $\id$ and $\sigma_z$, so that we can write $O = \sum_{Q \in \{\id, \sigma_z\}^{\otimes \qubits}} c_Q Q$.
The observable $O$ then can be treated as a boolean function: consider querying for $\tr(O \rho)$ where $\rho$ is a computational basis vector, $\ket{b_1}\dots\ket{b_\qubits}$ for $b_i \in \{0,1\}$.
Then\footnote{
    This computation is also true without the assumption on $O$, since $\bra{b}\sigma_x\ket{b} = \bra{b}\sigma_y\ket{b} = 0$ for $b \in \{0,1\}$.
}
\begin{align*}
    \tr(O \ket{b}\bra{b})
    &= \sum_{Q \in \locals} c_Q \bra{b} Q \ket{b} \\
    &= \sum_{Q \in \{\id,\,\sigma_Z\}^{\otimes \qubits}} c_Q \prod_{i \in \supp(Q)} (-1)^{b_i},
\end{align*}
so $\{c_Q\}_Q$ are the Fourier coefficients of the boolean function $b \mapsto \tr(O\ket{b}\bra{b})$.
In this setting, we can directly run the Goldreich--Levin algorithm, with minor modifications to deal with the form of access we are given.
In particular, we can simulate its key subroutine, where instead of estimating a single coefficient $c_Q$, we estimate the weight of coefficients $\abs{c_X}^2$ over all $X$ ``containing'' $Q$.
We refer to this as a GL query.
Given products of Paulis $X,Q \in \locals$, we write $Q \subseteq X$ if $X$ matches $Q$ on its support and is an arbitrary product of Paulis outside the support of $Q$ (\cref{def:pauli-order}).
Then, following existing analyses~\cite[Proposition 3.40]{odonnell14}, for a Pauli matrix $Q$ which is $\sigma_z$ on the set of qubits $S \subset [\qubits]$ and $\id$ otherwise, we can write the weight above $Q$ as a second moment expression,
\begin{align} \label{eq:gl-expec}
    \expec[\Big]{b_{\overline{S}} \sim \{0,1\}^{\abs{\overline{S}}}}{\abs[\Big]{\underbrace{\expec[\Big]{b_S \sim \{0,1\}^{\abs{S}}}{\tr(O\ket{b}\bra{b})\prod_{i \in S}(-1)^{b_i}}}_{\text{Fourier coefficient of $\tr(O \ket{b}\bra{b})$ restricted to $S$}}}^2}
    = \sum_{X \supseteq Q} c_X^2,
\end{align}
where $b_S$ and $b_{\overline{S}}$ are both being sampled according to the uniform distribution.
This expression can be estimated using our access to $\tr(O \ket{b}\bra{b})$, allowing us to perform a GL query, and thereby run Goldreich--Levin.
A straightforward generalization gives an analogous algorithm without the assumption that $O$ is boolean.

Since Goldreich--Levin requires $\bigOmega{\qubits}$ GL queries, this naively translates to a linear number of queries to $O$, and thus a total evolution time linear in $\qubits$.
This dependence come from running a separate experiment each time we answer a GL query.
We want an evolution time logarithmic in $\qubits$.
To get this improvement, we give an algorithm to answer all small-support GL queries with only a logarithmic number of queries to $O$.
We show the following; see \cref{algo:structure-query} for more details.

\begin{lemma}[Informal version of \cref{lem:gl-general}] \label{prop:gl-general-intro}
    Let $O \in \mathbb{C}^{\dims \times \dims}$ be an unknown observable with $\norm{O} \leq 1$ and a Pauli decomposition of $O = \sum_{Q \in \locals} c_Q Q$.
    Suppose we can efficiently apply the POVM $\{\frac{\id + O}{2}, \frac{\id - O}{2}\}$, and suppose we are given a natural number parameter $\locality = \bigO{1}$.
    Then, with $\bigO{\log(\qubits)}$ queries to $O$ and $\bigO{\qubits \log(\qubits) }$ additional pre-processing time, we can output a data structure.
    This data structure, with probability $0.99$, can correctly respond to the following type of query: given $X \in \locals_{\locality}$, output an estimate of 
    \begin{align*}
        \sum_{\substack{Q \in \locals \\ Q \supseteq X}} \frac{\abs{c_Q}^2}{6^{\abs{\supp(Q)}}}
    \end{align*}
    to $0.01$ error.
    Answering the query takes $\bigO{\log(\qubits)}$ time on a classical computer.
\end{lemma}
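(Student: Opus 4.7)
The plan is to lift the classical Goldreich--Levin algorithm to the POVM access model on the observable $O$. Given query access to a bounded $f : \{\pm 1\}^\qubits \to [-1,1]$, classical Goldreich--Levin maintains a tree of subsets $S \subseteq [\qubits]$; at each node it estimates the ``downset weight'' $\sum_{T \supseteq S}\hat f(T)^2$ and recurses on nodes whose weights exceed a threshold. The quantum analogue replaces $\hat f$ by the Pauli coefficients $c_Q$, and calls for a subroutine that estimates $W_X := \sum_{Q \supseteq X} |c_Q|^2 / 6^{|\supp(Q)|}$ to $0.01$ error for each query $X \in \locals_\locality$. I would build this subroutine on top of a small database of ``pair'' experiments.

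\textbf{Database of pair experiments.} First, I would collect $M = \Theta(\log \qubits)$ independent samples. In experiment $t$, draw independently on each qubit a uniformly random Pauli basis $B_i^{(t)} \in \{\sigma_x, \sigma_y, \sigma_z\}$ and sign $s_i^{(t)} \in \{\pm 1\}$, prepare two fresh copies of the product state $\rho^{(t)} = \bigotimes_i (\id + s_i^{(t)} B_i^{(t)})/2$, apply the POVM $\{(\id + O)/2, (\id - O)/2\}$ to each copy, and record the two outcomes $m^{(t,1)}, m^{(t,2)} \in \{\pm 1\}$. This uses $O(\log \qubits)$ POVM queries total and stores tuples of $O(\qubits)$ bits each, for total classical storage and pre-processing time $O(\qubits \log \qubits)$. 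Conditional on $\rho^{(t)}$ the two outcomes are independent with mean $\tr(O \rho^{(t)})$, so $\mathbb{E}[m^{(t,1)}m^{(t,2)} \mid \rho^{(t)}] = \tr(O\rho^{(t)})^2$.

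\textbf{Estimator and its expectation.} For a query $X$ with support $S = \supp(X)$, I would form
\[ \widehat W_X = \frac{1}{M}\sum_{t=1}^M m^{(t,1)} m^{(t,2)} \prod_{i \in S} h\bigl(B_i^{(t)}, s_i^{(t)}, X_i\bigr) \]
for a single-qubit reweighting $h$ (with $h \equiv 1$ outside $S$). Expanding the square $\tr(O \rho)^2 = \sum_{Q_1, Q_2} c_{Q_1} c_{Q_2} \tr(Q_1 \rho) \tr(Q_2 \rho)$ and using $\tr(Q\rho) = \prod_{i \in \supp(Q)} s_i \mathbf{1}[Q_i = B_i]$ factorizes the expectation by qubit. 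On each $j \in \bar S$, a Pauli 3-design identity gives the qubitwise expectation $\mathbf{1}[Q_{1,j}=Q_{2,j}=\id] + \tfrac{1}{3}\mathbf{1}[Q_{1,j}=Q_{2,j}\neq \id]$, collapsing $Q_1 = Q_2 =: Q$ and contributing the factor $3^{-|\supp(Q)\cap \bar S|}$. On each $i \in S$, I would choose $h$ so that the qubitwise expectation instead picks out $\mathbf{1}[Q_{1,i} = Q_{2,i} = X_i]$ exactly (not merely the weaker $\mathbf{1}[Q_{1,i} = Q_{2,i} \in \{\id, X_i\}]$ that naturally falls out of a single-indicator reweight). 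Combined with the $\bar S$ factor and rescaling the output, this yields $\mathbb{E}[\widehat W_X] = W_X$.

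\textbf{Concentration, data structure, and main obstacle.} Since $|h| = O(1)$ and $|S| \leq \locality = O(1)$, each summand of $\widehat W_X$ is bounded by a constant, and $W_X \leq \tfrac{1}{\dims}\tr(O^2) \leq 1$. Hoeffding together with a union bound over the $|\locals_\locality| \leq (4\qubits)^\locality$ candidate $X$'s shows that $M = \Theta(\locality \log \qubits)$ pair experiments suffice to achieve $|\widehat W_X - W_X| \leq 0.01$ simultaneously for all $X$ with probability $\geq 0.99$. The stored tuples themselves \emph{are} the data structure: a query is answered by iterating over the $M = O(\log \qubits)$ tuples and, for each tuple, reading only the $|S| \leq \locality$ entries in $\supp(X)$ to compute $h$, for $O(\log \qubits)$ query time. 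The main obstacle I anticipate is pinning down the exact $h$ that produces the $1/6^{|\supp(Q)|}$ normalization: the natural choice $h = 3\mathbf{1}[B = X_i]$ yields a ``compatibility'' sum over $Q$ with $Q_i \in \{\id, X_i\}$ on each $i \in S$, and extracting the stricter ``extension'' requirement $Q|_S = X|_S$ calls for an inclusion-exclusion / M\"{o}bius-style refinement across the product $\prod_{i \in S}$. Once the reweighting is fixed, the remaining concentration and bookkeeping are routine.
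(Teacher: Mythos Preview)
Your paired-copy estimator cannot hit the target, and the M\"obius fix you propose does not close the gap. Fix $i\in S=\supp(X)$. Because both copies see the same $(B_i,s_i)$, the per-qubit contribution of a diagonal pair $(Q_i,Q_i)$ is $f(\id):=\E[h]$ when $Q_i=\id$ and $f(P):=\E[h\,\iver{B_i=P}]$ when $Q_i=P\neq\id$ (diagonal pairs carry even $s_i$-parity, so any $s_i$-dependence of $h$ is irrelevant here; moreover you must take $h$ even in $s_i$ to kill the cross terms $c_{Q_1}c_{Q_2}$ coming from $(\id,P)$ pairs). These values obey $f(\id)=f(\sigma_x)+f(\sigma_y)+f(\sigma_z)$ for \emph{every} such $h$, simply because $\sum_P\iver{B_i=P}=1$. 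The factor you need on $i\in S$ is $\iver{Q_i=X_i}$, i.e.\ $f(\id)=0$ with $f(X_i)>0$ and $f(Y)=f(Z)=0$, which violates the identity. The constraint is qubitwise and hence propagates to tensor products: every linear combination of product reweightings lies in $\bigotimes_{i\in S}\ker\phi_i$ where $\phi_i(f)=f(\id)-\sum_{P\neq\id}f(P)$, and $\prod_{i\in S}\iver{Q_i=X_i}$ is not in that subspace. So no inclusion--exclusion over subsets of $S$ can turn your ``compatibility'' sum (over $Q_i\in\{\id,X_i\}$) into the required ``extension'' sum.

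The paper escapes this constraint with one additional idea: a uniformly random partition $T\subseteq[\qubits]$. It fixes a Pauli eigenstate on $T$ (shared ``outer'' randomness $A_T,v_T$), draws several \emph{independent} inner samples on $\overline T$, averages them, and only then squares. The inner averaging collapses $\overline T$ to the operator $X_{\overline T}$, so for $i\in\supp(X)\subseteq\overline T$ the per-qubit factor is exactly $\iver{Q_i=X_i}$; for $i\notin\supp(X)$, further averaging over whether $i\in T$ yields $\iver{Q_i=\id}+\tfrac{1}{6}\iver{Q_i\neq\id}$, producing the $6^{\abs{\supp(Q)}}$. The two ``copies'' are \emph{not} identically prepared on $\overline T$, which is precisely what frees the construction from your identity. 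If you wish to keep the unbiased two-outcome product, you can graft the partition onto your scheme: draw a random $T$, let the two copies share $(A_T,v_T)$ but use independent $(A_{\overline T},v_{\overline T})$, and take the product of the two POVM outcomes with the paper's reweighting. The missing ingredient is the random partition, not a cleverer $h$.
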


This algorithm parallelizes the subroutine for answering GL queries, allowing the estimation of weights on multiple subsets simultaneously via a carefully chosen set of measurements with coupled randomness.
Returning to our simplified setting, to estimate the expression in \eqref{eq:gl-expec}, we need to perform nested sampling: sample several copies of $b_{\overline{S}}$, and then for every copy, sample several completions $b_S$.
Our algorithm produces a ``dataset'' of $b$'s such that, for every $S$, we can find a subset of the data which takes the above form.
In essence, we can do this by guessing a choice of $S$, and then when we are later given a GL query, restricting to the part of the dataset where we guessed correctly.
The probability of guessing correctly is exponentially small in the size of the support of the queried Pauli, so the estimate is good when this is small.
Though there is no requirement that $O$ is low-degree in \cref{prop:gl-general-intro}, this is implicitly enforced by the $6^{|\supp(Q)|}$ in the denominator.

With this, we can now explain our algorithm for efficient structure learning. 
To estimate the interaction coefficients of $\wh{H}$, we estimate the Pauli coefficients of the observable $O = Z^\dagger P Z \approx P + [-\ii \wh{H}t, P]$.
Our guarantees on the Hamiltonian imply that $O$ is degree $\locality$ and has $\bigO{1}$ non-zero coefficients.
We use GL queries to iteratively learn the non-zero interaction terms.
Specifically, we first query the Paulis of support size $1$ to learn which have non-zero weight (its weight meaning the mass of coefficients corresponding to Paulis which contain it).
Then, for each size $1$ Pauli, we query all possible size $2$ Paulis which contain it to estimate its weight.
Whenever the weight of a Pauli is smaller than some constant, we can delete it.
This way, we can ensure there are only ever $\bigO{1}$ sets remaining at each level and thus we get an algorithm runs in time $\bigOt{\qubits}$.

Finally, there is one additional modification we need to make to the algorithm.  Recall that we need to let $P$ range over all possible $P \in \locals_1$ to estimate all of the coefficients of $\wh{H}$.  In our full algorithm we need to parallelize this last step to only use $\bigO{\log \qubits}$ experiments.  Compared to the GL query algorithm, this parallelization step is straightforward, following from an analysis like those used for classical shadows.

\subsection{Related work} \label{sec:related}

\begin{figure}
    \begin{center}\begin{tabular}{r|l l l c}
        & $\tet$ & $\tres$ & $\# \textup{exp}$ & SL? \\
        \hline
        derivative estimation~\cite{zylb21} & $1/\eps^3$ & $\eps$ & $1/\eps^4$ & {\footnotesize\Checkmark} \\
        derivative estimation~\cite{caro24} & $\qubits^4/\eps^4$ & $1/\qubits$ & $1/\eps^4$ & {\footnotesize\Checkmark} \\
        unitary transformation~\cite{oksm24} & $\qubits^\locality/\eps$ & $\eps/\qubits^2$ & $\qubits^\locality\log^2(1/\eps)$ & {\footnotesize\Checkmark} \\
        cluster expansion~\cite{hkt24} & $1/\eps^2$ & $1$ & $1/\eps^2$ \\
        reshaping~\cite{htfs23} & $1/\eps$ & $\sqrt{\eps}$ & $\polylog(1/\eps)$ \\
        \textbf{Our work} & $1/\eps$ & $1$ & $\log(1/\eps)$ & {\footnotesize\Checkmark}
    \end{tabular}\end{center}
    \caption{
        A comparison of the quantum resources required by Hamiltonian learning algorithms: its total evolution time, time resolution, number of experiments used, and whether it can perform structure learning, respectively.
        For this comparison, we suppose we have a constant-local $\qubits$-qubit Hamiltonian on a constant-dimensional lattice, and we ignore $\log(\qubits)$ and $\log\log$ dependences.
    } \label{fig:prior}
\end{figure}

\paragraph{Hamiltonian learning.}
We now discuss the literature on Hamiltonian learning from real-time evolution.
This work has roots in the physics literature on quantum-enhanced sensing and quantum metrology, where the primary goal is to devise experimentally feasible Heisenberg-limited protocols for specific, simple classes of Hamiltonians~\cite{glm04,ramsey50,bb05,lkd02}.
We will focus on algorithms for general, many-body local Hamiltonians, paying attention to their total evolution time, $\tet$, and time resolution, $\tres$.
Such algorithms generally assume a greater degree of quantum control.

Early protocols for this task~\cite{smlkr11,slp11,hbcp15,BAL19,zylb21} use time derivative estimation, which does not require locality knowledge but has poor dependence on error, both in time resolution and evolution time.
Many of these works lack fully rigorous analyses, but rigorous versions are straightforward to prove with the randomized measurement analyses in e.g.\ \cite{hkt24}: \cref{lem:first-order-approx} with $t \gets \bigO{\eps}$ and \cref{lem:shadows} with $\eps \gets \eps^2$ together imply that learning the coefficients of $H$ to $\eps$ error can be done with $\bigO{\log(\qubits)/\eps^4}$ evolutions of $e^{-\ii H\eps}$.
This approach also gives structure learning, since the output of the algorithm can be used to estimate $\frac{1}{\dims}\tr(H X)$ for every $X \in \locals_{\locality}$ with $\bigO{\abs{\locals_\locality}}$ classical overhead.
With appropriate adjustments, these works can also handle non-local Hamiltonians.
These approaches can perform learning in general settings, but do not achieve optimal performance in evolution time, time resolution, and classical overhead.

There are recent works which give algorithms for these general forms of learning, but which still fall short in these figures of merit.
Caro~\cite{caro24} uses polynomial interpolation to change the length of the time evolutions used from $\eps$ to $1/\qubits$, which is an improvement for small $\eps$.
Odake, Kristjánsson, Soeda, and Murao~\cite{oksm24} give an algorithm for Hamiltonian learning based on quantum transformation which, notably, attains Heisenberg scaling with only very mild assumptions on the Hamiltonian.
However, their result is best-suited for learning only one parameter of the Hamiltonian, and they must pay a factor of $n^\locality$ for structure learning to run their circuit for every parameter to be estimated.
Others have investigated heuristic approaches to structure learning~\cite{fgw+22,gfk+21}.

A subsequent line of work gives improvements when the structure is known, with the work of Haah, Kothari, and Tang~\cite{hkt24} achieving $\tet = \bigO{1/\eps^2}$ and $\tres = \bigOmega{1}$, and the work of Huang, Tong, Fang, and Su~\cite{htfs23} achieving Heisenberg scaling, $\tet = \bigO{1/\eps}$, and $\tres = \bigOmega{\sqrt{\eps}}$.
These algorithms do not support structure learning.

Algorithms with Heisenberg scaling have also been attained for bosonic~\cite{ltngy23} and fermionic~\cite{mh24,nly23} Hamiltonians, all following the strategy of \cite{htfs23} of reshaping the Hamiltonian to decouple sites of the system, and incurring a polynomial $\eps$ dependence in the time resolution.
There is also work for learning of Lindbladians, the generalization of Hamiltonians to open quantum systems~\cite{smdwb24}.
In this work, the authors use robust polynomial interpolation to give an improvement to time derivative estimation; this still requires knowledge of the interaction structure, though they are able to use Lieb--Robinson bounds to give algorithms for Hamiltonians with power law decay.
\cref{rmk:power-law} gives a more detailed comparison to our work.

There are Hamiltonian learning algorithms which have total evolution time $\bigO{\log(\qubits)/\eps}$ without using reshaping, but these hold under alternative access models~\cite{ku18,dos23}.
Dutkiewicz, O'Brien, Schuster~\cite{dos23} give a simple algorithm with Heisenberg scaling under a different access model of quantum control, and show that some form of control is necessary.
For example, experiments of the form, prepare a state, apply $e^{\ii Ht}$, and measure, are not enough to learn $H$ with Heisenberg scaling.
Their algorithm is closely related to ours, but relies on bootstrapping the Hamiltonian learning algorithm of \cite{hkt24} which requires the Hamiltonian to be low-intersection and the structure to be known.
Further, they assume \emph{continuous quantum control}, where circuits must take the form, prepare a state, apply $e^{-\ii (H + H_0)t}$ for $H_0$ a known Hamiltonian, and measure.
This model is incompatible with the standard ``discrete'' access model, where one can only apply $e^{-\ii Ht}$ for various lengths of time, interleaved with other unitaries.
Yet, our algorithm, by nature of the types of circuit used, can be implemented in both access models, continuous and discrete.

\paragraph{Classical learning problems.}
In classical settings, one can define a joint distribution on a collection of random variables in terms of their interaction structure.
There is by now a rich and well-developed understanding of how to perform structure learning \cite{bms13, Bresler2015, VMLC16, hkm17, km17} as well as statistical \cite{sw12} and computational lower bounds \cite{km17}.
Recent work of Gaitonde and Mossel \cite{gm23} studies the problem of structure learning, not from the Gibbs distribution, but from natural dynamics that converge to it.
Notably, they show some settings where they can obtain better algorithms from learning from dynamics than are possible with i.i.d.\ samples from the Gibbs distribution.
This bears some similarities to our work, in particular since we are able to obtain structure learning from real-time evolution with algorithmic guarantees that actually surpass the natural computational lower bounds when learning from the Gibbs state.
Another conceptually related line of work studies the problem of learning linear dynamical systems from its trajectories \cite{hmr18, sbr19, blmy23a} or mixtures therein \cite{cp22, blmy23b}. 

\paragraph{Estimating the Pauli spectrum.}
There has been previous work on quantum generalizations of learning boolean functions~\cite{mo08,agy20,cny23}.
Most relevant to our work is the setting studied by Montanaro and Osborne~\cite{mo08} and, later, Angrisani~\cite{angrisani23}, where there is some unknown unitary $U$, which we think about in terms of its Pauli decomposition, $U = \sum_{Q \in \locals} c_Q Q$, and our goal is to learn the coefficients $c_Q$ from either black-box access or statistical query access to $U$.
The algorithms in these papers also draw connections to the classical Goldreich--Levin algorithm.
However, this setting differs from ours in a few ways.  The unitary that we can apply is $e^{-\ii Ht}$, which does not have low-degree Pauli spectrum: we need to first leverage \eqref{eq:intro-key} to get such a spectrum.  It is also not clear how to ``parallelize" the algorithms in prior work to only require $O(\log n)$ experiments.
Our algorithm, while tailored to Hamiltonian learning, offers several additional advantages in that it has an improved dependence on error and doesn't require entangled applications of the unitary.

\subsection{Discussion}

In this work, we introduce a framework for Hamiltonian learning with Heisenberg scaling, where our key departure from previous works is our use of \emph{term cancellation} to reshape the Hamiltonian instead of \emph{dynamical decoupling}.
Such a modification was suggested in the past~\cite{hkot23,dos23}, but we showed that this can be instantiated, and has significantly more power than was initially suggested.
The framework we present should have broad generality: the only requirement is that some version of our Trotterization lemma (\cref{lem:main}) holds, along with a protocol for Hamiltonian learning to constant error.
Neither statement requires strong locality constraints.
This flexibility makes this general strategy appealing, both for use in theoretical learning algorithms beyond qubit Hamiltonians (to fermionic or bosonic Hamiltonians), as well as for practical use.

For example, this result can perform Hamiltonian learning for the ``sparse, non-local'' settings typically considered in Hamiltonian simulation, where $H = \sum \alpha_a E_a$ for $E_a \in \locals$ not necessarily small-support, and $\sum \abs{\alpha_a}$ is bounded.
To learn this to constant error, one can use the approximation $e^{-\ii Ht} \approx \id - \ii Ht + \bigO{(t\sum \abs{\alpha_a})^2}$ (e.g.\ \cref{lem:first-order-approx}) for the Trotterization bound, and then use shadow tomography~\cite{aaronson20,bo24} to estimate expressions of the form $\tr(Pe^{-\ii Ht} Q e^{\ii Ht})$ which approximate the coefficients.
We leave determining the limits of this approach for non-local Hamiltonians to future work.

We now discuss some interesting future directions.

\begin{enumerate}
    \item Can one prove lower bounds on Hamiltonian learning?
    There is a naive lower bound of $\frac{1}{\eps}$ for estimating one parameter, and \cite{htfs23} gives an improved lower bound of $\frac{1}{\eps}\log\frac{1}{\delta}$ when the algorithm must be robust to SPAM errors.
    It is not known how the complexity of learning all parameters scales with the underlying locality or even the system size.
    Is a dependence on effective sparsity $\sparse$ necessary?
    What is the optimal dependence?
    \item Is it possible to achieve $\tet = \bigO{1/\eps}$ and $\tres = \bigOmega{1}$ for the task of learning a \emph{single} coefficient, with no dependence on system size?
    Current ``dynamical decoupling'' strategies get the time evolution but have $\tres \lesssim \sqrt{\eps}$, and our ``term cancellation'' strategies get the time resolution, but learn all parameters at once, and so have $\tet = \bigO{\log(\qubits)/\eps}$.
    It may be that the algorithm presented here still works, in the sense that with $\tet = \bigO{1/\eps}$ every estimated parameter will be accurate with constant probability, but the analysis would need to be modified to account for errors in learning parameters.
    \item Is efficient Hamiltonian learning possible with arbitrarily large time resolution?
\end{enumerate}

\newpage

\section{Background} \label{sec:prelim}

Throughout, $\log$ denotes the natural logarithm, $\ii = \sqrt{-1}$, and $[k] = \{1, 2, \dots, k\}$.
$\bigO{\cdot}$, $\bigTheta{\cdot}$, and $\bigOmega{\cdot}$ are big O notation, and we use the notation $f \lesssim g$ to mean $f = \bigO{g}$, and analogously for $\gtrsim$ and $\eqsim$.
The notation $\bigOt{f}$ denotes $\bigO{f\polylog(f)}$.
Everywhere, the binary operation $\cdot$ denotes the usual multiplication.

We use the Iverson bracket: $\iver{P} = 1$ if the proposition $P$ is true, and 0 otherwise.
The complement of a set $S \subset [n]$, $[n]\setminus S$, is denoted $\overline{S}$, and for a vector $v \in \mathbb{C}^n$, $v^S = \prod_{i \in S} v_i$.

\subsection{Linear algebra}

We work in the Hilbert space $\mathbb{C}^{\dims}$ corresponding to a system of $\qubits$ qubits, $\mathbb{C}^2 \otimes \dots \otimes \mathbb{C}^2$, so that $\dims = 2^\qubits$.
For a matrix $A$, we use $A^\dagger$ to denote its conjugate transpose and $\norm{A}$ to denote its operator norm; for a vector $v$, we use $\norm{v}$ to denote its Euclidean norm.
We will work with this Hilbert space, often considering it in the basis of (tensor products of) Pauli matrices.

\begin{definition}[Pauli matrices] \label{def:paulis}
    The Pauli matrices are the following $2 \times 2$ Hermitian matrices.
    \begin{equation*}
    \sigma_\id = \begin{pmatrix}
        1 & 0 \\ 0 & 1
    \end{pmatrix}, \qquad \sigma_x = \begin{pmatrix}
        0 & 1 \\
        1 & 0
    \end{pmatrix}, \qquad \sigma_y = \begin{pmatrix}
        0 & -\ii \\
        \ii & 0
    \end{pmatrix}, \qquad \sigma_z = \begin{pmatrix}
        1 & 0\\
        0& -1
    \end{pmatrix}.
    \end{equation*}
    These matrices are unitary and (consequently) involutory.
    Further, $\sigma_x \sigma_y = \ii \sigma_z$, $\sigma_y \sigma_z = \ii \sigma_x$, and $\sigma_z \sigma_x = \ii \sigma_y$, so the product of Pauli matrices is a Pauli matrix, possibly up to a factor of $\{\ii, -1, -\ii\}$.
    The non-identity Pauli matrices are traceless.
    We also consider tensor products of Pauli matrices, $P_1 \otimes \dots \otimes P_\qubits$ where $P_i \in \{\sigma_\id, \sigma_{x}, \sigma_{y}, \sigma_{z}\}$ for all $i \in [\qubits]$.
    The set of such products of Pauli matrices, which we denote $\locals$, form an orthogonal basis for the vector space of $2^\qubits \times 2^\qubits$ (complex) Hermitian matrices under the trace inner product.
    The product of two elements of $\locals$ is an element of $\locals$, possibly up to a factor of $\{\ii, -1, -\ii\}$.
\end{definition}

We define the following shorthand to be used in \cref{sec:shadow}.
\begin{definition}[Subset order on Pauli matrices] \label{def:pauli-order}
    For $P, Q \in \locals$, we say that $P \subseteq Q$ if $P_i \in \{\id, Q_i\}$ for every $i \in [\qubits]$.
\end{definition}

We now define the support of an operator.

\begin{definition}[Support of an operator]
    For an operator $P \in \mathbb{C}^{\dims \times \dims}$ on a system of $\qubits$ qubits, its \emph{support}, $\supp(P) \subset [\qubits]$ is the subset of qubits that $P$ acts non-trivially on.
    That is, $\supp(P)$ is the minimal set of qubits such that $P$ can be written as $P = O_{\supp(P)} \otimes \id_{[n] \setminus \supp(P)}$ for some operator $O$.
\end{definition}

So, for example, the support of a tensor product of Paulis, $P_1 \otimes \dots \otimes P_\qubits$ are the set of $i \in [\qubits]$ such that $P_i \neq \sigma_\id$.

\begin{definition}[Local Pauli operator]
    The set of Pauli matrices $P \in \locals$ such that $\abs{\supp(P)} \leq k$ is denoted $\locals_k$.
\end{definition}

A central object we consider is (nested) commutators of operators.

\begin{definition}[Commutator]\label{def:commutator}
    Given operators $A , B \in \mathbb{C}^{\dims \times \dims}$, the \emph{commutator} of $A$ and $B$ is defined as $[A,B] = AB - BA$.
    The \emph{nested commutator} of order $\ell$ is defined recursively as $[A, B]_k = [A, [A, B]_{k-1}]$, with $[A, B]_1 = [A, B]$.
\end{definition}

Pauli matrices behave straightforwardly under commutation: the commutator of two Pauli matrices is another Pauli matrix up to a scalar.

\subsection{Hamiltonians of interacting systems}
\label{subsec:hamiltonian-of-interacting-system}
We begin by defining a Hamiltonian, which encodes the interaction forces between quantum particles in a physical system.

\begin{definition}[Hamiltonian] \label{def:hamiltonian}
    A \emph{Hamiltonian} is an operator $H \in \mathbb{C}^{\dims \times \dims}$ that we consider as a linear combination of local \emph{terms} $E_a \in \locals$ with associated \emph{coefficients} $\lambda_a$, $H = H(\lambda) = \sum_{a=1}^\terms \lambda_a E_a$.
    We assume that the $E_a$'s are distinct and non-identity.
    Throughout, we assume that $\qubits = \bigO{\terms}$.\footnote{
        This can be assumed without loss by adding ``dummy'' single-qubit terms.
    }
    This Hamiltonian is \emph{$\locality$-local} if every term $E_a$ satsifies $\abs{\supp(E_a)} \leq \locality$.
\end{definition}

As is standard, we assume that the terms are distinct, non-identity Paulis, as this ensures that a Hamiltonian evolution $e^{-\ii H(\lambda) t}$ is uniquely specified by its coefficients for small $t$.
General $\locality$-local Hamiltonians $H = \sum_{S \subset [\qubits]} \lambda_S h_S$ can be written in the above form by expanding terms $h_S$ into the basis of products of Paulis, inflating the number of terms by at most $4^\locality$.
Our algorithms depend on a notion of ``local norm'', defined as follows.

\begin{definition}[Local norm of a Hamiltonian] \label{def:local-norm}
Let $H = H(\lambda) = \sum_{a=1}^\terms \lambda_a E_a$ be a Hamiltonian.
We define $\lonorm{\lambda}$ and $\ltnorm{\lambda}$ as
\begin{align*}
    \lonorm{\lambda} &= \max_{i \in [\qubits]} \sum_{\substack{a \in [\terms] \\ \supp(E_a) \ni i}} \abs{\lambda_a},
    & \ltnorm{\lambda} &= \max_{i \in [\qubits]} \parens[\Big]{\sum_{\substack{a \in [\terms] \\ \supp(E_a) \ni i}} \abs{\lambda_a}^2 }^{\frac12}. \\
\intertext{We can also write these norms directly in terms of the Hamiltonian terms:}
    \lonorm{\lambda} &= \max_{i \in [\qubits]} \sum_{\substack{a \in [\terms] \\ \supp(E_a) \ni i}} \norm{\lambda_aE_a},
    & \ltnorm{\lambda} &= \max_{i \in [\qubits]} \frac{1}{\sqrt{\dims}}\fnorm[\Big]{\sum_{\substack{a \in [\terms] \\ \supp(E_a) \ni i}} \lambda_aE_a}.
\end{align*}
Notice that $\ltnorm{\lambda} \leq \lonorm{\lambda}$.
We will sometimes abuse notation and write $\lonorm{H} = \lonorm{\lambda}$ and $\ltnorm{H} = \ltnorm{\lambda}$, though this will cause no ambiguity because $H$ and $\lambda$ are in one-to-one correspondence.
\end{definition}

The quantity $\lonorm{H}$ is commonly referred to as ``one-spin energy'', represented as $g$ or $J$ in the literature~\cite{akl16,alhambra22}.
Throughout, we assume we have a Hamiltonian where $\lonorm{\lambda}$ is bounded by, say, a constant.
This encompasses a wide range of Hamiltonians, including:
\begin{itemize}
    \item geometrically local Hamiltonians, where qubits are arranged on a lattice of constant dimension like $\mathbb{Z}^3$ and terms must be spatially local with respect to the lattice;
    \item geometrically local Hamiltonians with power law decay (see \cref{def:power-law-Ham}), where a term $E_a$ need not be spatially local but the associated coefficient $\lambda_a$ decays with the diameter of $\supp(E_a)$;
    \item and low-intersection Hamiltonians (see \cref{def:low-insersection-ham}), where the number of terms that intersect with any given term is bounded by a constant.
\end{itemize}

Our algorithm will also depend on a kind of ``local 0-norm'', which we call \emph{effective sparsity}.
\begin{definition}[Effective sparsity of a Hamiltonian] \label{def:eff-sparse}
For a Hamiltonian $H = \sum_{a = 1}^m \lambda_a E_a$, we define $H^{\leq \eps}$ to be obtained from $H$ by replacing each $\lambda_a$ with
\[
\lambda_a^{\leq \eps} = \begin{cases}
-\eps &\text{ if } \lambda_a \leq -\eps \\
\lambda_a &\text{ if } -\eps \leq \lambda_a \leq \eps \\
\eps &\text{ if } \lambda_a \geq \eps 
\end{cases}
\]
In other words, we clip all of the coefficients with magnitude larger than $\eps$.
Then the \emph{effective sparsity} of $H$ is the parameter
\begin{align*}
\sparse_\eps = \max(1 , \ltnorm{H^{\leq \eps}}^2/\eps^2) = \max\parens[\Big]{1, \sum_{a=1}^\terms \min(1, \lambda_a^2/\eps^2) } \,.
\end{align*}
\end{definition}
To see why this is a proxy for sparsity, note that when $H$ has at most $k$ terms interacting with any given site, then $\sparse_\eps \leq k$ for all $\eps$.
Further, for a Hamiltonian $H(\lambda)$ with effective sparsity $\sparse_\eps$, there must be some $\lambda'$ with at most $\sparse_\eps \qubits$ non-zero entries such that $\infnorm{\lambda' - \lambda} \leq \eps$.
More generally, $\sparse_\eps$ is non-decreasing as $\eps \to 0$, converging to the true local sparsity of $H$, and
\begin{align}
    \sparse_{\eps/C} \leq C^2 \sparse_\eps
\end{align}
for $C > 1$.

\section{Bounds on constant-resolution Trotter formulas}

In this section, we obtain the following bound on constant resolution Trotter formulas:

\begin{restatable}[Constant-time Trotterization bound]{lemma}{bch} \label{lem:main}
Let $H = \sum_{a = 1}^\terms \lambda_a E_a$ and $H_0 = \sum_{a = 1}^\terms \lambda_a^{(0)} E_a$ be $\locality$-local Hamiltonians.  Assume that $\lonorm{H}, \lonorm{H_0} \leq \degree$ and $\ltnorm{H - H_0} \leq \eta$.  Let $P \in \locals_K$.  Let $t$ be a parameter such that $t < 1/(\degree(K + \locality)^{C})$ for some sufficiently large constant $C$.  Then for any $s \in \N$ with $s\abs{t} \leq 1/\eta$, we have
\begin{align*}
    (e^{-\ii H_0t} e^{\ii Ht})^s P (e^{-\ii Ht}e^{\ii H_0t})^s &= P + \ii st[H - H_0, P] + E,
    \\
    &\qquad \text{with }\fnorm{E} \leq (\eta s t^2 \degree  + \eta^2 s^2 t^2) (K + \locality)^{C\locality}\norm{P}_F
\end{align*}
\end{restatable}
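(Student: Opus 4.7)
My strategy is to compare the Trotterized evolution $Y = e^{-\ii H_0 t}e^{\ii H t}$ (so that $Y^{-1} = e^{-\ii H t}e^{\ii H_0 t}$) against the exact evolution $W = e^{\ii (H - H_0)t}$, by writing
\begin{align*}
Y^s P Y^{-s} - P - \ii s t [H - H_0, P]
&= \underbrace{\bigl(Y^s P Y^{-s} - W^s P W^{-s}\bigr)}_{\textstyle\text{(I) Trotter error}} \\
&\quad + \underbrace{\bigl(W^s P W^{-s} - P - \ii s t [H - H_0, P]\bigr)}_{\textstyle\text{(II) Taylor error}}.
\end{align*}
Term (II) is responsible for the $\eta^2 s^2 t^2 (K + \locality)^{C\locality}\fnorm{P}$ contribution, and term (I) for the $\eta s t^2 \degree (K+\locality)^{C\locality}\fnorm{P}$ contribution.

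For (II), I would apply the Hadamard formula
\[
W^s P W^{-s} = \sum_{k \geq 0} \frac{(\ii s t)^k}{k!} [H - H_0, P]_k
\]
and peel off the $k=0,1$ terms. Granted an $L^2$ nested commutator bound of the form $\fnorm{[H-H_0,P]_k} \lesssim \eta^k (K + k\locality)^{Ck\locality}\fnorm{P}$ (discussed in the last paragraph), the $k \geq 2$ tail converges geometrically under the hypotheses $st \leq 1/\eta$ and $t < 1/(\degree(K+\locality)^C)$, and is dominated by its leading $k=2$ term, giving the required bound.

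For (I), I would telescope
\[
Y^s P Y^{-s} - W^s P W^{-s} = \sum_{j=0}^{s-1} Y^{s-j-1}\bigl(Y N_j Y^{-1} - W N_j W^{-1}\bigr) Y^{-(s-j-1)},
\]
where $N_j = W^j P W^{-j}$ satisfies $\fnorm{N_j} = \fnorm{P}$ since $W$ is unitary. It then suffices to bound each per-step discrepancy $\fnorm{Y N_j Y^{-1} - W N_j W^{-1}}$. Factoring $Y = W(W^{-1}Y)$ and applying Baker--Campbell--Hausdorff, the first-order terms in the exponents of $e^{-\ii(H-H_0)t}$, $e^{-\ii H_0 t}$, and $e^{\ii H t}$ cancel exactly, leaving
\[
W^{-1} Y = \exp\bigl(R\bigr), \qquad R = -\tfrac{t^2}{2}[H - H_0, H_0] + \bigO{t^3},
\]
where the residual carries one factor of $H - H_0$ and one factor of $H_0$ (using $[H_0,H_0]=0$). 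Consequently $Y N_j Y^{-1} - W N_j W^{-1} = W\bigl([R, N_j] + \bigO{R^2}\bigr)W^{-1}$, and an $L^2$ bound on $\fnorm{[[H - H_0, H_0], N_j]}$ gives per-step error $\bigO{\eta \degree t^2 (K+\locality)^{C\locality}\fnorm{P}}$; summing $s$ steps yields the desired contribution.

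The main obstacle is establishing the underlying $L^2$ nested commutator bound: for $X_1,\dots,X_k \in \{H, H_0, H - H_0\}$,
\[
\fnorm{[X_k, [X_{k-1}, \cdots, [X_1, P] \cdots]]} \lesssim (K + k\locality)^{Ck\locality}\, \alpha_1 \cdots \alpha_k\, \fnorm{P},
\]
where $\alpha_i = \degree$ when $X_i \in \{H, H_0\}$ and $\alpha_i = \eta$ when $X_i = H - H_0$. This is a Frobenius-norm refinement of the operator-norm commutator expansions standard in the Trotter-error and Lieb--Robinson literature: crucially, factors of $H - H_0$ must contribute only $\ltnorm{H - H_0} \leq \eta$, rather than the potentially much larger $\lonorm{H - H_0}$. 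The induction proceeds in the Pauli basis: each term $E_a$ is a Pauli, and $[E_a, Q] \in \{0, \pm 2 E_a Q\}$ is itself a Pauli (up to sign) when $Q$ is, so every nested commutator decomposes into an orthogonal sum of distinct Paulis of support at most $K + k\locality$. At each step, a Cauchy--Schwarz argument against the local $\ell^2$ structure encoded in $\ltnorm{\cdot}$ (summed over indices $a$ with $E_a$ meeting the current support) controls the new sum of squared coefficients, and a locality count contributes an $(K + k\locality)^{\bigO{\locality}}$ combinatorial factor per level. A secondary subtlety in paragraph 3 is that $N_j$ is not strictly local; this is handled by expanding $N_j$ itself via the Hadamard formula in powers of $[H - H_0, \cdot]$ and applying the inductive bound term by term, with convergence ensured by the hypothesis $t < 1/(\degree(K+\locality)^C)$.
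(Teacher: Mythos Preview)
Your decomposition into terms (I) and (II) is natural, and (II) is fine (via \cref{lem:first-order-approx}, one only needs the second nested commutator, so the series-convergence concern there is moot). The gap is in (I): your ``secondary subtlety'' about $N_j = W^j P W^{-j}$ being non-local is in fact the primary obstacle, and the fix you propose does not close it. If you Hadamard-expand $N_j = \sum_{k \geq 0} \frac{(\ii jt)^k}{k!}[H-H_0, P]_k$ and then apply $[R, \cdot]$ termwise, you face a dilemma. Using the $\ltnorm{\cdot}$ bound at each of the $k$ layers extracts $\eta^k$ but costs a combinatorial factor of order $(K+k\locality)^{\Theta(k\locality)}$ (this is what iterating \cref{lem:local-norm-bound} actually gives), and the tail $\sum_k (jt\eta)^k (K+k\locality)^{\Theta(k\locality)}/k!$ diverges once $\locality \geq 1$, since $(K+k\locality)^{\Theta(k\locality)}/k!$ grows without bound. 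Switching to the $\lonorm{\cdot}$ bound (\cref{lem:local-norm-bound1}) tames the combinatorics but replaces $\eta$ by $\lonorm{H-H_0} \leq 2\degree$ at each layer; now the series carries $(jt\degree)^k$, and since $jt$ can be as large as $1/\eta$ while no relation between $\degree$ and $\eta$ is assumed, $jt\degree$ is unbounded and the series again diverges. The hypothesis $t < 1/(\degree(K+\locality)^C)$ controls $t\degree$, not $jt\degree$. A related issue: writing $W^{-1}Y = e^R$ via BCH is formal here, since $\|Ht\|$ scales with system size and the BCH series need not converge in operator norm.

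The paper circumvents both problems by a different induction on $s$: instead of carrying the exact evolution $N_j$, it carries the \emph{first-order} approximant $X_s = P + \ii st[H-H_0, P]$, which stays $(K+\locality)$-local for all $s$. The one-step increment then splits as in your (I)/(II) but with $X_s$ in place of $N_j$; because $X_s$ has fixed locality, the per-step Trotter error can be expressed directly (via \cref{one-peel}, a Hadamard-only identity that avoids BCH) as nested commutators containing exactly one copy of $[H,H_0] = [H-H_0, H_0]$, to which one applies \cref{lem:local-norm-bound} \emph{once} (paying $\eta$) and \cref{lem:local-norm-bound1} for the remaining layers (paying $\degree$ with only linear combinatorial growth). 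This is what keeps the final constant at $(K+\locality)^{C\locality}$ rather than something blowing up with $s$ or with the depth of the expansion.
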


\begin{remark}[Dependence on locality] \label{rem:locality}
We did not optimize the dependence on $\locality$ and instead tried to prove a statement that holds with the weakest possible assumptions on $H,H_0$.  In particular, \cref{lem:main} only assumes that $\ltnorm{H - H_0} \leq \eta $ but if we instead assume that $\lonorm{H - H_0} \leq \eta $ then we can improve the dependence to $\exp(\locality)$.
This is because we can just apply \cref{lem:local-norm-bound1} instead of \cref{lem:local-norm-bound} (which is the only source of the $\locality^\locality$ term).
\end{remark}

\subsection{Bounds on nested commutators}

To prove this bound, we expand the expression for real-time evolution into nested commutators.
We use the following consequence of the Baker--Campbell--Hausdorff formula.
\begin{fact}[Hadamard formula] \label{fact:hadamard}
    For $\dims \times \dims$ matrices $X$ and $Y$,
    \begin{align*}
        e^{\ii X} Y e^{-\ii X} = \sum_{k = 0}^\infty \frac{1}{k!}[\ii X, Y]_k
    \end{align*}
\end{fact}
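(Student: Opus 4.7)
The plan is to establish the Hadamard formula by differentiating an auxiliary matrix-valued function and invoking its Taylor expansion at the origin. Concretely, define $f \colon \mathbb{R} \to \mathbb{C}^{\dims \times \dims}$ by $f(t) \deq e^{\ii t X} Y e^{-\ii t X}$, so that $f(0) = Y$ and $f(1)$ is the left-hand side of the identity. My strategy is to show $f^{(k)}(0) = [\ii X, Y]_k$, extending the notation in \cref{def:commutator} with the natural convention $[\ii X, Y]_0 = Y$, and then to equate $f(1)$ with its Maclaurin series at $t = 1$.

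For the differentiation step, the product rule together with $\frac{d}{dt} e^{\ii t X} = \ii X \, e^{\ii t X} = e^{\ii t X} \, \ii X$ gives
\[
    f'(t) = (\ii X)\, e^{\ii t X} Y e^{-\ii t X} - e^{\ii t X} Y e^{-\ii t X} (\ii X) = [\ii X, f(t)].
\]
By induction on $k$, I would then show $f^{(k)}(t) = [\ii X, f(t)]_k$: the base case is the display above, and the inductive step follows by differentiating both sides of the level-$k$ identity, using linearity of the commutator in its second argument together with the base case $f' = [\ii X, f]$, which lets the extra application of $\ii X$ be absorbed into a further layer of nested commutation. Evaluating at $t = 0$ with $f(0) = Y$ yields $f^{(k)}(0) = [\ii X, Y]_k$, as desired.

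For the convergence/summation step, I would observe that $f$ is an entire analytic function of $t$, since each entry of $e^{\pm \ii t X}$ is entire in $t$, so its Maclaurin series converges to $f(t)$ on all of $\mathbb{C}$ and in particular at $t = 1$. A self-contained alternative: by induction one has $\fnorm{[\ii X, Y]_k} \leq (2\norm{X})^k \fnorm{Y}$, so $\sum_{k \geq 0} \frac{1}{k!}[\ii X, Y]_k$ is absolutely convergent in Frobenius norm; this limit, viewed as a function of $t$ via $\sum_{k \geq 0} \frac{t^k}{k!}[\ii X, Y]_k$, satisfies the same initial value problem $g' = [\ii X, g]$, $g(0) = Y$ as $f$, and hence agrees with $f$ at $t = 1$ by uniqueness of matrix ODE solutions. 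Either route delivers the identity upon evaluating at $t = 1$. The only real ``obstacle'' is bookkeeping: one must be consistent about the $k = 0$ term and careful in the inductive differentiation step, since convergence itself is automatic in finite dimensions.
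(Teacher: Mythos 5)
Your proof is correct. The paper states the Hadamard formula as a known fact without giving a proof, so there is no paper argument to compare against; your derivation (differentiate $f(t) = e^{\ii t X} Y e^{-\ii t X}$, establish $f^{(k)}(0) = [\ii X, Y]_k$ by induction, and sum the series via entirety or via the bound $\fnorm{[\ii X, Y]_k} \leq (2\norm{X})^k \fnorm{Y}$ plus ODE uniqueness) is the standard one, and it matches the way the paper itself uses the derivative identity $\partial_t \, e^{\ii H t} X e^{-\ii H t} = e^{\ii H t}[\ii H, X]e^{-\ii H t}$ in the proof of \cref{lem:first-order-approx}.
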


To bound the error in truncating Hadamard formula, one only needs to control the second-order nested commutator.
\begin{lemma}\label{lem:first-order-approx}
For any $H,X$,
\[
    \fnorm{e^{\ii Ht} X e^{-\ii Ht} - (X + [\ii Ht,X]) } \leq \frac{t^2}{2} \fnorm{[H,X]_2} \,. 
\]
\end{lemma}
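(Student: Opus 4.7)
The plan is to realize the left-hand side as the second-order Taylor remainder of the conjugation map $s \mapsto e^{\ii Hs} X e^{-\ii Hs}$ about $s = 0$, and then bound that remainder using unitary invariance of the Frobenius norm.

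Concretely, define the matrix-valued function $f \colon \R \to \C^{\dims \times \dims}$ by $f(s) = e^{\ii H s} X e^{-\ii H s}$. Differentiating under the product rule gives $f'(s) = [\ii H, f(s)]$ and hence $f''(s) = [\ii H, [\ii H, f(s)]] = - e^{\ii H s} [H, [H, X]] e^{-\ii H s} = -e^{\ii H s} [H, X]_2 e^{-\ii H s}$, where in the last step we used that conjugation by $e^{\ii H s}$ commutes with the nested commutator by $H$. In particular $f(0) = X$ and $f'(0) = [\ii H, X]$, so the quantity we want to bound is exactly $f(t) - f(0) - t f'(0)$.

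Applying Taylor's theorem with integral remainder,
\[
    f(t) - f(0) - t f'(0) = \int_0^t (t - s) f''(s)\, ds.
\]
Taking Frobenius norms and using the triangle inequality for the (Bochner) integral, together with the fact that $e^{\ii H s}$ is unitary (so conjugation by it preserves the Frobenius norm), yields
\[
    \fnorm{f(t) - X - [\ii H t, X]} \leq \int_0^t (t - s)\, \fnorm{f''(s)}\, ds = \fnorm{[H, X]_2} \int_0^t (t - s)\, ds = \frac{t^2}{2} \fnorm{[H, X]_2}.
\]

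No step here is really an obstacle; the only point that requires a moment of care is tracking the factor of $\ii^2 = -1$ when computing $f''$, and noting that this sign does not affect the Frobenius norm. The argument does not use any structure of $H$ or $X$ beyond being finite-dimensional matrices, which matches the generality of the lemma statement.
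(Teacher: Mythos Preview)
Your proof is correct and follows essentially the same approach as the paper: both express the remainder as an integral of the conjugated second nested commutator and use unitary invariance of the Frobenius norm to extract the $t^2/2$ factor. The only cosmetic difference is that you invoke Taylor's theorem with integral remainder $\int_0^t (t-s)f''(s)\,ds$, whereas the paper writes the same quantity as an iterated double integral via two applications of the fundamental theorem of calculus.
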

\begin{proof}
By the Hadamard formula, $\partial_t e^{\ii Ht}Xe^{-\ii Ht} = e^{\ii Ht}[\ii H, X]e^{-\ii Ht}$.
Using this and the fundamental theorem of calculus,
\begin{align*}
    e^{\ii Ht}Xe^{-\ii Ht} - X
    &= \int_0^1 \partial_s(e^{\ii Hst}X e^{-\ii Hst}) \diff s \\
    &= \int_0^1 e^{\ii Hst}[\ii Ht, X]e^{-\ii Hst} \diff s \\
    e^{\ii Ht}Xe^{-\ii Ht} - X - [\ii Ht, X]
    &= \int_0^1 \int_0^1 \partial_r (e^{\ii Hrst}[\ii Ht, X]e^{-\ii Hrst}) \diff r \diff s \\
    &= \int_0^1 \int_0^1 e^{\ii Hrst}[\ii Hst, [\ii Ht, X]]e^{-\ii Hrst} \diff r \diff s
\end{align*}
So, taking the norm,
\begin{align*}
    \fnorm{e^{\ii Ht}Xe^{-\ii Ht} - X - [\ii Ht, X]}
    &\leq \int_0^1 \int_0^1 \fnorm{e^{\ii Hrst}[\ii Hst, [\ii Ht, X]]e^{-\ii Hrst}} \diff r \diff s \\
    &= \int_0^1 \int_0^1 s\fnorm{[Ht, [Ht, X]]} \diff r \diff s \\
    &= \tfrac{t^2}{2}\fnorm{[H, [H, X]]}. \qedhere
\end{align*}
\end{proof}

The previous lemma shows that for evolution with respect to a single Hamiltonian, we need no control over higher-order commutators, but this does not suffice for our analysis of \emph{alternating} time evolutions.
The commutators appearing in our error term are of the form
\[
    [H_\ell, [H_{\ell-1} , [ \dots [H_1, P] \dots ]]],
\]
where $H_1, \dots , H_\ell$ are local operators and $P$ has small support.
We give two lemmas showing that we can bound the local norm of $[H, G]$ in terms of the local norms of $H$ and $G$; chaining these lemmas, we get a bound on the nested commutator.

In \cref{lem:local-norm-bound}, we give a bound in terms of $\ltnorm{H}$, and in \cref{lem:local-norm-bound1}, we give one in terms of $\lonorm{H}$.
Though the former bound is too weak to prove an ``local 2-norm'' cluster expansion bound, we only use it once per nested commutator, so we only incur its associated large overhead once.

To prove these bounds, we consider writing the commutator of $H = \sum_a \lambda_a E_a$ and $G = \sum_b \kappa_b F_b$ in the Pauli basis, $[H, G] = \sum_c \xi_c X_c$, in which case bounding the norms amounts to controlling $\abs{\xi_c}^2$.
We introduce some notation for this.

\begin{definition} \label{def:cinv}
    For an $X \in \locals$, denote the set of pairs of local Paulis whose commutator is $X$ (up to a phase) as
    \begin{align*}
        \cinv(X) = \braces[\Big]{(P, Q) \in \locals^2 \,\Big|\, [P, Q] \in \{\pm 2\ii X\}}.
    \end{align*}
\end{definition}

With this, we can write
\begin{align} 
    \xi_c &= \sum_{(E_a, F_b) \in \cinv(X_c)} 2\ii \sigma_{a,b} \lambda_a \kappa_b \qquad \text{for some } \sigma_{a,b} \in \{\pm 1\}, \nonumber \\
    \abs{\xi_c} &\leq \sum_{(E_a, F_b) \in \cinv(X_c)} 2 \abs{\lambda_a \kappa_b}. \label{eq:xi-def}
\end{align}
We could give a uniform bound for every $\abs{\xi_c}^2$ in terms of the local norms of $H$ and $G$, but when summing up over $X_c$'s this approach picks up undesirable factors in the system size $\qubits$.
Instead, we will bound the full sum
\begin{align} \label{eq:xi-bound}
    \sum_c \abs{\xi_c}^2 \leq \sum_c \sum_{\substack{(E_a, F_b) \in \cinv(X_c) \\ (E_{a'}, F_{b'}) \in \cinv(X_c)}} 4\abs{\lambda_a \kappa_b \lambda_{a'} \kappa_{b'}},
\end{align}
and directly treat the quadratic terms in the sum.

\begin{lemma} \label{lem:pauli-patterns}
    Let $\mathcal{S}_X = \cinv(X) \cap (\locals_\locality \times \locals_{\locality'})$ be the set of ways to get $X$ from a commutator of a $\locality$- and $\locality'$-local Pauli.
    For some $(P, Q) \in \mathcal{S}_X$, we say its \emph{pattern} is given by the pair of sets $(\supp(P) \cap \supp(X), \supp(Q) \cap \supp(X))$.
    We can partition $\mathcal{S}_X$ into subsets based on pattern:
    \begin{align*}
        \mathcal{S}_X[S,T] = \braces[\Big]{(P, Q) \in \mathcal{S}_X \,\Big|\, \text{the pattern of } (P, Q) \text{ is } (S, T)}.
    \end{align*}
    Then, we can conclude the following:
    \begin{enumerate}[label=(\alph*)]
        \item For every $X \in \locals$, there are at most $(2(\locality+\locality'))^\locality$ distinct patterns;
        \item For every $(P, Q) \in \locals_\locality \times \locals_{\locality'}$, there are at most $(4(\locality+\locality'))^\locality$ distinct patterns $\mathcal{S}_Y[S,T]$ (across all $Y \in \locals$) that contain both a pair with $P$, $(P, R)$, and a pair with $Q$, $(R', Q)$.
        If $\supp(P)$ and $\supp(Q)$ do not intersect, there are zero such patterns.
    \end{enumerate}
\end{lemma}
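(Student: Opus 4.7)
My plan is to reduce both parts to combinatorial counting arguments on the pair $(S, T)$, using the following structural fact about Pauli commutators: for $A, B \in \locals$, the commutator $[A, B]$ is nonzero iff an odd number of sites have $A_i$ and $B_i$ anticommuting, and at every such ``anticommuting site'' both single-qubit Paulis are non-identity and distinct, so $(AB)_i \neq I$.  Consequently, if $[A, B] = \pm 2i Z$, every anticommuting site lies in $\supp(A) \cap \supp(B) \cap \supp(Z)$.

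For part (a), I would observe that any pattern $(S, T)$ satisfies $S \cup T = \supp(X)$: if $i \in \supp(X)$, then $(PQ)_i \neq I$, forcing at least one of $P_i, Q_i$ to be non-identity.  By the symmetry $[P, Q] = -[Q, P]$ sending patterns $(S, T) \mapsto (T, S)$, I may assume $\locality \leq \locality'$, so $|S| \leq \locality$.  Counting then proceeds by first choosing $S \subseteq \supp(X)$ with $|S| \leq \locality$ and then completing $T$ as $(\supp(X) \setminus S) \cup S'$ for $S' \subseteq S$, subject to $|T| \leq \locality'$.  Using $|\supp(X)| \leq \locality + \locality'$, a careful sum over $|S|$ yields the bound $(2(\locality + \locality'))^\locality$.

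For part (b), I would first apply the structural fact to the $(P, R)$ pair: anticommuting sites lie in $\supp(P) \cap \supp(R) \cap \supp(Y) \subseteq S \cap T$, and nonzero commutator forces $S \cap T \neq \emptyset$.  Since $S \subseteq \supp(P)$ and $T \subseteq \supp(Q)$, $\supp(P) \cap \supp(Q) \neq \emptyset$ is required, giving the second claim immediately.  For the counting bound, I would use three observations: (i) $S \subseteq \supp(P)$ gives at most $2^\locality$ choices.  (ii) The key constraint on $T$ arises from the $(R', Q)$ pair: any $i \in \supp(Q) \setminus \supp(R')$ has $(R'Q)_i = Q_i \neq I$, so $i \in \supp(Y) \cap \supp(Q) = T$; hence $\supp(Q) \setminus T \subseteq \supp(R')$, bounding $|\supp(Q) \setminus T| \leq \locality$ and giving at most $\sum_{j=0}^\locality \binom{|\supp(Q)|}{j} \leq (\locality+\locality')^\locality$ options for $T$.  (iii) Given $(S, T)$, $Y$ is forced to equal $P_i$ on $S \setminus T$ and $Q_i$ on $T \setminus S$, while on $S \cap T$ it must be a non-identity Pauli different from $P_i$ (equivalently from $Q_i$), giving at most two values per site, or $2^{|S \cap T|} \leq 2^\locality$ choices of $Y$ in total.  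Multiplying these bounds yields $(4(\locality+\locality'))^\locality$.

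The main technical hurdle is identifying the tight constraint on $T$ in part (b).  Without the observation that $R'$ is forced to be non-identity on $\supp(Q) \setminus \supp(Y)$ in order to cancel $Q$ there, one would naively allow $T$ to range over all subsets of $\supp(Q)$, giving an unusable factor $2^{\locality'}$.  The role of $R'$ mirrors that of $P$: just as $|S| = |\supp(P) \cap \supp(Y)| \leq \locality$, here $|\supp(Q) \setminus T| \leq |\supp(R')| \leq \locality$.  Verifying that the counted triples $(Y, S, T)$ are actually realized by some $R$ and $R'$ is routine once the supports are identified, since $R$ (respectively $R'$) is determined on $\supp(Y)$ by the analysis and can be extended off $\supp(Y)$ by setting it to agree with $P$ (respectively $Q$) or to the identity.
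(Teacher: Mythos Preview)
Your proof is correct, and part (a) is essentially the paper's argument (the symmetry remark is harmless but unnecessary, since $|S| \leq |\supp(P)| \leq \locality$ holds regardless).

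Part (b) is correct but proceeds by a genuinely different decomposition than the paper. The paper counts $Y$ directly: from $(R',Q) \in \cinv(Y)$ with $R' \in \locals_\locality$ one gets that $Y$ agrees with $Q$ outside $\supp(R')$, so $Y$ is obtained from $Q$ by altering at most $\locality$ qubits inside $\supp(P)\cup\supp(Q)$, giving $(\locality+\locality')^\locality \cdot 4^\locality$ choices; the pattern $(S,T)=(\supp(P)\cap\supp(Y),\supp(Q)\cap\supp(Y))$ is then determined by $Y$. You instead count $(S,T)$ first and then the compatible $Y$'s, and the key extra leverage you extract is that membership of both pairs in the \emph{same} cell $\mathcal{S}_Y[S,T]$ forces $\supp(R)\cap\supp(Y)=T$ and $\supp(R')\cap\supp(Y)=S$; this is what pins down $Y_i=P_i$ on $S\setminus T$ (since $R_i=I$ there) and $Y_i=Q_i$ on $T\setminus S$ (since $R'_i=I$ there), leaving only the two-way ambiguity on $S\cap T$. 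The paper's route is slightly shorter because it never needs to analyze $Y$ site-by-site, while yours makes more transparent exactly where each factor of the bound comes from; both land on $(4(\locality+\locality'))^\locality$. One small remark: your parenthetical ``(equivalently from $Q_i$)'' in step (iii) is not literally an equivalence when $P_i\neq Q_i$ (in that case the two constraints together force $Y_i$ uniquely), but the upper bound of $2$ per site is unaffected.
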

\begin{proof}
First, notice that for $\mathcal{S}_X$ to be non-empty, $\abs{\supp(X)} \leq \locality + \locality'$.
So, a pattern $(S, T)$ is two subsets of $\supp(X)$, which has size at most $\locality + \locality'$.
Further, patterns satisfy that $\supp(X) \subseteq S \cup T$, since $\supp([P, Q]) \subseteq \supp(P) \cup \supp(Q)$.
Therefore, $\supp(X) = S \cup T$.

It follows that the number of distinct patterns is at most $(2(\locality + \locality'))^{\locality}$: counting the number of patterns $(S, T)$ amounts to counting $(S, S \cap T)$, using that $T = (\supp(X) \setminus S) \cup (S \cap T)$; since $\abs{\supp(S)} \leq \locality$, there are at most $(\locality + \locality')^\locality$ possibilities for $S$ and at most $2^{\abs{S}} \leq 2^\locality$ possibilities for $S \cap T$.

For part (b), fix a $(P, Q) \in \locals_\locality \times \locals_{\locality'}$ and consider a $\mathcal{S}_Y[S,T]$ which contains both some $(P, R)$ and some $(R', Q)$.
Then $\supp(Y) = S \cup T \subset \supp(P) \cup \supp(Q)$.
Further, since $[R', Q]$ is $Y$ up to scaling for some $R' \in \locals_{\locality}$, $Q$ and $Y$ only differ in at most $\locality$ qubits.
Together, this shows that all such $\mathcal{S}_Y[S,T]$ can be attained by starting from $Q$ and changing at most $\locality$ qubits in $\supp(P) \cup \supp(Q)$.
This gives a choice of $Y$, which then specifies the pattern $(S, T)$ as well.
So, there are at most $(\locality + \locality')^\locality 4^\locality$ possibilities.
\end{proof}

\begin{lemma}\label{lem:local-norm-bound}
Let $H= \sum_{a = 1}^\terms \lambda_a E_a$ and $G = \sum_{b = 1}^{\terms'} \kappa_b F_b $ be $\locality$- and $\locality'$-local Hamiltonians, respectively.
Then
\[
\begin{split}
\fnorm{[H,G]} &\lesssim (4(\locality' + \locality))^{3\locality} \ltnorm{H}\fnorm{G}  \\
\ltnorm{[H,G]} &\lesssim (4(\locality' + \locality))^{3\locality} \ltnorm{H}\ltnorm{G}
\end{split}
\]    
\end{lemma}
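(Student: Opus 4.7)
I would expand $[H,G] = \sum_c \xi_c X_c$ in the Pauli basis and estimate the coefficient sums $\sum_c |\xi_c|^2$ and, for each site $i$, $\sum_{c : i \in \supp(X_c)} |\xi_c|^2$. Since the Paulis are $F$-orthogonal with $\fnorm{X_c}^2=\dims$, the first gives $\fnorm{[H,G]}^2/\dims$ and, maximized over $i$, the second gives $\ltnorm{[H,G]}^2$. By \eqref{eq:xi-bound}, squaring $\xi_c$ produces a quadratic expression over pairs of pairs $(E_a, F_b),(E_{a'}, F_{b'}) \in \mathcal{S}_{X_c}$, which is the object to be controlled.

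\textbf{Frobenius bound.} I apply the AM-GM inequality $|\lambda_a \kappa_b \lambda_{a'} \kappa_{b'}| \leq \tfrac{1}{2}(\lambda_a^2 \kappa_{b'}^2 + \lambda_{a'}^2 \kappa_b^2)$; after the symmetric relabeling $(a,b) \leftrightarrow (a',b')$ the two halves contribute equally, so it suffices to bound $\sum_{a,b'} \lambda_a^2 \kappa_{b'}^2 \cdot N(a,b')$, where $N(a,b')$ counts triples $(b, a', c)$ completing the constraints $(E_a, F_b), (E_{a'}, F_{b'}) \in \mathcal{S}_{X_c}$. Part (b) of \cref{lem:pauli-patterns} bounds the number of patterns $\mathcal{S}_{X_c}[S,T]$ hosting both an $E_a$-pair and an $F_{b'}$-pair by $(4(\locality+\locality'))^\locality$, and forces $\supp(E_a) \cap \supp(F_{b'}) \neq \emptyset$. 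A short case analysis on $S\setminus T$, $T\setminus S$, $S\cap T$, and $\supp(X_c)^c$ shows that in any such pattern the condition $E_aF_b \propto X_c$ together with the pattern data uniquely determines $F_b$ from $E_a$ (and symmetrically $E_{a'}$ from $F_{b'}$): on $\supp(X_c)^c$ one has $F_b = E_a$, on $S\setminus T$ and $T\setminus S$ the Paulis are determined by $X_c$, and on $S\cap T$ knowing $E_a|_i$ and $X_c|_i$ pins down the companion single-qubit Pauli. Hence $N(a,b') \leq (4(\locality+\locality'))^\locality \cdot \iver{\supp(E_a) \cap \supp(F_{b'}) \neq \emptyset}$.

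\textbf{Summation.} I bound the indicator by $\sum_{j \in \supp(F_{b'})} \iver{j \in \supp(E_a)}$ (placing the site-marker on only one side, which is crucial to avoid a stray factor of $\qubits$) and swap the order of summation:
\[
\sum_{a,b'} \lambda_a^2 \kappa_{b'}^2\, \iver{\supp(E_a) \cap \supp(F_{b'}) \neq \emptyset} \;\leq\; \ltnorm{H}^2 \sum_{b'} |\supp(F_{b'})|\, \kappa_{b'}^2 \;\leq\; \locality'\, \ltnorm{H}^2\, \fnorm{G}^2 / \dims.
\]
This gives $\fnorm{[H,G]} \leq 2\sqrt{\locality'(4(\locality+\locality'))^\locality}\, \ltnorm{H}\fnorm{G}$, well within $(4(\locality+\locality'))^{3\locality}$. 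Placing the site-marker on the other side yields the mirror estimate $\fnorm{[H,G]} \lesssim (4(\locality+\locality'))^{3\locality}\, \fnorm{H}\, \ltnorm{G}$.

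\textbf{Local $2$-norm via localization.} Let $H_i = \sum_{a: i\in \supp(E_a)}\lambda_a E_a$ and define $G_i$ analogously. Since $[H-H_i,\, G-G_i]$ has no term supported at $i$, $\Pi_i([H-H_i, G-G_i]) = 0$, so
\[
\Pi_i([H,G]) \;=\; \Pi_i([H_i, G]) + \Pi_i([H - H_i, G_i]).
\]
Applying the Frobenius bound to the second piece and its mirror to the first, together with $\fnorm{H_i}/\sqrt{\dims} \leq \ltnorm{H}$, $\fnorm{G_i}/\sqrt{\dims} \leq \ltnorm{G}$, and $\ltnorm{H-H_i} \leq \ltnorm{H}$, gives $\fnorm{\Pi_i([H,G])} \lesssim (4(\locality+\locality'))^{3\locality}\sqrt{\dims}\, \ltnorm{H}\ltnorm{G}$; maximizing over $i$ and dividing by $\sqrt{\dims}$ yields the $\ltnorm{\cdot}$ bound.

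\textbf{Main obstacle.} The crux is the uniqueness step inside the counting of $N(a,b')$: within a pattern $\mathcal{S}_{X_c}[S,T]$ containing both pairs, one must pin down $F_b$ (resp.\ $E_{a'}$) from $E_a$ (resp.\ $F_{b'}$) to upgrade the pattern count from \cref{lem:pauli-patterns}(b) into a bound on $N(a,b')$. The other delicate point is the asymmetric decomposition of the intersection-indicator; bounding $\iver{\supp(E_a)\cap \supp(F_{b'})\neq \emptyset}$ by the product $\sum_j \iver{j\in\supp(E_a)} \iver{j\in \supp(F_{b'})}$ of single-site indicators would cost a factor of $\qubits$, so the site-marker must remain attached to exactly one of the two supports so the other can be absorbed by $\ltnorm{\cdot}$.
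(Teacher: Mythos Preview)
Your Frobenius argument has a real gap at the step bounding $N(a,b')$. Starting from \eqref{eq:xi-bound} and applying AM--GM directly, the two pairs $(E_a,F_b)$ and $(E_{a'},F_{b'})$ in each cross term are only constrained to lie in $\mathcal{S}_{X_c}$, not in the \emph{same} pattern $\mathcal{S}_{X_c}[S,T]$. Part~(b) of \cref{lem:pauli-patterns} only counts triples $(X_c,S,T)$ for which a single pattern contains both an $(E_a,\cdot)$-pair and a $(\cdot,F_{b'})$-pair; it says nothing when the two pairs land in different patterns of the same $X_c$. Concretely, take $\locality=\locality'=2$, $E_a=\sigma_z$ on qubit~$1$, $F_{b'}=\sigma_z$ on qubit~$2$: with $X_c=\sigma_x^{(1)}\sigma_x^{(2)}$ one finds $F_b\propto\sigma_y^{(1)}\sigma_x^{(2)}$ and $E_{a'}\propto\sigma_x^{(1)}\sigma_y^{(2)}$, and the two pairs have patterns $(\{1\},\{1,2\})$ and $(\{1,2\},\{2\})$ respectively. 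So the disjoint-support case contributes, contradicting your indicator bound on $N(a,b')$. Worse, for $\locality=\locality'=3$ with the same $E_a,F_{b'}$ one may take $X_c=\sigma_x^{(1)}\sigma_x^{(2)}R^{(j)}$ for any qubit $j\geq 3$ and any single-qubit Pauli $R$; if $H$ and $G$ happen to contain the resulting $E_{a'}$ and $F_b$ as terms (which can certainly be arranged), then $N(a,b')$ scales with~$\qubits$.

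The paper closes this gap by inserting a Cauchy--Schwarz step \emph{before} squaring: group the sum defining $|\xi_c|$ by pattern, use part~(a) of \cref{lem:pauli-patterns} to bound the number of groups by $(2(\locality+\locality'))^\locality$, and only then square. This costs an extra factor but guarantees that both pairs in every cross term share a pattern, at which point part~(b) legitimately bounds the multiplicity of each $|\lambda_a\kappa_{b'}|^2$ and delivers the support-intersection condition you need. (Your uniqueness observation---that $F_b$ is determined by $E_a$ and $X_c$---is correct and in fact holds without any pattern data, since $F_b\propto E_aX_c$; the pattern grouping is needed not for uniqueness but to make part~(b) applicable.) Your localization argument for the $\ltnorm{\cdot}$ bound via $\Pi_i[H,G]=\Pi_i[H_i,G]+\Pi_i[H-H_i,G_i]$ is sound once the Frobenius bound and its mirror are properly established.
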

\begin{proof}
We first write $[H, G]$ in the Pauli basis, $[H, G] = \sum_c \xi_c X_c$, where every $X_c$ has support size at most $\locality + \locality'$.
Then, as in \eqref{eq:xi-def},
\begin{align*}
    \abs{\xi_c} &\leq \sum_{(E_a, F_b) \in \cinv(X_c)} 2 \abs{\lambda_a \kappa_b},
\end{align*}
where $\cinv(X_c)$ is the set of Paulis whose commutator evaluate to $X_c$, up to a non-zero scalar (\cref{def:cinv}).
Since $E_a$ and $F_b$ are local, we can restrict our sum to only $\mathcal{S}_{X_c} = \cinv(X_c) \cap (\locals_\locality \times \locals_{\locality'})$, and then split up the sum in terms of its associated pattern $(S, T)$, as described in \cref{lem:pauli-patterns}.
\begin{align}
    \abs{\xi_c} &\leq \sum_{\text{patterns }(S,T)}\sum_{(E_a, F_b) \in \mathcal{S}_{X_c}[S,T]} 2 \abs{\lambda_a \kappa_b} \nonumber
    \intertext{Now, we use part (a) of \cref{lem:pauli-patterns} to bound this.}
    \abs{\xi_c}^2 &\leq (\# \text{ of patterns})\sum_{\text{patterns }(S,T)}\parens[\Bigg]{\sum_{(E_a, F_b) \in \mathcal{S}_{X_c}[S,T]} 2 \abs{\lambda_a \kappa_b}}^2 \nonumber\\
    &\leq (2(\locality + \locality'))^\locality \sum_{\text{patterns }(S,T)}\parens[\Bigg]{\sum_{(E_a, F_b) \in \mathcal{S}_{X_c}[S,T]} 2 \abs{\lambda_a \kappa_b}}^2 \nonumber\\
    &\leq (2(\locality + \locality'))^\locality \sum_{\text{patterns }(S,T)}\sum_{\substack{(E_a, F_b) \in \mathcal{S}_{X_c}[S,T] \\ (E_{a'}, F_{b'}) \in \mathcal{S}_{X_c}[S,T]}} 2(\abs{\lambda_a \kappa_{b'}}^2 + \abs{\lambda_{a'}\kappa_b}^2) \label{eq:xi-f-reuse}
\end{align}
Then, we sum over $X_c$'s and use part (b) of \cref{lem:pauli-patterns} to conclude that, over all $X_c$'s, a $\abs{\lambda_a \kappa_{b'}}^2$ term appears at most $(4(\locality + \locality'))^\locality$ times.
\begin{align*}
    \fnorm{[H, G]}^2
    &= \sum_c \abs{\xi_c}^2 \\
    &\leq (2(\locality + \locality'))^\locality \sum_c\sum_{\text{patterns }(S,T)}\sum_{\substack{(E_a, F_b) \in \mathcal{S}_{X_c}[S,T] \\ (E_{a'}, F_{b'}) \in \mathcal{S}_{X_c}[S,T]}} 2(\abs{\lambda_a \kappa_{b'}}^2 + \abs{\lambda_{a'}\kappa_b}^2) \\
    &\lesssim (4(\locality + \locality'))^{2\locality} \sum_{\substack{a,b \\ \supp(E_a) \cap \supp(F_b) \neq \varnothing}} \abs{\lambda_a \kappa_{b'}}^2.
\end{align*}
Finally, using the definition of the local norm,
\[
    \sum_{\substack{a,b \\ \supp(E_a) \cap \supp(F_b) \neq \emptyset} } \abs{\lambda_a c_b}^2 \leq  \locality' \ltnorm{H}^2 \norm{G}_F^2,
\]
and combining this with the above completes the proof of the first statement.

We can deduce the second statement from the first statement.  For a fixed site $i$, note that all terms of $[H,G]$ that intersect $i$ must be of the form $[E_a, F_b]$ where $i \in \supp(E_a)$ or $i \in \supp(F_b)$.  Now we can consider these two types of terms separately and apply the first statement in the lemma, which we already we proved, to bound each of them.
\end{proof}

For the $\lonorm{\cdot}$ bound, we split up the sum in a different way.
\begin{lemma} \label{lem:pauli-types}
    Let $\mathcal{S}_X = \cinv(X) \cap (\locals_\locality \times \locals_{\locality'})$ be the set of ways to get $X$ from a commutator of a $\locality$- and $\locality'$-local Pauli.
    For some $(P, Q) \in \mathcal{S}_X$, we say its \emph{type} is given by the pair $(\abs{\supp(P)}, \abs{\supp(P) \cap \supp(Q)})$.
    We can partition $\mathcal{S}_X$ into subsets based on type:
    \begin{align*}
        \mathcal{S}_X[j,k] = \braces[\Big]{(P, Q) \in \mathcal{S}_X \,\Big|\, \text{the type of } (P, Q) \text{ is } (j, k)}.
    \end{align*}
    Then, we can conclude the following:
    \begin{enumerate}[label=(\alph*)]
        \item For every $X \in \locals$, there are at most $\locality^2$ distinct types;
        \item If $(P, Q), (P', Q') \in \mathcal{S}_X[j,k]$, then $\supp(P) \cap \supp(Q') \neq \varnothing$.
    \end{enumerate}
\end{lemma}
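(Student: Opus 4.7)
For part (a), the type $(j, k)$ must satisfy $1 \leq j \leq \locality$ and $1 \leq k \leq \locality$: $j \leq \locality$ because $P \in \locals_\locality$, $k \leq j$ because $\supp(P) \cap \supp(Q) \subseteq \supp(P)$, and the lower bounds hold because $j = 0$ gives $P = \id$ and $k = 0$ makes $P, Q$ act on disjoint sets of qubits, in either case forcing $[P,Q] = 0 \notin \{\pm 2\ii X\}$. This immediately gives at most $\locality^2$ possible types.

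For part (b), my plan is to introduce the auxiliary quantity $a = |A_{P,Q}|$, where $A_{P,Q} = \{i \in \supp(P) \cap \supp(Q) : P_i = Q_i\}$ denotes the set of ``cancellation sites'' on which $P$ and $Q$ carry the same non-identity Pauli (and analogously $a'$ for $(P', Q')$). The central observation is that for any site $i \in \supp(P) \setminus A_{P,Q}$, either $i \notin \supp(Q)$ or $P_i \neq Q_i$ are both non-identity, and in both cases $(PQ)_i \neq \id$, so $i \in \supp(X)$. Since $A_{P,Q}$ is itself disjoint from $\supp(X)$, this yields the clean identity $|\supp(P) \cap \supp(X)| = j - a$, and symmetrically $|\supp(Q') \cap \supp(X)| = |\supp(Q')| - a'$. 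A parallel case split for $\supp(PQ) = \supp(X)$ gives the accounting $|\supp(X)| = j + |\supp(Q)| - k - a = j + |\supp(Q')| - k - a'$.

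With this bookkeeping in place, I argue by contradiction: suppose $\supp(P) \cap \supp(Q') = \varnothing$. Then $\supp(P) \cap \supp(X)$ and $\supp(Q') \cap \supp(X)$ are disjoint subsets of $\supp(X)$, so
\[
    |\supp(X)| \;\geq\; (j - a) + (|\supp(Q')| - a').
\]
Combining with $|\supp(X)| = j + |\supp(Q')| - k - a'$ collapses to $a \geq k$; since $a \leq k$ holds trivially from $A_{P,Q} \subseteq \supp(P) \cap \supp(Q)$, we conclude $a = k$, meaning $P$ and $Q$ agree throughout their common support. But then $P$ and $Q$ anticommute on zero sites, hence commute, giving $[P,Q] = 0$ and contradicting $(P,Q) \in \mathcal{S}_X$. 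I expect the main obstacle to be getting the identity $|\supp(X)| = j + |\supp(Q)| - k - a$ right, since it requires a careful case split on each site into the three bins $\supp(P) \triangle \supp(Q)$, $A_{P,Q}$, and the ``anticancellation'' set $(\supp(P) \cap \supp(Q)) \setminus A_{P,Q}$; once this is set up, the pigeonhole step forcing $a = k$ and the collapse of the commutator is essentially immediate.
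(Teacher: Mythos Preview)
Your proof is correct. Part~(a) matches the paper's argument exactly. For part~(b), you take a slightly different route from the paper, so let me compare.

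The paper argues directly: since $\supp(P)\setminus\supp(Q)\subseteq\supp(X)$ and at least one site of $\supp(P)\cap\supp(Q)$ is an anticommutation site (hence also in $\supp(X)$), one has $|\supp(P)\cap\supp(X)|\geq j-k+1$. These sites all lie in $\supp(X)\subseteq\supp(P')\cup\supp(Q')$, but $|\supp(P')\setminus\supp(Q')|=j-k$ because $(P',Q')$ has the same type, so by pigeonhole at least one of them lands in $\supp(Q')$.

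Your approach instead tracks the exact cancellation count $a=|A_{P,Q}|$, derives the identities $|\supp(P)\cap\supp(X)|=j-a$ and $|\supp(X)|=j+|\supp(Q')|-k-a'$, and then under the disjointness hypothesis packs $\supp(P)\cap\supp(X)$ and $\supp(Q')\cap\supp(X)$ disjointly into $\supp(X)$ to force $a\geq k$, hence $a=k$, collapsing the commutator. Both arguments are the same pigeonhole at heart; the paper's version is shorter because it only needs the inequality $a\leq k-1$ (equivalently, ``at least one anticommutation site exists''), which follows instantly from $[P,Q]\neq 0$, rather than your exact accounting. Your bookkeeping via $A_{P,Q}$ is a perfectly clean way to organize the case split you flagged as the main obstacle, but you could shortcut it by observing that inequality directly.
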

\begin{proof}
Part (a) follows because $\abs{\supp(P)} \leq \locality$ and $\supp(P) \cap \supp(Q)$ must be non-empty.

For part (b), consider $(P, Q), (P', Q') \in \mathcal{S}_X[j,k]$.
Because $[P, Q]$ is a non-zero scalar multiple of $X$, both $\supp(P) \setminus \supp(Q)$ and some element of $\supp(P) \cap \supp(Q)$ are in $\supp(X)$.
Further, $\supp(X) \subseteq \supp(P') \cup \supp(Q')$, so at least $\abs{\supp(P) \setminus \supp(Q)} + 1$ elements from $\supp(P)$ are in $\supp(P') \cup \supp(Q')$.
Because $(P', Q')$ has the same type as $(P, Q)$, $\abs{\supp(P') \setminus \supp(Q')} = \abs{\supp(P) \setminus \supp(Q)}$, so at least one of the $\supp(P)$ elements is not in $\supp(P') \setminus \supp(Q')$; therefore, it must be in $\supp(Q')$.
\end{proof}

\begin{lemma}\label{lem:local-norm-bound1}
Let $H= \sum_{a = 1}^\terms \lambda_a E_a$ and $G = \sum_{b = 1}^{\terms'} \kappa_b F_b $ be $\locality$- and $\locality'$-local Hamiltonians, respectively.
Then
\[
\begin{split}
\ltnorm{[H,G]} &\lesssim \locality' \locality \lonorm{H} \ltnorm{G} \\
\fnorm{[H,G]} & \lesssim \locality' \locality \lonorm{H} \fnorm{G}
\end{split}
\]
\end{lemma}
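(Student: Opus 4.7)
The strategy parallels \cref{lem:local-norm-bound}, with the pattern-based partition replaced by the type-based one of \cref{lem:pauli-types}, and $\ltnorm{H}$-control exchanged for the stronger $\lonorm{H}$-control via the intersection property in that lemma.

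First I would expand $[H, G] = \sum_c \xi_c X_c$ in the Pauli basis, so that by \eqref{eq:xi-def}, $\abs{\xi_c} \leq 2 \sum_{(E_a, F_b) \in \mathcal{S}_{X_c}} \abs{\lambda_a \kappa_b}$, with $\mathcal{S}_{X_c}$ as in \cref{lem:pauli-types}. Partitioning into at most $\locality^2$ types and applying Cauchy--Schwarz across types yields $\abs{\xi_c}^2 \leq 4\locality^2 \sum_{(j,k)} B_{c,j,k}^2$, where $B_{c,j,k} = \sum_{(E_a, F_b) \in \mathcal{S}_{X_c}[j,k]} \abs{\lambda_a \kappa_b}$. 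Within a type, a weighted Cauchy--Schwarz with weight $\abs{\lambda_a}$ gives
\[
    B_{c,j,k}^2 \leq \Bigl(\sum_{(E_a, F_b) \in \mathcal{S}_{X_c}[j,k]} \abs{\lambda_a}\Bigr)\Bigl(\sum_{(E_a, F_b) \in \mathcal{S}_{X_c}[j,k]} \abs{\lambda_a}\abs{\kappa_b}^2\Bigr).
\]
The main gain over \cref{lem:local-norm-bound} lies in the first factor: by \cref{lem:pauli-types}(b), fixing any reference $(E_{a_0}, F_{b_0}) \in \mathcal{S}_{X_c}[j,k]$, every other $E_a$ in that class intersects $\supp(F_{b_0})$, a set of at most $\locality'$ sites; and since Pauli--Pauli anti-commutation pins $F_b$ down as $\pm \ii E_a X_c$ once $E_a$ and $X_c$ are fixed, each eligible $E_a$ appears at most once. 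Hence $\sum \abs{\lambda_a} \leq \locality' \lonorm{H}$.

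For the second factor, I would swap the sums over $c$ and types (each pair lies in at most one class), collapse to $\sum_{(E_a, F_b) : \supp(E_a) \cap \supp(F_b) \neq \varnothing} \abs{\lambda_a}\abs{\kappa_b}^2$, and apply the same site-intersection argument to the inner $\sum_a$ to bound this by $\locality' \lonorm{H} \sum_b \abs{\kappa_b}^2 = \locality' \lonorm{H} \cdot \fnorm{G}^2/\dims$. Combining with $\fnorm{[H,G]}^2 = \dims \sum_c \abs{\xi_c}^2$ yields $\fnorm{[H,G]}^2 \lesssim \locality^2 (\locality')^2 \lonorm{H}^2 \fnorm{G}^2$, which is the first inequality.

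For $\ltnorm{[H, G]}$, I would re-run the same argument but restrict to $c$ with $i_0 \in \supp(X_c)$, which forces every contributing pair to satisfy $i_0 \in \supp(E_a) \cup \supp(F_b)$. The terminal sum then splits into two cases: (A) $i_0 \in \supp(E_a)$, giving $\sum_{a : i_0 \in \supp(E_a)} \abs{\lambda_a} \cdot \sum_{b : \supp(F_b) \cap \supp(E_a) \neq \varnothing} \abs{\kappa_b}^2 \leq \lonorm{H} \cdot \locality \ltnorm{G}^2$ (using $\abs{\supp(E_a)} \leq \locality$); and (B) $i_0 \in \supp(F_b) \setminus \supp(E_a)$, which by a symmetric computation is bounded by $\locality' \lonorm{H} \cdot \ltnorm{G}^2$. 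Feeding the sum of these two cases back into the type-sum estimate delivers $\ltnorm{[H, G]}^2 \lesssim (\locality \locality')^2 \lonorm{H}^2 \ltnorm{G}^2$.

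The principal difficulty is sharpening from $\fnorm{G}$ to $\ltnorm{G}$ in the site-restricted bound: a naive reduction through the Frobenius inequality, even after localizing $H$ to its $i_0$-touching part, would only produce $\fnorm{G}^2/\dims = \sum_b \abs{\kappa_b}^2$, which can exceed $\ltnorm{G}^2$ by a factor of $\qubits$. The two-case split directly exploits the fact that at least one of $\supp(E_a), \supp(F_b)$ is anchored at $i_0$, converting the global term-sum into a site-local one without ever summing over terms outside the neighborhood of $i_0$.
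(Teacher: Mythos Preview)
Your proposal is correct and follows essentially the same approach as the paper: both arguments hinge on the type-based partition of \cref{lem:pauli-types} and its intersection property (part (b)) to bound the $\abs{\lambda_a}$-sums by $\locality'\lonorm{H}$. The only cosmetic differences are that the paper proves the $\ltnorm{\cdot}$ bound first and splits into the two cases ($i\in\supp(E_a)$ versus $i\in\supp(F_b)$) at the level of $\xi_c$ before applying types only to the second case, whereas you apply types uniformly, use a weighted Cauchy--Schwarz (instead of expanding the square and symmetrizing in $b,b'$), and perform the two-case split at the terminal sum; both routes yield the same $(\locality\locality')^2$ bound.
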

\begin{proof}
We will prove the first inequality.  The proof of the second is essentially the same. 
We first write $[H, G]$ in the Pauli basis, $[H, G] = \sum_c \xi_c X_c$, where every $X_c$ has support size at most $\locality + \locality'$.
Then, as in \eqref{eq:xi-def},
\begin{align*}
    \abs{\xi_c} &\leq \sum_{(E_a, F_b) \in \cinv(X_c)} 2 \abs{\lambda_a \kappa_b},
\end{align*}
where $\cinv(X_c)$ is the set of Paulis whose commutator evaluate to $X_c$, up to a non-zero scalar (\cref{def:cinv}).

To bound the local norm, fix a site $i \in [\qubits]$, and consider those $X_c$ which intersect $i$.
If $i \in \supp(X_c)$, then when $[E_a, F_b]$ is $X_c$ up to a non-zero scalar, either $i \in \supp(E_a)$ or $i \in \supp(F_b)$.
So, we can split up
\begin{align*}
    \sum_{\substack{c \\ i \in \supp(X_c)}} \abs{\xi_c}^2
    \lesssim \sum_{\substack{c \\ i \in \supp(X_c)}} \parens[\Big]{\sum_{\substack{(E_a, F_b) \in \cinv(X_c) \\ i \in \supp(E_a)}} \abs{\lambda_a \kappa_b}}^2 + \sum_{\substack{c \\ i \in \supp(X_c)}}\parens[\Big]{\sum_{\substack{(E_a, F_b) \in \cinv(X_c) \\ i \in \supp(F_b)}} \abs{\lambda_a \kappa_b}}^2.
\end{align*}
We can bound the first term in a straightforward manner:
\begin{align}\label{eq:type1-terms}
    \sum_{\substack{c \\ i \in \supp(X_c)}} \parens[\Big]{\sum_{\substack{(E_a, F_b) \in \cinv(X_c) \\ i \in \supp(E_a)}} \abs{\lambda_a\kappa_b}}^2
    &\leq \parens[\Bigg]{ \sum_{\substack{a \\ i \in \supp(E_a)}} \abs{\lambda_a} \parens[\Bigg]{\sum_{\substack{b \\ \supp(F_b) \cap \supp(E_a) \neq \emptyset}} \abs{\kappa_b}^2}^{1/2} }^2 \nonumber\\
    &\leq \locality^2 \cdot \lonorm{H}^2 \ltnorm{G}^2,
\end{align}
where the first inequality follows from the triangle inequality, thinking of the expression as $\norm{\sum_a \abs{\lambda_a} v^{(a)}}^2$ for $v^{(a)}$ the appropriate vector of $\abs{\kappa_b}$'s; and the second follows from the definition of local norm.

For the second term, where we take $i \in F_b$, we split the sum up into their associated types $(j,k)$, as described in \cref{lem:pauli-types}, and apply the results of that lemma.
\begin{align*}
    \sum_{\substack{c \\ i \in \supp(X_c)}}\parens[\Big]{\sum_{\substack{(E_a, F_b) \in \cinv(X_c) \\ i \in \supp(F_b)}} \abs{\lambda_a \kappa_b}}^2
    &\leq \sum_{\substack{c \\ i \in \supp(X_c)}}(\# \text{ of types}) \sum_{\text{types }(j,k)}\parens[\Big]{\sum_{\substack{(E_a, F_b) \in \mathcal{S}_{X_c}[j,k] \\ i \in \supp(F_b)}} \abs{\lambda_a \kappa_b}}^2 \\
    &\leq \locality^2 \sum_{\substack{c \\ i \in \supp(X_c)}} \sum_{\text{types }(j,k)}\sum_{\substack{(E_a, F_b) \in \mathcal{S}_{X_c}[j,k] \\ (E_{a'}, F_{b'}) \in \mathcal{S}_{X_c}[j,k] \\ i \in \supp(F_b), \supp(F_{b'})}} \abs{\lambda_a \kappa_b \lambda_{a'} \kappa_{b'}} \\
    &\leq \locality^2 \sum_{\substack{c \\ i \in \supp(X_c)}} \sum_{\text{types }(j,k)}\sum_{\substack{(E_a, F_b) \in \mathcal{S}_{X_c}[j,k] \\ (E_{a'}, F_{b'}) \in \mathcal{S}_{X_c}[j,k] \\ i \in \supp(F_b), \supp(F_{b'})}} \abs{\lambda_a \lambda_{a'}} \abs{\kappa_b}^2
\intertext{Note that the above sum has no double-counting, in that every collection of indices $(a, b, a', b') \mapsto \abs{\lambda_a \lambda_a' \kappa_b^2}$ appears at most once: the choice of $a$ and $b$ determines the corresponding $X_c$ as well as the type, and this $X_c$ along with the $a'$ determines the $F_{b'}$.
So, to bound this, we note that all terms appearing in the sum satisfy that $i \in \supp(F_b)$, $\supp(E_a)$ intersects $\supp(F_b)$, and by part (b) of \cref{lem:pauli-types}, $\supp(E_{a'})$ also intersects $\supp(F_b)$.}
    \sum_{\substack{c \\ i \in \supp(X_c)}}\parens[\Big]{\sum_{\substack{(E_a, F_b) \in \cinv(X_c) \\ i \in \supp(F_b)}} \abs{\lambda_a \kappa_b}}^2
    &\leq \locality^2 \sum_{\substack{b \\ i \in \supp(F_b)}} \abs{\kappa_b}^2 \parens[\Bigg]{ \sum_{\substack{a,a' \\ \supp(E_a) \cap \supp(F_b) \neq \varnothing \\ \supp(E_{a'}) \cap \supp(F_b) \neq \emptyset}} \abs{\lambda_a} \cdot \abs{\lambda_{a'}} } \\
    &= \locality^2 \sum_{\substack{b \\ i \in \supp(F_b)}} \abs{\kappa_b}^2 \parens[\Big]{ \sum_{\substack{a \\ \supp(E_a) \cap \supp(F_b) \neq \varnothing}} \abs{\lambda_a}}^2 \\
    &\lesssim \locality^2  \locality'^2 \cdot \lonorm{H}^2 \ltnorm{G}^2 \,.
\end{align*}
Since this holds for all $i \in [\qubits]$, $\ltnorm{[H, G]}^2 \lesssim \locality^2  \locality'^2 \cdot \lonorm{H}^2 \ltnorm{G}^2$ as desired.

To prove the second inequality, we can just use the same argument as for bounding the second term above except the sum over $c$ is over all $X_c$ instead of just those containing some site $i$.  
\end{proof}

\subsection{Bivariate nested commutators}

We will also need to ``reorder'' nested commutators, which can be done with tools developed in prior work~\cite{blmt24}.

\begin{definition}[Bivariate nested commutators, {\cite[Definition 3.3]{blmt24}}]
\label{def:bi-variate-nested-commutators}
Let $S \in \{0,1 \}^\ell$ and $X,Y,A \in \C^{\dims \times \dims}$ be matrices.  Consider a sequence $Z_1,Z_2,\dots, Z_\ell$ of length $\ell$  where each $Z_i \in \{ X,Y \}$ and $Z_i = X$ if and only if the $i$th entry of $S$ is $0$.  We define 
\[
[ (X,Y)_S ,A] = [Z_1, [Z_2, [ \dots [Z_\ell, A ] \dots ]]] \,.
\]
\end{definition}

For our bounds, we will need to analyze the difference between two bivariate nested commutators which involve the same matrices with the same multiplicities, but with a different order.
When the matrices $X$ and $Y$ are close to commuting, which in particular holds when their difference is small, reordering does not change the value of the nested commutator by too much.  When $\abs{S} = 2$, this follows from the identity below.

\begin{fact}[Jacobi identity]\label{fact:switch:HH'}
We have the identity $[X,[Y,A]] - [Y, [X, A]] = [[X,Y],A]$.
\end{fact}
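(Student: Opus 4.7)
The plan is to prove the Jacobi identity by direct expansion, since it is a purely algebraic statement in an associative algebra (here, $\mathbb{C}^{\dims \times \dims}$) that follows from the definition $[P,Q] = PQ - QP$ without invoking any structural properties of $X$, $Y$, or $A$.

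Concretely, I would expand each of the three double commutators into a sum of four triple products. Applying the definition twice gives
\[
    [X,[Y,A]] = XYA - XAY - YAX + AYX,
\]
\[
    [Y,[X,A]] = YXA - YAX - XAY + AXY,
\]
and
\[
    [[X,Y],A] = XYA - YXA - AXY + AYX.
\]
Subtracting the second expansion from the first, the $XAY$ terms and the $YAX$ terms both cancel, leaving $XYA + AYX - YXA - AXY$, which is exactly the right-hand side. This verification is a handful of lines of bookkeeping and is the entire proof; there is no real obstacle, since associativity of matrix multiplication is all that is used.

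If one prefers a more conceptual organization, I would note that $[X,\cdot]$ and $[Y,\cdot]$ are linear operators on $\mathbb{C}^{\dims \times \dims}$ (the adjoint action $\mathrm{ad}_X$, $\mathrm{ad}_Y$), and the identity is equivalent to the operator identity $\mathrm{ad}_X \mathrm{ad}_Y - \mathrm{ad}_Y \mathrm{ad}_X = \mathrm{ad}_{[X,Y]}$, i.e.\ that $\mathrm{ad}$ is a Lie algebra homomorphism from the associative-commutator Lie bracket on matrices to the commutator bracket on endomorphisms. But for the purposes of this paper the term-by-term expansion above is the most direct route and is the version I would include.
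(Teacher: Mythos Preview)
Your proof is correct; the paper itself does not give a proof of this fact, treating it as a standard identity and stating it without argument. Your direct expansion is the usual verification and would serve perfectly well if a proof were desired.
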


We extend this to higher-order commutators by induction.
\begin{lemma}[Reordering bivariate nested commutators, {\cite[Lemma 3.5]{blmt24}}]\label{lem:polynomial-equivalence}
For any two sequences $S,S' \in \{0,1 \}^\ell$ with the same number of $0$'s and $1$'s, let $t \leq \ell^2$ be the number of adjacent swaps needed to transform $S$ to $S'$.
Then there are some coefficients $c_1, \dots , c_t \in \{-1,1 \}$, and sequences $S_1,T_1, \dots , S_t, T_t$ where $\len(S_i) + \len(T_i) = \ell - 2$ such that
\[
[(X,Y)_S,A] - [(X,Y)_{S'},A] = \sum_{i = 1}^t c_i \left[ (X,Y)_{S_i}, \left[[X,Y], \left[(X,Y)_{T_i},A \right]\right] \right]. 
\]
\end{lemma}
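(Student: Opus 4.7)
The plan is to reduce the general reordering identity to the single-swap case via a telescoping argument, and handle the single-swap case with the Jacobi identity.

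First I would handle the case $t = 1$, where $S'$ is obtained from $S$ by a single adjacent swap at positions $k, k+1$. Writing $[(X,Y)_S, A] = [Z_1, [Z_2, [\dots [Z_k, [Z_{k+1}, [Z_{k+2}, [\dots, [Z_\ell, A]\dots]]]]]]]$, everything outside of positions $k, k+1$ is identical in $[(X,Y)_S, A]$ and $[(X,Y)_{S'}, A]$. Let $B = [Z_{k+2}, [\dots, [Z_\ell, A]\dots]]$. Then the difference reduces to comparing $[Z_k, [Z_{k+1}, B]]$ with $[Z_{k+1}, [Z_k, B]]$, and the Jacobi identity (\cref{fact:switch:HH'}) gives
\[
[Z_k, [Z_{k+1}, B]] - [Z_{k+1}, [Z_k, B]] = [[Z_k, Z_{k+1}], B].
\]
Since $\{Z_k, Z_{k+1}\} = \{X, Y\}$ as a multiset, $[Z_k, Z_{k+1}] = \pm [X, Y]$, yielding a sign $c_1 \in \{-1, 1\}$. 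Wrapping with the outer commutators $[Z_1, [\dots, [Z_{k-1}, \cdot]\dots]]$ and expanding $B$ gives exactly
\[
[(X,Y)_S, A] - [(X,Y)_{S'}, A] = c_1 \bigl[(X,Y)_{S_1}, \bigl[[X,Y], [(X,Y)_{T_1}, A]\bigr]\bigr],
\]
where $S_1 = (Z_1, \dots, Z_{k-1})$ has length $k-1$ and $T_1 = (Z_{k+2}, \dots, Z_\ell)$ has length $\ell - k - 1$, so $\len(S_1) + \len(T_1) = \ell - 2$ as required.

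Next I would pass from single swaps to arbitrary $t$ by telescoping. Since $S$ and $S'$ have the same number of $0$'s and $1$'s, there is a sequence $S = S^{(0)}, S^{(1)}, \dots, S^{(t)} = S'$ where each $S^{(i)}$ differs from $S^{(i-1)}$ by a single adjacent swap; bubble sort guarantees we may take $t \leq \binom{\ell}{2} \leq \ell^2$. Then
\[
[(X,Y)_S, A] - [(X,Y)_{S'}, A] = \sum_{i=1}^{t} \bigl([(X,Y)_{S^{(i-1)}}, A] - [(X,Y)_{S^{(i)}}, A]\bigr),
\]
and applying the single-swap identity to each summand produces terms of the required form with the right length budget.

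I do not anticipate any real obstacle here: the only subtlety is keeping track that after the single swap we can legitimately rewrite the outer and inner legs as bivariate nested commutators in $(X,Y)$, which is immediate from the definition (\cref{def:bi-variate-nested-commutators}) since they are just prefixes and suffixes of the original sequence. The length accounting $\len(S_i) + \len(T_i) = \ell - 2$ holds for every $i$ because each single swap removes exactly two positions (the pair that becomes the inner $[X,Y]$), regardless of where in the sequence the swap occurs.
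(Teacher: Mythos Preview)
Your proposal is correct and follows essentially the same approach as the paper: reduce to the single adjacent swap via telescoping, and handle the single swap using the Jacobi identity with the prefix and suffix becoming $S_i$ and $T_i$. The paper's proof is terser but identical in structure.
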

\begin{proof}
Consider when $S,S'$ differ exactly by a single swap of two adjacent elements.  In this case, by Fact~\ref{fact:switch:HH'}, the difference on the LHS is equal to exactly one term of the form
\[
\left[ (X,Y)_{S_i}, \left[[X,Y], \left[(X,Y)_{T_i},A \right]\right] \right] 
\]
where $S_i$ is the prefix up to the point where $S,S'$ differ and $T_i$ is the suffix.  Now we can repeatedly apply this to swap adjacent elements of $S$ until it matches $S'$.  Each of the residual terms is of the form given on the RHS so we are done.
\end{proof}

\subsection{\texorpdfstring{Proof of \cref{lem:main}}{Proof of Lemma 3.1}}

Now we prove our main result of this section.
The error we need to bound for this takes the form of nested commutators, as shown in the lemma below.

\begin{lemma}[Observable-based Trotter error] \label{one-peel}
    Let $X$, $H$, and $H_0$ be arbitrary Hermitian matrices. For any $t \in \mathbb{R}$, 
    \begin{align*}
        e^{-\ii H_0 t} e^{\ii H t} X e^{-\ii H t} e^{\ii H_0}
        - e^{\ii (H-H_0) t} X e^{-\ii (H-H_0) t}
        = \sum_{\ell \geq 2} \frac{(\ii t)^\ell}{\ell!} C_\ell,
    \end{align*}
    where $C_\ell$ is a sum of at most $2^\ell \ell^2$ order-$(\ell-1)$ nested commutators of $-H_0$, $H$, one copy of $[H, -H_0]$, and $X$ in the center.
\end{lemma}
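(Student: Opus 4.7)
The plan is to Taylor-expand both unitary conjugations in $t$ using the Hadamard formula (\cref{fact:hadamard}), check that the order-$0$ and order-$1$ terms agree between the two sides, and then rewrite the higher-order difference by reordering operators inside each nested commutator via \cref{lem:polynomial-equivalence}.

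For the expression $e^{\ii(H-H_0)t} X e^{-\ii(H-H_0)t}$, a single Hadamard expansion combined with linearity (identifying $-H_0$ with $0$ and $H$ with $1$ in the convention of \cref{def:bi-variate-nested-commutators}) gives
\[
    e^{\ii(H-H_0)t} X e^{-\ii(H-H_0)t} = \sum_{\ell \geq 0} \frac{(\ii t)^\ell}{\ell!} \sum_{S \in \{0,1\}^\ell} [(-H_0, H)_S, X].
\]
For $e^{-\ii H_0 t} e^{\ii H t} X e^{-\ii H t} e^{\ii H_0 t}$, I would apply Hadamard inside-out, first to the inner conjugation by $e^{\ii H t}$ and then to the outer conjugation by $e^{-\ii H_0 t}$; using $[H_0,\cdot] = -[-H_0,\cdot]$ to absorb signs yields
\[
    e^{-\ii H_0 t} e^{\ii H t} X e^{-\ii H t} e^{\ii H_0 t} = \sum_{\ell \geq 0} \frac{(\ii t)^\ell}{\ell!} \sum_{j=0}^\ell \binom{\ell}{j} [(-H_0, H)_{S^*_j}, X],
\]
where $S^*_j \in \{0,1\}^\ell$ denotes the sorted sequence with $\ell - j$ zeros followed by $j$ ones.

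At orders $\ell \in \{0,1\}$ the two coefficients agree (giving $X$ and $[-H_0, X] + [H, X]$ respectively), so these vanish in the difference. For $\ell \geq 2$, the key observation is that the number of $S \in \{0,1\}^\ell$ with exactly $j$ ones is precisely $\binom{\ell}{j}$, matching the multiplicity of $S^*_j$ in the first expansion; pairing these up yields
\[
    C_\ell = \sum_{j=0}^\ell \sum_{\substack{S \in \{0,1\}^\ell \\ |S| = j}} \bigl([(-H_0, H)_{S^*_j}, X] - [(-H_0, H)_S, X]\bigr).
\]
Each pair difference is then handled by \cref{lem:polynomial-equivalence}, which writes it as a sum of at most $\ell^2$ terms of the form $[(-H_0, H)_{S_i}, [[-H_0, H], [(-H_0, H)_{T_i}, X]]]$ with $\len(S_i) + \len(T_i) = \ell - 2$. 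These are precisely order-$(\ell-1)$ nested commutators built from $\ell - 2$ single operators in $\{-H_0, H\}$, one copy of $[-H_0, H] = -[H, -H_0]$, and $X$ at the center (signs are absorbed into the coefficients $c_i$ from \cref{lem:polynomial-equivalence}). Summing over all $\sum_{j=0}^\ell \binom{\ell}{j} = 2^\ell$ pairs gives a total of at most $2^\ell \ell^2$ terms, matching the claim. The only non-obvious step is recognizing that the multinomial coefficients from the two expansions align perfectly so that pairwise reordering can be applied; once this pairing is set up, the rest reduces to bookkeeping against the already-developed machinery.
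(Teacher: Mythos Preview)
Your proposal is correct and follows essentially the same route as the paper: expand both conjugations via the Hadamard formula, regroup by total order $\ell$, pair each $S \in \{0,1\}^\ell$ with the sorted sequence of the same weight, and apply \cref{lem:polynomial-equivalence} to each pair. The only cosmetic difference is notation---the paper writes the sorted commutator as $[-H_0,[H,X]_{|S|}]_{\ell-|S|}$ rather than introducing $S^*_j$---but the argument and the $2^\ell \cdot \ell^2$ count are identical.
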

\begin{proof}
We expand the expression as follows.
\begin{align*}
    & e^{-\ii H_0 t} e^{\ii H t} X e^{-\ii H t} e^{\ii H_0} - e^{\ii (H-H_0) t} X e^{-\ii (H-H_0) t} \\
    &= \sum_{k, k_0 \geq 0} \frac{(\ii t)^{k+k_0}}{k!k_0!}[-H_0, [H, X]_k]_{k_0} - \sum_{\ell \geq 0} \frac{t^\ell}{\ell!} [H-H_0, X]_\ell \\
    &= \sum_{\ell \geq 0} \frac{(\ii t)^\ell}{\ell!} \parens*{\parens[\Big]{\sum_{k = 0}^\ell \binom{\ell}{k}[-H_0, [H, X]_k]_{\ell-k}} - [H-H_0, X]_\ell} \\
    &= \sum_{\ell \geq 0} \frac{(\ii t)^\ell}{\ell!} \parens[\Big]{\sum_{S \in \{0,1\}^\ell} [-H_0, [H, X]_{\abs{S}}]_{\ell - \abs{S}} - [(-H_0, H)_S, X]}
\end{align*}
The first equality follows from two applications of the Hadamard formula (\cref{fact:hadamard}), the second follows from counting the number of ways $k+k_0 =\ell$, and the third expands the nested commutator using linearity, $[H-H_0, X]_\ell = \sum_{S \in \{0,1\}^\ell}[(-H_0, H)_S, X]$.
Note that this sum is zero for $\ell \in \{0, 1\}$.
We know from \cref{lem:polynomial-equivalence} that, for some $S_i, T_i$, and $c_i \in \{-1, 0, 1\}$,
\begin{align*}
    [-H_0, [H, X]_{\abs{S}}]_{\ell - \abs{S}} - [(-H_0, H)_S, X]
    = \sum_{i=1}^{\ell^2} c_i [(-H_0, H)_{S_i}, [[-H_0, H], [(-H_0, H)_{T_i}, X]]].
\end{align*}
This is the desired form.
\end{proof}

Now we move to the proof of \cref{lem:main}.
We restate it here.

\bch*

\begin{proof}[Proof of \cref{lem:main}]
We prove the claim by induction on $s$.
When $s = 0$ the statement is trivial.
Now, fix some $s$, and denote $X = P + \ii st[H - H_0, P]$.  Now by \cref{one-peel}, we can write 
\[
    e^{-\ii H_0t} e^{\ii Ht} X e^{-\ii Ht}e^{-\ii H_0t} -  e^{\ii(H - H_0)t}X e^{-\ii(H - H_0)t} = \sum_{\ell \geq 2} \frac{(\ii t)^{\ell}}{\ell!} C_{\ell}\,.
\]
Since the terms in $X$ all intersect with $\supp(P)$, we have that $\norm{X}_F \leq (1 + \eta stK) \norm{P}_F \leq 2K\norm{P}_F$ and that $X$ is $(K+\locality)$-local.  Also, we have
\[
    \ltnorm{[H_0, H]} = \ltnorm{[H_0, H - H_0]} \lesssim \locality^2 \degree \eta 
\]
by \cref{lem:local-norm-bound1}.
Our goal is to bound the nested commutators in $C_\ell$, which contain one copy of $[H, H_0]$ and $\ell-1$ copies of either $H$ or $-H_0$.
For illustration, representative commutator in $C_4$ is shown below.
\begin{align*}
    [-H_0, [[H, -H_0], [H, X]]]
\end{align*}
We can apply \cref{lem:local-norm-bound1} and \cref{lem:local-norm-bound} repeatedly to bound each of these terms in $C_{\ell}$.  In particular, we apply \cref{lem:local-norm-bound} for the one commutator involving $[H, -H_0]$ and \cref{lem:local-norm-bound1} for all other layers of the nested commutator.  For our example, we get
\begin{align*}
    \fnorm{[-H_0, [[H, -H_0], [H, X]]]}
    &\lesssim \locality(K + 4\locality) \lonorm{H_0} \fnorm{[[H, -H_0], [H, X]]} \tag*{by \cref{lem:local-norm-bound1}}\\
    \fnorm{[[H, -H_0], [H, X]]}
    &\lesssim (10(K + \locality))^{6\locality} \ltnorm{[H, -H_0]} \fnorm{[H, X]} \tag*{by \cref{lem:local-norm-bound}}\\
    \fnorm{[H, X]}
    &\lesssim \locality(K + \locality)\lonorm{H}\fnorm{X} \tag*{by \cref{lem:local-norm-bound1}}.
\end{align*}
Combining the above gives us a bound on the nested commutator.
In general, a commutator in $C_\ell$ can be bounded, paying the $\lonorm{\cdot}$ bound $\ell-2$ times and the $\ltnorm{\cdot}$ bound once, giving a final bound of
\begin{align*}
    &\ell! (c\locality(K + \locality))^{2\ell} \degree^{\ell - 2} \cdot (10(K + \ell\locality))^{6\locality} \ltnorm{[H, -H_0]} \fnorm{X} \\
    &\leq \ell! (K + \locality)^{c'(\ell + \locality)} \degree^{\ell - 1} \ell^{c' \locality} \eta \fnorm{P},
\end{align*}
where $c$ and $c'$ are sufficiently large constants.
Since $C_\ell$ is the sum of $2^\ell \ell^2$ of such commutators, we have
\begin{align}
    &\fnorm{e^{-\ii H_0t} e^{\ii Ht} X e^{-\ii Ht}e^{\ii H_0t} -  e^{\ii(H-H_0)t}X e^{-\ii(H - H_0)t}} \nonumber\\
    &\leq \sum_{\ell \geq 2} \frac{\abs{t}^\ell}{\ell!}\fnorm{C_\ell} \nonumber \\
    &\leq \sum_{\ell \geq 2} (2^\ell \ell^2) \abs{t}^\ell (K + \locality)^{c'(\ell + \locality)} \degree^{\ell - 1} \ell^{c' \locality} \eta \fnorm{P} \nonumber \\
    &\leq t^2 \degree (K + \locality)^{\bigO{\locality}} \eta \norm{P}_F, \label{eq:error1}
\end{align}
where the last inequality follows by the assumption on $t$: for $t < 1/(100\degree(K + \locality)^{c'})$, the sum over $\ell$ is dominated by the $\ell = 2$ term, up to a factor of $\locality^{\bigO{\locality}}$.
Next, we bound the difference between evolution by $H-H_0$ with its first-order approximation.
Using triangle inequality, \cref{lem:first-order-approx}, and then \cref{lem:local-norm-bound} twice,
\begin{align}
&\fnorm{e^{\ii(H - H_0)t}X e^{-\ii(H - H_0)t} - (X + \ii t[H - H_0, P])} \nonumber\\
&\leq \fnorm{e^{\ii(H - H_0)t}X e^{-\ii(H - H_0)t} - (X + \ii t[H - H_0, X])} + \fnorm{st^2[H - H_0, P]_2} \nonumber\\
&\leq \tfrac{t^2}{2} \fnorm{[H - H_0,X]_2} + \fnorm{st^2[H - H_0, P]_2} \nonumber\\
&\leq (K + \locality)^{\bigO{\locality}} t^2 \ltnorm{H_0 - H} (\fnorm{[H - H_0,X]} + s\fnorm{[H-H_0, P]}) \nonumber\\
&\leq (K + \locality)^{\bigO{\locality}}  s t^2 \eta^2 \fnorm{P}. \label{eq:error2}
\end{align}
Now we can do the inductive step.  Let
\[
\Delta(s) = \fnorm{ (e^{-\ii H_0t} e^{\ii Ht})^{s+1} P (e^{\ii Ht}e^{-\ii H_0t})^{s+1} - (P + \ii (s + 1)t[H - H_0, P] )}
\]
By triangle inequality,
\begin{align*}
    \Delta(s+1) &\leq \fnorm{ (e^{-\ii H_0t} e^{\ii Ht})^{s+1} P (e^{\ii Ht}e^{-\ii H_0t})^{s+1} - e^{-\ii H_0t} e^{\ii Ht} (P + \ii st[H - H_0, P] ) e^{\ii Ht}e^{-\ii H_0t}} \\
        & \qquad + \fnorm{e^{-\ii H_0t} e^{\ii Ht} (P + \ii st[H - H_0, P] ) e^{\ii Ht}e^{-\ii H_0t}- (P + \ii (s + 1)t[H - H_0, P] )}
\intertext{Since $e^{\ii Ht}$ and $e^{-\ii H_0 t}$ are unitary, the first term is $\Delta(s)$.
Substituting in the definition of $X$ and combining \eqref{eq:error1} and \eqref{eq:error2}, we bound the rest:}
    \Delta(s+1) &\leq \Delta(s) + \fnorm{e^{-\ii H_0t} e^{\ii Ht} X e^{\ii Ht}e^{-\ii H_0t}- (X + \ii t[H - H_0, P] )} \\
    & \leq \Delta(s) + t^2\degree (K + \locality)^{\bigO{\locality}} \eta \norm{P}_F+ st^2 \eta^2 (K + \locality)^{\bigO{\locality}} \norm{P}_F 
\end{align*}
Combining the above over all $s$ completes the proof.
\end{proof}


\section{Estimating expectations of Pauli observables} \label{sec:shadow}

Our algorithms ultimately reduce to estimating particular trace expressions of the unknown Hamiltonian; in this section, we describe how to do this.
This is the only part of our algorithm that requires the quantum computer, and is the only way in which we access our unknown Hamiltonian.

The algorithms described in this section proceed by running the same type of simple circuit many times, non-adaptively, to generate a ``dataset'', and then post-processes the dataset to generate estimates of relevant statistics.
Our workhorse circuit is denoted $\mathcal{C}(A,v,B,Z)$, where $A,v  \in (\{\sigma_x, \sigma_y, \sigma_z\} \times \{+,-\})^\qubits$ is a Pauli eigenvector, $B \in \{\sigma_x, \sigma_y, \sigma_z\}^\qubits$ is a tensored Pauli basis, and $Z$ is an unknown unitary.
It is shown below.
\begin{align} \label{eq:the-circuit}
    \Qcircuit @R=.6em @C=0.65em {
        \push{\ket{A_1, v_1}} & \qw & \multigate{3}{Z} & \qw & \meterB{B_1} \\
        \push{\vdots} &&&& {\vdots}\\
        &&&&&& \\
        \push{\ket{A_\qubits, v_\qubits}} & \qw & \ghost{Z} &\qw & \meterB{B_\qubits}
    }
\end{align}
In words, we prepare the eigenstate associated to $(A,v)$; apply $Z$; then measure qubit $i$ in the eigenbasis of $B_i$.
We interpret the output of the circuit as a vector $w \in \{\pm 1\}^\qubits$, where $w_i$ be the associated eigenvalue of the outcome of the $B_i$ measurement.

Note that this circuit can be implemented by initializing in $\ket{0}^{\otimes \qubits}$, applying one layer of single-qubit Clifford gates, $Z$, then one more layer of single-qubit Clifford gates, followed by measurement in the computational basis.
As such, it can be done with $\qubits$ qubits, one application of $Z$, and $\bigO{\qubits}$ additional single-qubit gates. 

To analyze this circuit, we extensively use the following fact: for an $\qubits$-qubit Pauli $B \in \locals$ and a subset $S \subset [\qubits]$,
\begin{equation} \label{eq:pauli-eigenvector-sum}
    \sum_{w \in \{\pm 1\}^\qubits} w^S \ket{B, w}\bra{B, w} = B_S \otimes \id_{\overline{S}},
\end{equation}
recalling the notation $w^S = \prod_{i \in S} w_i$ and $\overline{S} = [\qubits] \setminus S$.
Using this, we can conclude the following about our workhorse.
\begin{fact} \label{fact:circuit-helper}
    Fix $A, B \in \{\sigma_x, \sigma_y, \sigma_z\}^\qubits$ and $v \in \{\pm 1\}^\qubits$.
    Let $w \in \{\pm 1\}^\qubits$ be the output of $\circuit(A,v,B,Z)$.
    Then, over the randomness of the circuit, for every $S \subset [\qubits]$,
    \begin{align*}
        \E[w^{S}]
        &= \sum_{w \in \{\pm 1\}^\qubits} \Pr[\circuit(A,v,B,Z) \text{ outputs } w] w^{S} \\
        &= \sum_{w \in \{\pm 1\}^\qubits} \tr\parens[\Big]{\ket{B,w}\bra{B,w}Z\ket{A, v}\bra{A, v}Z^\dagger} w^{S} \\
        &= \tr\parens[\Big]{(B_S \otimes \id_{\overline{S}}) Z\ket{A, v}\bra{A, v}Z^\dagger}.
    \end{align*}
\end{fact}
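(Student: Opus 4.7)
The plan is to unfold the three equalities in turn; each is essentially a one-line identity, but worth spelling out.

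First, the equality
\[
\E[w^S] = \sum_{w \in \{\pm 1\}^\qubits} \Pr[\circuit(A,v,B,Z) \text{ outputs } w]\, w^S
\]
is just the definition of expectation, together with the fact that $w^S \in \{\pm 1\}$ is a (deterministic) function of the measurement outcome $w \in \{\pm 1\}^\qubits$. The range of $w$ is finite, so no integrability issue arises.

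Second, I would apply the Born rule. The prepared state has density matrix $\ket{A,v}\bra{A,v}$, and after the unitary $Z$ the state is $Z\ket{A,v}\bra{A,v}Z^\dagger$. Measuring qubit $i$ in the eigenbasis of $B_i$ is, jointly across all $n$ qubits, a projective measurement in the orthonormal basis $\{\ket{B,w}\}_{w \in \{\pm 1\}^\qubits}$ of tensor products of Pauli eigenvectors. Hence the probability of observing outcome $w$ is
\[
\Pr[\circuit(A,v,B,Z) \text{ outputs } w] = \tr\bigl(\ket{B,w}\bra{B,w}\, Z\ket{A,v}\bra{A,v}Z^\dagger\bigr),
\]
which gives the second equality.

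Third, I would use linearity of the trace to pull the scalar $w^S$ and the finite sum over $w$ inside, obtaining
\[
\E[w^S] = \tr\!\left(\Bigl(\sum_{w \in \{\pm 1\}^\qubits} w^S \ket{B,w}\bra{B,w}\Bigr) Z\ket{A,v}\bra{A,v}Z^\dagger\right),
\]
and then invoke the identity \eqref{eq:pauli-eigenvector-sum}, namely $\sum_{w} w^S \ket{B,w}\bra{B,w} = B_S \otimes \id_{\overline{S}}$, which collapses the inner sum to the desired Pauli observable. This yields the third equality.

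There is no real obstacle here: the only nontrivial ingredient is the Pauli eigenvector identity \eqref{eq:pauli-eigenvector-sum}, which has already been stated, and the usual Born rule for projective measurements in a product Pauli basis. I would keep the argument short and cite these facts directly.
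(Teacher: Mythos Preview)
Your proposal is correct and mirrors the paper's approach exactly: the paper presents the three equalities as a self-explanatory chain immediately after stating the identity \eqref{eq:pauli-eigenvector-sum}, and your explanation (definition of expectation, Born rule for the product Pauli measurement, then linearity of trace plus \eqref{eq:pauli-eigenvector-sum}) unpacks precisely those steps.
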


In this section, we use the Iverson bracket: $\iver{P} = 1$ if $P$ is true, and $0$ otherwise.
We also use the notation $P \subseteq Q$ for $P, Q \in \locals$ to denote that $P$ can be formed from $Q$ by replacing individual qubits $Q_i$ with the identity (\cref{def:pauli-order}).

\subsection{Estimating expectations when terms are known}

First, we introduce a method for estimating expectations of the form $\frac{1}{\dims}\tr(PZQZ^\dagger)$.
This is essentially due to prior work~\cite[Lemma A.5]{hkt24}, but we reprove it here in a slightly more general form.

\begin{lemma} \label{lem:shadows}
    Given the ability to apply the unitary $Z$; locality parameters $\locality, \locality'$; and error parameters $\eps, \delta > 0$; there is a quantum algorithm which applies $Z$ at most $S = \bigO{\frac{3^{2(\locality + \locality')}}{\eps^2}\log\frac{\qubits^{\locality + \locality'}}{\delta}}$ times, uses $\bigO{S\qubits}$ additional gates, $\bigO{S\qubits}$ classical overhead, and outputs an oracle.  With probability $\geq 1-\delta$, this oracle can respond to the following type of query in $\bigO{S(\locality + \locality')}$ time on a classical computer: given $P \in \locals_{\locality}$ and $Q \in \locals_{\locality'}$, output a $\mu$ such that
    \begin{align*}
        \abs[\Big]{\mu - \tfrac{1}{\dims}\tr(PZQZ^\dagger)} < \eps.
    \end{align*}
\end{lemma}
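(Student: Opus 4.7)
The plan is to use a randomized measurement / classical shadows-style protocol. The dataset will consist of $S$ independent runs of $\circuit(A^{(j)}, v^{(j)}, B^{(j)}, Z)$ where each $A^{(j)}$ and $B^{(j)}$ is a uniformly random element of $\{\sigma_x,\sigma_y,\sigma_z\}^\qubits$ and each $v^{(j)}$ is uniform in $\{\pm 1\}^\qubits$. Each run yields an output string $w^{(j)} \in \{\pm 1\}^\qubits$. Storing the tuples $(A^{(j)}, v^{(j)}, B^{(j)}, w^{(j)})$ costs $\bigO{S\qubits}$ classical memory and pre-processing. The oracle, given a query $(P,Q)$ with $S=\supp(P)$, $T=\supp(Q)$, will return a median-of-means estimate built from the single-sample estimators
\[
    \hat\mu_j(P,Q) \;=\; 3^{|S|+|T|}\, v^{(j),T}\, w^{(j),S}\, \iver{B^{(j)}_i = P_i \,\forall i\in S}\, \iver{A^{(j)}_i = Q_i \,\forall i\in T}.
\]
Evaluating each $\hat\mu_j$ touches only the $|S|+|T| \le \locality+\locality'$ coordinates in $S\cup T$, so answering a query takes $\bigO{S(\locality+\locality')}$ time.

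The first key step is unbiasedness: I would show $\E[\hat\mu_j(P,Q)] = \frac{1}{\dims}\tr(PZQZ^\dagger)$. Conditioning on the indicators, $\Pr[B|_S = P|_S,\ A|_T = Q|_T] = 3^{-(|S|+|T|)}$, which cancels the $3^{|S|+|T|}$ prefactor. For the remaining expectation, use Fact 3.0 (the circuit identity) together with $\ket{A_i,v_i}\!\bra{A_i,v_i} = \frac12(\id + v_i A_i)$. Averaging $v_i$ against the indicator extracts $A_i/2 = Q_i/2$ at sites $i \in T$, and averaging the unused coordinates of $A$ and $B$ yields $\id/2$. The product gives the effective prepared state $(Q_T \otimes \id_{\bar T})/\dims$ and effective measured Pauli $P_S \otimes \id_{\bar S}$, which combine via \cref{fact:circuit-helper} to exactly $\frac{1}{\dims}\tr(PZQZ^\dagger)$.

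Second, I would bound the fluctuations. The estimator is deterministically bounded by $3^{|S|+|T|} \le 3^{\locality+\locality'}$, so $\mathrm{Var}(\hat\mu_j) \le 9^{\locality+\locality'}$. Split the $S$ samples into $\bigTheta{\log(\qubits^{\locality+\locality'}/\delta)}$ blocks, average within each, and take the median across blocks. By Chebyshev on each block plus the standard median-of-means analysis, taking $S = \bigO{9^{\locality+\locality'}\eps^{-2}\log(\qubits^{\locality+\locality'}/\delta)}$ suffices to guarantee $\eps$ accuracy for any \emph{fixed} query with probability $\ge 1 - \delta/\qubits^{\locality+\locality'}$. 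A union bound over the $\bigO{\qubits^{\locality+\locality'}}$ possible queries in $\locals_\locality \times \locals_{\locality'}$ then yields simultaneous correctness with probability $\ge 1-\delta$.

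The main subtlety, rather than any deep obstacle, is being careful about which randomness averages out. In particular, the indicator $\iver{A^{(j)}|_T = Q|_T}$ pins $A$ only on $T$, so the off-support Pauli bases must still average to the identity; this is why multiplying by $v^{(j),T}$ (and only $T$) is what picks out the operator $Q_T$ and not some other Pauli. Additionally, I would note that this is essentially a two-sided generalization of \cite[Lemma A.5]{hkt24}: the proof there is one-sided (random measurement bases only), and the extension here uses random input Pauli eigenstates in the same spirit, paying an extra factor of $3^{|T|}$ in variance for doing so. Once these ingredients are in place, the gate count ($\bigO{S\qubits}$ single-qubit Cliffords plus $S$ applications of $Z$) and the query time are immediate.
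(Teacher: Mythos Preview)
Your proposal is correct and essentially matches the paper's argument: same random-Pauli-eigenstate preparation, same random-Pauli-basis measurement, same indicator-weighted estimator $(3v)^{\supp(Q)}(3w)^{\supp(P)}\iver{Q\subseteq A,\,P\subseteq B}$, same unbiasedness computation via \cref{fact:circuit-helper}, and the same union bound over $\locals_\locality\times\locals_{\locality'}$. The only cosmetic difference is that the paper uses Hoeffding's inequality directly (the estimator is bounded by $3^{\locality+\locality'}$, so no variance bound or median-of-means is needed), which is slightly cleaner than your Chebyshev-plus-median route but lands at the same $S$.
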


\begin{proof}
We analyze \cref{algo:shadow}.
The pre-processing algorithm runs in time linear in the size of the dataset, which is $\bigO{S\qubits}$.
The query algorithm runs in $\bigO{\locality + \locality'}$ time per sample, giving $\bigO{S(\locality + \locality')}$ in total, since the sample only needs to be queries on the supports of $X$ and $P$.
It remains to show correctness.

Consider some input query $X \in \locals_\locality$, $P \in \locals_{\locality'}$.
Then the output, $\mu$, is an average over $S$ i.i.d.\ random variables of the form $(3v)^{\supp(X)}(3w)^{\supp(P)}\iver{X \subseteq A,\,P \subseteq B}$.
These random variables are unbiased estimators of our desired quantity:
\begin{align*}
    \E[\mu] &= \expec[\Big]{A,v,B}{\expec{w}{(3v)^{\supp(X)}(3w)^{\supp(P)}\iver{X \subseteq A,\,P \subseteq B} \mid A,v,B}} \\
    &= \expec[\Big]{A,v,B}{\expec{w}{v^{\supp(X)}w^{\supp(P)} \mid A,v,B} \,\Big|\, X \subseteq A,\,P \subseteq B} \\
    &= \expec[\Big]{A,v,B}{\tr\parens[\Big]{(B_{\supp(P)} \otimes \id_{\overline{\supp(P)}}) Z (v^{\supp(X)} \ket{A, v}\bra{A, v})Z^\dagger} \,\Big|\, X \subseteq A,\,P \subseteq B} \\
    &= \tr\parens[\Big]{P Z \expec[\Big]{A,v}{v^{\supp(X)} \ket{A, v}\bra{A, v} \,\Big|\, X \subseteq A}Z^\dagger} \\
    &= \frac{1}{\dims}\tr\parens[\Big]{P Z X Z^\dagger}.
\end{align*}
Further, each random variable is bounded in magnitude by $3^{\locality+\locality'}$.
Thus, by Hoeffding's inequality, for $S \gtrsim \frac{3^{2(\locality + \locality')}}{\eps^2}\log\frac{\qubits^{\locality + \locality'}}{\delta}$, the output $\mu$ satisfies $\abs{\mu - \frac{1}{\dims}\tr\parens{P Z X Z^\dagger}} < \eps$ with probability $\geq 1 - \delta/\qubits^{\locality+\locality'}$.
Since there are at most $\qubits^{\locality+\locality'}$ choices of $X$ and $P$, by a union bound we can conclude that the algorithm will be correct for \emph{all} choices with probability $\geq 1-\delta$.
\end{proof}

\begin{mdframed}
\begin{algorithm}[Classical shadows for operators]
    \label{algo:shadow}\mbox{}
    \begin{description}
    \item[Input:] Black-box ability to apply an unknown $\qubits$-qubit unitary $Z$; locality parameters $\locality, \locality'$; error parameters $\eps, \delta > 0$.
    \item[Output:] A oracle that, with probability $\geq 1-\delta$, can successfully respond to the following queries: given Paulis $X \in \locals_\locality$ and $P \in \locals_{\locality'}$, output an estimate $\mu$ such that
    \begin{align*}
        \abs[\Big]{\mu - \tfrac{1}{\dims}\tr(Z^\dagger PZQ)} < \eps.
    \end{align*}
    \item[Pre-processing (Generating the dataset):]\mbox{}
    \begin{algorithmic}[1]
        \State Let $S \gets \bigTheta{\frac{3^{2(\locality + \locality')}}{\eps^2}\log\frac{\qubits^{\locality + \locality'}}{\delta}}$;
        \For{$i \in [S]$}
            \State Sample a uniformly random Pauli eigenvector $A, v \sim (\{\sigma_x, \sigma_y, \sigma_z\} \times \{+, -\})^{\qubits}$;
            \State Sample a uniformly random Pauli string $B \sim \{\sigma_x, \sigma_y, \sigma_z\}^{\qubits}$;
            \State Run the circuit $\circuit(A, v, B, Z)$~\eqref{eq:the-circuit};
            \State Record the output $w \in \{\pm 1\}^n$;
        \EndFor
        \State Output the dataset $\dataset = \{(A^{(k)}, v^{(k)}, B^{(k)}, w^{(k)})\}_{k \in [S]}$.
    \end{algorithmic}
    \item[Query subroutine:] Given input $X \in \locals_\locality$, $P \in \locals_{\locality'}$, and the dataset $\dataset$, outputs estimate $\mu$. \mbox{}
    \begin{algorithmic}[1]
        \State For every $(A, v, B, w) \in \dataset$, compute the associated estimator
        \[
            \begin{cases}
                (3v)^{\supp(X)}(3w)^{\supp(P)}
                    & \text{if } X \subseteq A \text{ and } P \subseteq B \\
                0 & \text{otherwise}
            \end{cases} \vspace{1em}
        \]
        \State Output the average of this estimator over all the elements of $\dataset$;
    \end{algorithmic}
    \end{description}
\end{algorithm}
\end{mdframed}

\subsection{Finding unknown terms time-efficiently} \label{sec:structure-fast}

\begin{theorem} \label{thm:structure}
    Suppose we are given a locality parameter $\locality$, error parameters $\eps, \delta > 0$, and the ability to apply a unitary $Z$ which ``approximates the real-time evolution $e^{\ii \hat{H}}$ up to first-order'', in the sense that there is some $\hat{H} = \sum_{Q \in \locals_{\locality}} \lambda_Q Q$ such that, for every $P \in \locals_1$,
    \begin{equation} \label{eq:thm-structure}
        Z^\dagger PZ = P + [\ii\hat{H}, P] + E^{(P)}
        \qquad \text{with } \frac{1}{\sqrt{\dims}}\fnorm{E^{(P)}} < \frac{\eps}{20^{\locality+2}}
    \end{equation}
    Then there is a quantum algorithm (\cref{algo:structure}) which outputs an estimate to the coefficient vector $\tilde{\lambda}$ such that, with probability $\geq 1-\delta$, $\infnorm{\lambda - \tilde{\lambda}} < \eps$.

    This algorithm applies the unknown unitary $\bigO{\frac{e^{8\locality}}{\eps^2}\log\frac{\qubits}{\delta}}$ times (with depth-3 non-adaptive prepare-apply-measure circuits) and $\bigO{\qubits^2 \sparse_{\eps} \frac{e^{15\locality}}{\eps^2}\log\frac{\qubits}{\delta}}$ additional running time on a classical computer, where $\sparse_\eps$ is the effective sparsity of $\hat{H}$ (\cref{def:eff-sparse}).
    Further, $\tilde{\lambda}$ has sparsity $\bigO{e^{7\locality}\sparse_{\eps} \qubits}$.
\end{theorem}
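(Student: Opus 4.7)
The plan is to reduce the task of recovering $\hat H$ to estimating Pauli coefficients of a family of observables of the form $O_P = Z^\dagger P Z$ for single-qubit $P$, and to combine the Goldreich--Levin-style oracle of \cref{prop:gl-general-intro} with the dual classical-shadows estimator of \cref{lem:shadows} to perform structure learning in time $\bigOt{\qubits^2}$. By hypothesis \eqref{eq:thm-structure}, for each $P \in \locals_1$ we have $O_P = P + [\ii\hat H, P] + E^{(P)}$; expanding the commutator in the Pauli basis, the coefficient of a non-identity Pauli $Q$ in $O_P$ equals $\pm 2\ii \lambda_{QP}$ whenever $Q$ anticommutes with $P$ (since this forces $QP$ to be the unique term in $\hat H$ whose commutator with $P$ has a $Q$-component), plus an error controlled by $\fnorm{E^{(P)}}/\sqrt{\dims} < \eps/20^{\locality+2}$. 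Since every non-identity Pauli anticommutes with at least one single-qubit Pauli, ranging $P$ over $\locals_1$ recovers every coefficient of $\hat H$, so it suffices to identify and estimate the Paulis $Q$ with $|c_Q^{(P)}| \gtrsim \eps$ in each $O_P$.

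For each $P$, I would run a top-down tree search using the weight oracle of \cref{prop:gl-general-intro} applied to $O_P$, which returns estimates of $w_P(X) = \sum_{Q \supseteq X} |c_Q^{(P)}|^2/6^{|\supp(Q)|}$ to constant additive error after $\bigO{\log \qubits}$ applications of $Z$ and $\bigOt{\qubits}$ preprocessing. Starting from the $\bigO{\qubits}$ single-qubit non-identity Paulis and iterating up to support size $\locality$, at each level I would enumerate the one-site extensions $X'$ of every surviving $X$, query $w_P(X')$, and keep those for which $w_P(X') \geq \tau$ with threshold $\tau \asymp \eps^2 / 6^\locality$. The surviving set at depth $\locality$ is a candidate list $\mathcal{C}_P \subset \locals_\locality$; I would then feed each $Q \in \mathcal{C}_P$ into \cref{lem:shadows} to estimate $\tfrac{1}{\dims}\tr(P Z Q Z^\dagger) = c_Q^{(P)}$ to additive error $\eps/e^{\bigO{\locality}}$, from which $\tilde \lambda_{QP}$ is read off directly. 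Any coefficient $\lambda_E$ whose associated $Q = EP$ does not appear in any $\mathcal{C}_P$ is rounded to zero, which is valid because the threshold is chosen so that missing coefficients are necessarily of magnitude $< \eps$. Crucially, the randomised-Clifford datasets used by both the GL oracle and \cref{lem:shadows} simultaneously support all choices of input single-qubit $P$, so the overall application count of $Z$ remains $\bigO{e^{\bigO{\locality}} \eps^{-2} \log(\qubits/\delta)}$.

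The main obstacle is ensuring that the candidate set stays of size $\bigO{e^{\bigO{\locality}} \sparse_\eps}$ at every level, which is what drives the $\bigOt{\qubits^2 \sparse_\eps}$ classical overhead rather than the naive $\qubits^\locality$. This needs two quantitative claims about $w_P(\cdot)$. First, monotonicity: if $|c_Q^{(P)}| \geq \eps$ then every $X \subseteq Q$ satisfies $w_P(X) \geq \eps^2/6^{\locality}$, so every relevant $Q$ is reachable from some surviving singleton. Second, a level-$\ell$ decay estimate $|\{X : |\supp X| = \ell,\, w_P(X) \geq \tau\}| \lesssim e^{\bigO{\locality}} \sparse_\eps$, proven by decomposing
\[
\sum_{X:\,|\supp X|=\ell} w_P(X) = \sum_{Q} |c_Q^{(P)}|^2 \binom{|\supp Q|}{\ell} \big/ 6^{|\supp Q|}
\]
and bounding the right-hand side using $|c_Q^{(P)}|^2 \lesssim \lambda_{QP}^2 + \fnorm{E^{(P)}}^2/\dims$ together with the definition of effective sparsity; the $6^{-|\supp Q|}$ factor baked into the GL weights is precisely the damping that makes the bound $\locality$-dependent only. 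A secondary point is that the aggregated contribution of the error terms $E^{(P)}$ to $w_P$ must stay a constant factor below $\tau$, which is exactly why \eqref{eq:thm-structure} assumes $\fnorm{E^{(P)}}/\sqrt{\dims} < \eps/20^{\locality+2}$. Once these estimates are in place, the claimed query count, running time, and output sparsity follow by tracking the constants through each of the $\locality$ levels of the tree search.
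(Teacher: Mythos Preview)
Your high-level plan---tree search over Paulis using a Goldreich--Levin oracle, then shadow estimation of the survivors, parallelised over the $3\qubits$ choices of $P\in\locals_1$---is exactly the paper's strategy. But two of the technical steps you sketch do not work as written, and both are where the paper does something non-obvious.

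First, the oracle mismatch. You invoke \cref{prop:gl-general-intro}, which returns $w_P(X)=\sum_{Q\supseteq X}c_Q^2/6^{|\supp Q|}$ to \emph{constant} additive error with $\bigO{\log\qubits}$ applications of $Z$, yet you threshold at $\tau\asymp\eps^2/6^{\locality}$. These are incompatible: a $0.01$-accurate estimate cannot distinguish $w_P(X)\geq\tau$ from $w_P(X)=0$ once $\eps$ is small. If you instead tighten the accuracy using the full \cref{lem:gl-general} with $\gamma\asymp\eps^2/6^{\locality}$, the application count becomes $\bigO{e^{\bigO{\locality}}/\gamma^3}=\bigO{e^{\bigO{\locality}}/\eps^6}$, destroying the $\eps^{-2}$ bound in the theorem. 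The paper avoids this by using a different oracle, \cref{prop:structure-query}, which is a Pass/Fail decision procedure rather than a direct estimator: it tests $\sum_{Q\supseteq X,\,|\supp Q|\leq\locality}c_Q^2\gtrless\gamma^2$ (with separate control on the high-degree tail) and achieves $\bigO{e^{\bigO{\locality}}/\gamma^2}$ applications via anti-concentration (Paley--Zygmund plus hypercontractivity on the low-degree part of a suitable boolean polynomial). This $\gamma^{-2}$ versus $\gamma^{-3}$ distinction is the crux.

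Second, your Markov argument for bounding the survivor count by $\sparse_\eps$ fails by a factor of $1/\eps^2$. You propose $|\{X:w_P(X)\geq\tau\}|\leq\tau^{-1}\sum_X w_P(X)$ and then bound $\sum_X w_P(X)\lesssim\sum_Q\lambda_{QP}^2\cdot(\text{something})$, invoking effective sparsity. But $\sum_Q\lambda_Q^2$ restricted to terms touching $\supp(P)$ is $\ltnorm{\hat H}^2$, which is \emph{not} controlled by $\eps^2\sparse_\eps$: a single coefficient of size $1$ already makes $\ltnorm{\hat H}^2\geq 1$ while $\sparse_\eps=1$, so Markov gives $|\{\cdots\}|\lesssim 1/\eps^2$ rather than $\bigO{1}$. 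The paper's fix (equation \eqref{eq:structure-QP-set}) is to observe that if $\sum_{Q\supseteq X}(\hat\lambda_Q^{(P)})^2\geq\tau'$ then also $\sum_{Q\supseteq X}\min((\hat\lambda_Q^{(P)})^2,\tau')\geq\tau'$, because either a single term already exceeds $\tau'$ or clipping does nothing. The clipped sum \emph{is} controlled by $\ltnorm{\hat H^{\leq\sqrt{\tau'}}}^2$, which is exactly what effective sparsity bounds. This clipping step is the missing idea, and without it the classical running-time and output-sparsity claims do not follow.
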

\begin{remark} \label{rmk:sparsity}
Even though there are $n^{\locality}$ possible terms $Q \in \locals_{\locality}$, the output of our algorithm will be sparse and all of the coefficients that our algorithm does not ``explicitly compute"  will be set to $0$ by default.  Our algorithm will only actually compute non-zero coefficients which allows us to get a fixed $\poly(\qubits)$ runtime (instead of $\qubits^{\locality}$).
\end{remark}

Our prior mechanism for querying $Z^\dagger P Z$, the shadow oracle, does not suffice, since this only gives an estimate to a single $\frac{1}{\dims} \tr(Z^\dagger PZQ)$, corresponding to a single Pauli coefficient of $\hat{H}$.
Since there can be are $n^\locality$ coefficients which could be non-zero, structure learning requires $n^\locality$ time just using the shadow oracle.
To improve on this, we introduce a different kind of query to $Z^\dagger P Z$, where given some Pauli $X$, we detect whether some coefficient ``above'' $X$ is non-zero.
That is, the goal is to find some $Q \supseteq X$ such that $\abs{\frac{1}{\dims}\tr(Z^\dagger PZ Q)}$ is large.
To distinguish these from shadow queries, we call these new queries \emph{GL queries}.
We show that GL queries can be answered efficiently.

\begin{proposition}[Constructing an oracle for GL queries] \label{prop:structure-query}
Suppose we have black-box access to the unitary gate $Z \in \mathbb{C}^{\dims \times \dims}$, and are given as input the locality parameter $\locality$ and error parameters $\gamma, \delta > 0$. 
Then there is a quantum algorithm which applies $Z$ at most $\bigO{\frac{e^{8\locality}}{\gamma^2}\log\frac{\qubits}{\delta}}$ times and, after $\bigO{\qubits\frac{e^{8\locality}}{\gamma^2}\log\frac{\qubits}{\delta}}$ classical overhead, outputs an oracle.  With probability $\geq 1-\delta$, this oracle can respond to the following type of query in $\bigO{\frac{e^{8\locality}}{\gamma^2}\log\frac{\qubits}{\delta}}$ time on a classical computer: given $X \in \locals_\locality$ and $P \in \locals_1$, respond ``Pass'' or ``Fail'' with the following guarantees.
Let $c_Q \deq \frac{1}{\dims} \tr(Z^\dagger PZQ)$.
\begin{align}
    &\text{If}& \sum_{\substack{Q \supseteq X \\ \abs{\supp(Q)} \leq \locality}} c_Q^2 \geq \gamma^2
    \text{ and } \sum_{\substack{Q \supseteq X \\ \abs{\supp(Q)} > \locality}} c_Q^2 &< \frac{\gamma^2}{400^{\locality+2}}, &\text{ output ``Pass'';}\label{eq:structure-success}\\
    &\text{If}& \sum_{Q \supseteq X} c_Q^2 &< \frac{\gamma^2}{400^{\locality+1}}, &\text{ output ``Fail''.} \label{eq:structure-fail}
\end{align}
If neither condition holds, the algorithm may output either ``Pass'' or ``Fail''.
\end{proposition}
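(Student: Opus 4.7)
The plan is to adapt Goldreich--Levin to our setting by collecting a dataset of classical-shadows-style experiments and then computing a quadratic post-processing estimator to answer GL queries. Concretely, we would run $T = \bigTheta{e^{8\locality}\log(\qubits/\delta)/\gamma^2}$ independent experiments $\circuit(A^{(k)}, v^{(k)}, B^{(k)}, Z)$ with $A^{(k)}, B^{(k)}$ uniform on $\{\sigma_x, \sigma_y, \sigma_z\}^{\qubits}$ and $v^{(k)}$ uniform on $\{\pm 1\}^{\qubits}$, storing the outputs $w^{(k)}$. By \cref{fact:circuit-helper}, the individual shadow estimator
\[
    \eta^{(k)}_Q \deq (3v^{(k)})^{\supp(Q)}(3w^{(k)})^{\supp(P)}\iver{Q \subseteq A^{(k)},\, P \subseteq B^{(k)}}
\]
is unbiased for $c_Q = \tfrac{1}{\dims}\tr(Z^\dagger PZ Q)$, and for two independent samples the product $\eta^{(k_1)}_Q\eta^{(k_2)}_Q$ is an unbiased estimator of $c_Q^2$. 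Pairing up the dataset, the natural estimator is $\hat E(X,P) \deq \frac{2}{T}\sum_{k \leq T/2}\sum_{Q \supseteq X} w(Q)\,\eta^{(k)}_Q\eta^{(k+T/2)}_Q$ for a suitable weighting $w(Q)$.

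The key computational idea is that this sum over $Q \supseteq X$ admits a closed-form factorization over the sites of $\bar S = [\qubits] \setminus \supp(X)$, so we can evaluate the estimator without iterating over the exponentially many $Q$'s. On each site $i \in \bar S$, the joint constraint $Q \subseteq A^{(k_1)}$ and $Q \subseteq A^{(k_2)}$ forces either $Q_i = \id$ or $Q_i = A^{(k_1)}_i = A^{(k_2)}_i$, so with a weighting $w(Q) = 1/c^{|\supp(Q)|}$ for an appropriate constant $c$ the inner sum becomes
\[
    \sum_{Q \supseteq X} w(Q)\,\eta^{(k_1)}_Q\eta^{(k_2)}_Q = \iver{X \subseteq A^{(k_1)},A^{(k_2)}}\iver{P \subseteq B^{(k_1)},B^{(k_2)}}\cdot(\text{local factors on }\supp(X),\supp(P))\cdot\prod_{i \in \bar S}\bigl(1 + (9/c)\,v^{(k_1)}_iv^{(k_2)}_i\iver{A^{(k_1)}_i = A^{(k_2)}_i}\bigr).
\]
Each pair can thus be computed in $\bigO{\qubits}$ time, and by caching the $\bar S$-product per pair at pre-processing (using $\bigO{\qubits T}$ total work), each later query needs only $\bigO{T}$ aggregation time over the pairs, matching the claimed query complexity.

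The main technical obstacle is variance control. The na\"ive second moment of $\prod_{i \in \bar S}r_i$ grows exponentially in $|\bar S|$, but the weighting $1/c^{|\supp(Q)|}$ is chosen precisely to dampen this: in the second-moment expansion of $\hat E$ the random signs $v^{(k_1)}v^{(k_2)}$ kill all cross-terms $\eta_Q^{(\cdot)}\eta_{Q'}^{(\cdot)}$ with $\supp(Q)\triangle \supp(Q') \neq \varnothing$, and the diagonal contribution is $\sum_Q \bigO{(9/c^2)^{|\supp(Q)|}} \cdot c_Q^2$, which is bounded independently of $\qubits$ once $c$ is a large enough constant (e.g., $c = 6$). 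A Chebyshev bound with $T$ samples then yields concentration of $\hat E$ within additive error proportional to $\gamma^2 / 400^{\locality+2}$ (on the weighted scale) with failure probability $\delta / \qubits^{\bigO{\locality}}$ per query, and a union bound over the $\bigO{\qubits^{\locality+1}}$ possible $(X,P)$ delivers the overall high-probability guarantee. The Pass/Fail threshold is then chosen in the gap created by the weighting: the Pass condition forces the weighted sum above a threshold proportional to $\gamma^2/c^{\locality}$ (since $\sum_{Q \supseteq X, |\supp(Q)| \leq \locality} c_Q^2 \geq \gamma^2$ and the high-support contribution to $\hat E$ is bounded by $\gamma^2/400^{\locality+2}$), while the Fail condition forces it below $\gamma^2/(c^{\locality}\cdot 400)$; any threshold in the gap distinguishes the two cases, and the estimator's additive error lies inside this gap by our choice of $T$.
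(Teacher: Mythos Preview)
Your construction has a genuine gap in the variance analysis. You claim that after the sign-cancellation the diagonal contribution to the second moment of a single pair is $\sum_Q \bigO{(9/c^2)^{|\supp(Q)|}}\,c_Q^2$, but this is not what one gets. For a single sample, $(\eta^{(k)}_Q)^2 = 9^{|\supp(Q)|+|\supp(P)|}\iver{Q\subseteq A,\;P\subseteq B}$ has expectation $3^{|\supp(Q)|+|\supp(P)|}$, with no factor of $c_Q$ anywhere; the coefficients $c_Q$ enter only through the first moment. Hence the second moment of the paired estimator is
\[
\sum_{Q\supseteq X} w(Q)^2\,\bigl(\E[(\eta^{(k)}_Q)^2]\bigr)^2
=9^{|\supp(P)|}\sum_{Q\supseteq X}\Bigl(\tfrac{9}{c^2}\Bigr)^{|\supp(Q)|}
\;\asymp\;\Bigl(1+\tfrac{27}{c^2}\Bigr)^{\,\qubits-|\supp(X)|},
\]
since the number of $Q\supseteq X$ with $|\supp(Q)|=|\supp(X)|+j$ is $\binom{\qubits-|\supp(X)|}{j}3^{j}$. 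Equivalently, your own product formula makes this visible: each factor $r_i=1+(9/c)\,v^{(k_1)}_iv^{(k_2)}_i\iver{A^{(k_1)}_i=A^{(k_2)}_i}$ has $\E[r_i^2]=1+27/c^2$, so $\E\bigl[\prod_{i\in\bar S}r_i^2\bigr]=(1+27/c^2)^{|\bar S|}$. No constant choice of $c$ makes this bounded in $\qubits$, so Chebyshev with $T=\bigO{e^{8\locality}\log(\qubits/\delta)/\gamma^2}$ samples cannot deliver the stated accuracy.

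The paper avoids this blow-up by \emph{not} forming a quadratic estimator of $\sum_{Q\supseteq X}c_Q^2$. Instead it uses a two-level sampling scheme: draw a random partition $T\subset[\qubits]$, freeze the Pauli eigenvector on $T$, and take $q$ fresh inner samples on $\overline T$. The inner average $\mu_k$ is then a \emph{linear} combination $\sum_Q c_Q\,v^{\supp(Q)\cap T}\iver{Q_T\subseteq A_T,\;Q_{\overline T}=X_{\overline T}}$, and the decision is made by thresholding $|\mu_k|$ rather than estimating a weighted $\ell_2$-mass. Anti-concentration of $|\mu_k|$ under the Pass condition is obtained via Paley--Zygmund together with the Bonami lemma (hypercontractivity) applied to the degree-$\le\locality$ part of $\bar\mu_k$ as a polynomial in $v_T$; this is precisely where the low-degree assumption is exploited, and it is the missing ingredient in your approach.
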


\begin{remark}
    This algorithm is able to determine, given some $X$, whether there is a $\locality$-local $Q$ where $Q \supseteq X$ and $\frac{1}{\dims}\tr(Z^\dagger PZQ)$ is $\gamma$-far from zero.
    Though we only prove guarantees under certain conditions, our general strategy can be used to get an estimator to
    \begin{align*}
        \sum_{Q \supseteq X} \frac{c_Q^2}{6^{\abs{\supp(Q)}}},
    \end{align*}
    which can be seen as a general measure of how much of the mass of $Z^\dagger P Z$ is placed on Pauli terms which contain $X$.
    See \cref{lem:gl-general} for a version of this statement.
\end{remark}

First, we will show how to prove \cref{thm:structure} assuming \cref{prop:structure-query}.  The algorithm for this is described in \cref{algo:structure}.  We will then prove \cref{prop:structure-query} afterwards.

\begin{mdframed}
\begin{algorithm}[Time-efficient structure learning from real-time evolution]
    \label{algo:structure}\mbox{}
    \begin{description}
    \item[Input:] Black-box ability to apply an unknown $\qubits$-qubit unitary $Z$; locality parameter $\locality$; and error parameters $\eps, \delta > 0$.
    \item[Output:] An estimate $\tilde{\lambda}$ to the Hamiltonian coefficients corresponding to $Z$.
    \item[Shadow oracle:] Given Paulis $X \in \locals_\locality$ and $P \in \locals_1$, outputs an estimate $\mu$ such that $\abs{\mu - \frac{1}{\dims}\tr(Z^\dagger PZX)} < \eps$.
    \item[GL oracle:] Given Paulis $X \in \locals_\locality$ and $P \in \locals_1$, outputs ``Pass'' or ``Fail'' with the guarantees described in \eqref{eq:structure-success} and \eqref{eq:structure-fail}, with $\gamma \gets \eps$.
    \item[Procedure:]\mbox{}
    \begin{algorithmic}[1]
        \State Perform \cref{algo:shadow} with parameters $(\locality, \locality', \eps, \delta) \leftarrow (\locality, 1, \eps, \delta/2)$ to generate the shadow oracle with probability $\geq 1-\delta/2$ (see above).
        \State Perform \cref{algo:structure-query} with parameters $(\locality, \gamma, \delta) \leftarrow (\locality, \eps, \delta/2)$ to generate the GL oracle with probability $\geq 1-\delta/2$ (see above).
       
        \State Initialize $\tilde{\lambda} \gets \vec{0}$;
        \Comment{This is our running coefficient vector estimate}
        \ForAll{1-local $P \in \locals_1$}
            \Comment{First, find the non-zero coefficients with GL queries}
            \State Let $\mathcal{Q}_1^{(P)} \gets \{Q \in \locals_1 \mid \supp(Q) = \supp(P), Q \neq P\}$;
            \For{$k$ from 2 to $\locality$}
                \State Initialize $\mathcal{Q}_k^{(P)} \gets \varnothing$;
                \ForAll{$Q \in \locals_k \setminus \locals_{k-1}$ such that $Q \supseteq Q'$ for some $Q' \in \mathcal{Q}_{k-1}^{(P)}$}
                    \State Query the GL oracle with $Q$ and $P$;
                    \State If the oracle returns ``Pass'', add $Q$ to $\mathcal{Q}_k^{(P)}$;
                \EndFor
            \EndFor
            \State Let $\mathcal{Q}^{(P)} \gets \mathcal{Q}_1^{(P)} \cup \dots \cup \mathcal{Q}_\locality^{(P)}$.
            \ForAll{$Q \in \mathcal{Q}^{(P)}$}
                \Comment{Then, estimate non-zero coefficients with shadow queries}
                \State Query $Q, P$ with the shadow oracle to get the estimate $\tilde{\lambda}_Q^{(P)}$, and set
                \[
                    \tilde{\lambda}_{R} \leftarrow \frac{(-1)^b}{2}\tilde{\lambda}_{Q}^{(P)},
                \]
                where $R \in \locals$ such that $R = (-1)^{b}\ii PQ$; \label{compute-coeff} 
            \EndFor
        \EndFor
        \State Output $\tilde{\lambda}$.
    \end{algorithmic}
    \end{description}
\end{algorithm}
\end{mdframed}

\begin{remark}
Note that in \cref{compute-coeff}, some coefficient $\tilde{\lambda}_R$ might be set multiple times as we enumerate over all $1$-local $P \in \locals_1$.  This is fine as it will be clear from the analysis that all of the estimates are consistent up to some $\eps$ error. 
\end{remark}

\begin{proof}[Proof of \cref{thm:structure}]
We analyze \cref{algo:structure}.
Let $\hat{\lambda}^{(P)}$ be the coefficient vector for $[\ii\hat{H}, P]$.
\begin{align} \label{eq:structure-local-lamb}
    \hat{\lambda}_Q^{(P)} = \begin{cases}
        2(-1)^b\hat{\lambda}_{R} & \ii PQ = (-1)^b R \\
        0 & \text{otherwise}
    \end{cases}
\end{align}
Since $\hat{\lambda}_Q^{(P)}$ is zero whenever $\supp(Q)$ does not contain $\supp(P)$, it follows that $\norm{\hat{\lambda}^{(P)}} \leq 2\ltnorm{\hat{H}}$.
Similarly, let $E^{(P)} = \sum_{Q \in \locals} \eps_Q^{(P)} Q$.
By assumption, $\norm{\eps^{(P)}} = \frac{1}{\sqrt{\dims}}\fnorm{E^{(P)}} < \eps/20^{\locality + 2}$.
Let $c_Q^{(P)} \deq \frac{1}{\dims}\tr(Z^\dagger PZQ)$.
By the assumption \eqref{eq:thm-structure},
\begin{align*}
    c_Q^{(P)} = \tfrac{1}{\dims}\tr(Z^\dagger PZQ)
    = \tfrac{1}{\dims}\tr\parens[\Big]{(P + [\ii\hat{H}, P] + E^{(P)})Q}
    = \iver{P = Q} + \hat{\lambda}_Q^{(P)} + \eps_Q^{(P)}.
\end{align*}
The first two steps of the procedure are to construct the oracles from \cref{algo:shadow} and \cref{algo:structure-query}.
The oracles give us certain kinds of access to the coefficient vector $\hat{\lambda}^{(P)} + \eps^{(P)}$.

We begin by bounding the number of calls to the oracles.
In order to do this, we bound the size of $\mathcal{Q}^{(P)}$, since the number of calls to the GL oracle and shadow oracle are $\bigO{\qubits \sum_{P \in \locals_1} \abs{\mathcal{Q}^{(P)}}}$ and $\bigO{\sum_{P \in \locals_1} \abs{\mathcal{Q}^{(P)}}}$, respectively.
Because of our guarantee of when the GL oracle outputs ``Fail'', \eqref{eq:structure-fail}, we know that the set $\mathcal{Q}^{(P)}$ is contained in the set
\begin{align*}
    \braces[\Big]{X \in \locals \setminus \{\id, P\} : \abs{\supp(X)} \leq \locality \text{ and }
    \sum_{Q \supseteq X} (c_Q^{(P)})^2 \geq \eps^2/400^{\locality+1}}.
\end{align*}
Note that we define $\mathcal{Q}^{(P)}$ such that it does not contain $P$.
Because of our bound on $\norm{\eps^{(P)}}$, $\norm{c^{(Q)}} = \norm{\hat{\lambda}^{(P)} + \eps^{(P)}}$ can only be large when $\hat{\lambda}^{(P)}$ is.
In other words, the set above, and consequently $\mathcal{Q}^{(P)}$, is contained in the set
\begin{align*} 
    \braces[\Big]{X \in \locals \setminus \{\id, P\} : \abs{\supp(X)} \leq \locality \text{ and }
    \sum_{\substack{Q \supseteq X}} (\hat{\lambda}_Q^{(P)})^2 \geq \eps^2/400^{\locality+2}}.
\end{align*}
Finally, we make one more modification to this set, noting that we can ``clip'' large coefficients of $\hat{\lambda}^{(P)}$ without changing the condition.
\begin{align} \label{eq:structure-QP-set}
    \mathcal{Q}^{(P)} \subseteq \braces[\Big]{X \in \locals \setminus \{\id, P\} : \abs{\supp(X)} \leq \locality \text{ and }
    \sum_{\substack{Q \supseteq X}} \min((\hat{\lambda}_Q^{(P)})^2, \eps^2/400^{\locality+2}) \geq \eps^2/400^{\locality+2}}.
\end{align}
The sum of all the expressions above is bounded: 
\begin{align*}
    & \sum_{X \in \locals_\locality \setminus \{\id, P\}} \sum_{Q \supseteq X} \min((\hat{\lambda}_Q^{(P)})^2, \eps^2/400^{\locality+2}) \\
    &\leq \sum_{Q \in \locals} 2^\locality \min((\hat{\lambda}_Q^{(P)})^2, \eps^2/400^{\locality+2}) \\
    &\leq 2^{\locality + 2}\ltnorm{\hat{H}^{\leq \eps/20^{\locality+2}}}^2,
\end{align*}
where the first step uses that $\hat{\lambda}_Q^{(P)}$ is only non-zero for $\locality$-local $Q$, and the second step uses \eqref{eq:structure-local-lamb} and \cref{def:eff-sparse}.
Consequently, by the pigeonhole principle, at most
\begin{align*}
    \frac{2^{\locality + 2}\ltnorm{\hat{H}^{\leq \eps/20^{\locality+2}}}^2}{\eps^2/400^{\locality+2}}
    \lesssim 2^\locality \sparse_{\eps / 20^\locality}
    \lesssim 800^\locality \sparse_\eps
\end{align*}
many $X$'s satisfy the criterion in \eqref{eq:structure-QP-set}, thus bounding $\abs{\mathcal{Q}^{(P)}}$.
We can conclude that the algorithm calls the GL oracle at most $\bigO{800^\locality \qubits^2 \sparse_{\eps}}$ times; this costs
\begin{align*}
    \bigO[\Big]{800^\locality \qubits^2 \sparse_{\eps} \cdot \frac{e^{8\locality}}{\eps^2}\log\frac{\qubits}{\delta}}
    = \bigO[\Big]{\qubits^2 \sparse_{\eps} \frac{e^{8\locality}}{\eps^2}\log\frac{\qubits}{\delta}}
\end{align*}
and dominates the running time.
The number of calls to the GL oracle also is an upper bound on the number of non-zero coefficients in the output $\tilde{\lambda}$.

As for correctness, for every $X \in \mathcal{Q}^{(P)}$, the corresponding coefficient in $\tilde{\lambda}$ must be correct, since its value was computed using the shadow oracle: it outputs an estimate $\mu$ such that
\begin{align*}
    \abs{\mu - (\hat{\lambda}_Q^{(P)} + \eps_Q^{(P)})}
    = \abs{\mu - \tfrac{1}{\dims}\tr(Z^\dagger PZQ)} < \eps.
\end{align*}
This implies that
\begin{align*}
    \abs{\mu - \hat{\lambda}_Q^{(P)}}
    \leq \abs{\mu - \hat{\lambda}_Q^{(P)} - \eps_Q^{(P)}} + \abs{\eps_Q^{(P)}}
    \leq \eps + \eps = 2\eps.
\end{align*}
For the $R \in \locals$ such that $R = (-1)^b \ii PQ$, $\hat{\lambda}_R = \frac{(-1)^b}{2} \hat{\lambda}_Q^{(P)}$, so the algorithm's estimate of $\tilde{\lambda}_R \gets \frac{(-1)^b}{2} \mu$ is $\eps$-correct.

So, the only remaining way the algorithm can fail is that some $Q \in \locals_{\locality}$ satisfies $\abs{\lambda_Q} \geq \eps$ but it was not included in any corresponding $\mathcal{Q}^{(P)}$.
This cannot happen: consider some $P \in \locals_1$ that anti-commutes with $Q$, so that the term appears in $\hat{\lambda}^{(P)}$, i.e.\ the corresponding $R = (-1)^{b+1}\ii PQ$ satisfies $\hat{\lambda}_R^{(P)} = 2(-1)^b\hat{\lambda}_R$ (following \eqref{eq:structure-local-lamb}).
Then $\mathcal{Q}$ will contain every $R'$ such that $R' \subseteq R$, since it satisfies the pass condition for the GL oracle in \eqref{eq:structure-success}:
\begin{align*}
    \sum_{Q \supseteq R'} (c_Q^{(P)})^2
    >  (c_R^{(P)})^2
    >  \eps^2
\end{align*}
and
\begin{equation*}
    \sum_{\substack{Q \supseteq R' \\ \abs{\supp(Q)} > \locality}} (c_Q^{(P)})^2
    = \sum_{\substack{Q \supseteq R' \\ \abs{\supp(Q)} > \locality}} (\eps_Q^{(P)})^2
    \leq \norm{\eps^{(P)}}^2 < \frac{\eps^2}{400^{\locality+2}}. \qedhere
\end{equation*}
\end{proof}

Now it remains to prove \cref{prop:structure-query}.  The algorithm for doing this is described in \cref{algo:structure-query}.  First we observe the following basic fact.

\begin{fact} \label{fact:pauli-eigv-helper}
    For $P \in \{\sigma_x, \sigma_y, \sigma_z\}$, $Q \in \{\id, \sigma_x, \sigma_y, \sigma_z\}$, and $v \in \{\pm1\}$, $\bra{P, v} Q \ket{P, v} = v^{\supp(Q)} \iver{Q \in \{\id, P\}}$.
    Generalizing, for $\qubits$-qubit Paulis, $P \in \{\sigma_x, \sigma_y, \sigma_z\}^{\otimes \qubits}$, $Q \in \locals$, and $v \in \{\pm 1\}^\qubits$,
    \[
        \bra{P, v} Q \ket{P, v} = v^{\supp(Q)} \iver{Q \subseteq P}.
    \]
\end{fact}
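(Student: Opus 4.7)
The plan is to first handle the single-qubit case by case analysis on $Q$, then extend to the $\qubits$-qubit case by factoring both sides across the tensor product structure.

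For the single-qubit case, I would split into three subcases based on $Q$. First, when $Q = \sigma_\id$, both sides equal $1$: the left side because $\ket{P, v}$ is a unit vector, and the right side because $\supp(Q) = \varnothing$ (so $v^{\supp(Q)} = 1$) and the indicator $\iver{Q \in \{\id, P\}}$ is $1$. Second, when $Q = P$, the state $\ket{P, v}$ is by definition an eigenvector of $P$ with eigenvalue $v$, so $\bra{P, v} P \ket{P, v} = v$; this matches the right-hand side since $v^{\supp(P)} = v$ and the indicator is $1$. Third, when $Q$ is a non-identity single-qubit Pauli distinct from $P$, the two matrices anticommute, which follows from the product relations $\sigma_a \sigma_b = \ii \sigma_c$ for distinct $a, b, c \in \{x, y, z\}$ given in \cref{def:paulis}. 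Consequently $P (Q \ket{P, v}) = -Q (P \ket{P, v}) = -v (Q \ket{P, v})$, so $Q \ket{P, v}$ lies in the $(-v)$-eigenspace of $P$, which is orthogonal to $\ket{P, v}$. Hence $\bra{P, v} Q \ket{P, v} = 0$, consistent with the right-hand side which vanishes since $Q \notin \{\id, P\}$.

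For the $\qubits$-qubit generalization, I would factor both the state and the observable across the tensor product. By definition $\ket{P, v} = \bigotimes_{i=1}^\qubits \ket{P_i, v_i}$ and $Q = \bigotimes_{i=1}^\qubits Q_i$ (up to a phase, which is absorbed by our convention $Q \in \locals$), so
\[
    \bra{P, v} Q \ket{P, v} = \prod_{i=1}^\qubits \bra{P_i, v_i} Q_i \ket{P_i, v_i}.
\]
Applying the single-qubit identity to each factor yields $\prod_{i=1}^\qubits v_i^{\supp(Q_i)} \iver{Q_i \in \{\id, P_i\}}$. The product of indicators is nonzero if and only if $Q_i \in \{\id, P_i\}$ for every $i \in [\qubits]$, which by \cref{def:pauli-order} is exactly the condition $Q \subseteq P$. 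When nonzero, the product of the $v_i^{\supp(Q_i)}$ factors collapses to $\prod_{i \,:\, Q_i \neq \id} v_i = v^{\supp(Q)}$, using that $\supp(Q) = \{i \in [\qubits] : Q_i \neq \id\}$.

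There is no real obstacle here; the argument is a routine unpacking of the definitions. The only step requiring care is the anticommutation of distinct non-identity single-qubit Paulis, which underpins the orthogonality between $\ket{P, v}$ and $Q \ket{P, v}$ in the vanishing case. Once the single-qubit case is in hand, the multi-qubit case follows immediately from the factorization of the inner product across the tensor product structure.
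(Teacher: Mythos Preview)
Your proof is correct. The paper states this as a \emph{fact} without proof, treating it as elementary; your case analysis on the single-qubit situation followed by tensor factorization is exactly the natural way to establish it, and nothing more is needed.
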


\begin{proof}[Proof of \cref{prop:structure-query}]
We analyze \cref{algo:structure-query}.
The basic idea is that, like the Goldreich--Levin algorithm with boolean functions, a GL query can be performed on input $X \in \locals_{\locality}$ by first fixing a random input state on the complement of $\supp(X)$, and then computing an estimator over random input states over $\supp(X)$.
This sampling procedure depends on $\supp(X)$, so to do this for all $X$ at once, we guess a partition of the input, which we call $T$, and then perform the procedure with respect to this partition.
See the technical overview for more intuition.

We begin by considering running time.
The pre-processing algorithm runs in $\bigO{\qubits pq} = \bigO{\frac{\qubits e^{8\locality}}{\gamma^2}\log\frac{\qubits}{\delta}}$ time.
The query algorithm runs in $\bigO{\locality pq} = \bigO{\frac{e^{8\locality}}{\gamma^2}\log\frac{\qubits}{\delta}}$ time, since every sample only needs to be queried on the support of $X$ and $P$.
The pre-processing algorithm applies the unknown unitary $pq$ times.

It remains to discuss correctness.
Fix some $\locality$-local Pauli $X$ and 1-local Pauli $P$; we want to show that our algorithm does not output the incorrect value for this choice of input.
First, we analyze a specific $\mu_k$: dropping the subscript for convenience, fix some $T \subset [\qubits]$, fix $A_{T}, v_{T}$, and consider only randomness over $[q]$ choices of $A_{\overline{T}}, v_{\overline{T}}$.

\begin{mdframed}
\begin{algorithm}[Answering GL queries]
    \label{algo:structure-query}\mbox{}
    \begin{description}
    \item[Input:] Black-box ability to apply an unknown $\qubits$-qubit unitary $Z$; locality parameter $\locality$; and error parameters $\gamma, \delta > 0$.
    \item[Output:] A oracle that, with probability $\geq 1-\delta$, can successfully respond to the following queries: given Paulis $X \in \locals_\locality$ and $P \in \locals_1$, output ``Pass'' or ``Fail'' with the guarantees described in \eqref{eq:structure-success} and \eqref{eq:structure-fail}.
    \item[Pre-processing (Generating the dataset):]\mbox{}
    \begin{algorithmic}[1]
        \State Let $p = \bigTheta{\locality \cdot 54^{\locality}\log(\qubits/\delta)}$ and $q = \bigTheta{\locality \cdot 54^{\locality}/\gamma^2}$;
        \ForAll{$k \in [p]$}
            \State Sample a random partition $T_k \subset [\qubits]$;
            \State Sample a random Pauli eigenvector $A_{T_k}, v_{T_k} \sim (\{\sigma_x, \sigma_y, \sigma_z\} \times \{+, -\})^{\abs{\supp(T_k)}}$;
            \ForAll{$\ell \in [q]$}
                \State Sample a random Pauli eigenvector $A_{\overline{T}_k}^{(\ell)}, v_{\overline{T}_k}^{(\ell)} \sim (\{\sigma_x, \sigma_y, \sigma_z\} \times \{+, -\})^{\abs{\supp(\overline{T_k})}}$;
            \EndFor
        \EndFor
        \ForAll{$(k, \ell) \in [p] \times [q]$}
            \State Sample a random Pauli basis $B \sim \{\sigma_x, \sigma_y, \sigma_z\}^\qubits$;
            \State Run the circuit $\circuit(A_{T_k}A_{\overline{T}_k}^{(\ell)}, v_{T_k}v_{\overline{T}_k}^{(\ell)}, B_{(k,\ell)},Z)$~\eqref{eq:the-circuit};
            \State Record the output $w \in \{\pm 1\}^n$;
        \EndFor
        \State Output the dataset $\dataset = \{(A_{(k,\ell)}, v_{(k,\ell)}, B_{(k,\ell)}, w_{(k,\ell)})\}_{k, \ell}$.
    \end{algorithmic}
    \item[Query subroutine:] Given input $X \in \locals_\locality$, $P \in \locals_1$, and the dataset $\dataset$, outputs estimate $\mu$. \mbox{}
    \begin{algorithmic}[1]
        \For{$k \in [p]$}
            \State Let $\dataset[k] = \{(A_{(k,\ell)}, v_{(k,\ell)}, B_{(k,\ell)}, w_{(k,\ell)})\}_{\ell \in [q]}$ be the part of the dataset associated with the partition $T_k$;
            \State For every $(A, v, B, w) \in \dataset[k]$, compute the associated estimator
            \begin{align*}
                \begin{cases}
                    (3v)^{\supp(X)}(3w)^{\supp(P)}
                        & \text{if } X \subseteq A ,\, P \subseteq B,\, T_k \cap \supp(X) = \varnothing \\
                    0 & \text{otherwise}
                \end{cases}
            \end{align*}
            and let $\mu_k$ be the average of this estimator over all the elements of $\dataset[k]$;
        \EndFor
        \State Output ``Pass'' if $\frac{1}{p}\sum_{k \in [p]} \iver{\abs{\mu_k} > 0.1\gamma/\sqrt{6}^\locality} > \frac{1}{3 \cdot 54^{\locality}}$, and ``Fail'' otherwise.
    \end{algorithmic}
    \end{description}
\end{algorithm}
\end{mdframed}

When $T \cap \supp(X) \neq \varnothing$, then $\mu_k = 0$; otherwise, $\mu_k$ can be written as
\begin{align*}
    \mu_k = \frac{1}{q} \sum_{(A,v,B,w) \in \dataset[k]} (3v)^{\supp(X)}(3w)^{\supp(P)} \iver{X \subseteq A ,\, P \subseteq B}.
\end{align*}
This is an average of $q$ estimators which are independent in the ``inner randomness'' of $A_{\overline{T}}$, $v_{\overline{T}}$, $B$, and $w$; and bounded in magnitude by $3^{\abs{\supp(X)} + \abs{\supp(P)}} \leq 3^{\locality+1}$.
For each such estimator, averaging only over the inner randomness,
\begin{align}
    & \expec{A_{\overline{T}}, v_{\overline{T}}, B, w}{(3v)^{\supp(X)}(3w)^{\supp(P)} \iver{X \subseteq A ,\, P \subseteq B}} \nonumber\\
    &= \frac{1}{3^{\abs{\supp(X)}}3^{\abs{\supp(P)}}}\expec{A_{\overline{T}}, v_{\overline{T}}, B, w}{(3v)^{\supp(X)}(3w)^{\supp(P)} \mid X \subseteq A ,\, P \subseteq B} \nonumber\\
    &= \expec{A_{\overline{T}}, v_{\overline{T}}, B, w}{v^{\supp(X)}w^{\supp(P)} \mid X \subseteq A ,\, P \subseteq B} \nonumber\\
    &= \expec[\Big]{A_{\overline{T}}, v_{\overline{T}}, B}{v^{\supp(X)} \tr((B_{\supp(P)} \otimes \id_{\overline{\supp(P)}})Z\ket{A,v}\bra{A,v}Z^\dagger) \,\Big|\, X \subseteq A ,\, P \subseteq B } \nonumber\\
    &= \expec[\Big]{A_{\overline{T}}, v_{\overline{T}}, B}{v^{\supp(X)} \tr(P Z\ket{A,v}\bra{A,v}Z^\dagger) \,\Big|\, X \subseteq A ,\, P \subseteq B } \nonumber\\
    &= \tr\parens[\Big]{P Z\expec{A_{\overline{T}}, v_{\overline{T}}}{v^{\supp(X)} \ket{A,v}\bra{A,v} \mid X \subseteq A}Z^\dagger} \nonumber\\
    &= \tr\parens[\Big]{Z^\dagger P Z \parens[\big]{\ket{A_T, v_T}\bra{A_T, v_T} \otimes X_{\overline{T}}}}2^{\abs{T}-\qubits} \nonumber
\intertext{Throughout the above computation, we use that $\overline{T}$ contains the support of $X$ as well as \cref{fact:circuit-helper}.
Writing $Z^\dagger PZ = \sum_{Q \in \locals} c_Q Q$ with $c_Q = \frac{1}{\dims}\tr(Z^\dagger PZQ)$, we can further clarify this expression:}
    &= \sum_{Q \in \locals} c_Q\tr\parens[\Big]{Q\parens[\big]{\ket{A_T, v_T}\bra{A_T, v_T} \otimes X_{\overline{T}}}}2^{\abs{T}-\qubits} \nonumber \\
    &= \sum_{Q \in \locals} c_Q v^{\supp(Q) \cap T} \iver{Q_T \subseteq A_T,\,Q_{\overline{T}} = X_{\overline{T}}}. \label{eq:structure-inner}
\end{align}
Let $\bar{\mu}_k = \E_{A_{\overline{T}}, v_{\overline{T}}, B, w}[\mu_k]$ be the expectation of $\mu_k$ over the inner randomness, which we computed above.
By Hoeffding's inequality, we know that
\begin{align}
    \Pr\bracks[\Big]{\abs{\mu_k - \bar{\mu}_k} \geq \frac{\gamma}{100\cdot \sqrt{6}^{\locality}}} \leq 2\exp\parens[\Big]{-\frac{q\gamma^2}{10000\cdot 6^\locality \cdot 2\cdot 3^{2(\locality+1)}}} \leq \frac{1}{100 \cdot 54^\locality}, \label{eq:mu-k-concentrates}
\end{align}
where the last line uses $q \gtrsim \locality \cdot 54^\locality / \gamma^2$.
This is the only place the value of $q$ is used.
Now, we consider how the ``outer randomness'' of $T$, $A_T$, and $v_T$ affects $\mu_k$.
In particular, we will bound moments associated to $\bar{\mu}_k$, which by \eqref{eq:structure-inner} takes the form
\begin{align*}
    \bar{\mu}_k = \begin{cases}
        0 & T \cap \supp(X) \neq \varnothing \\
        \sum_{Q \in \locals} c_Q v^{\supp(Q) \cap T} \iver{Q_{T} \subseteq A_T,\,Q_{\overline{T}} = X_{\overline{T}}} & \text{otherwise}
    \end{cases}
\end{align*}
We consider fixing $T$ such that $T \cap \supp(X) = \varnothing$, $A_{T}$, and interpreting $\bar{\mu}_k$ as a polynomial in $v_T \in \{\pm 1\}^{\abs{T}}$.
Then
\begin{align}
    \E_{v_T}[\bar{\mu}_k^2] = \sum_{Q \in \locals} c_Q^2 \iver{Q_{T} \subseteq A_T,\,Q_{\overline{T}} = X_{\overline{T}}}
    \leq \sum_{Q \supseteq X} c_Q^2 \label{eq:barmu}
\end{align}
We also consider decomposing $\bar{\mu}_k = \bar{\mu}_k^{\textup{(low)}} + \bar{\mu}_k^{\textup{(high)}} $ into the monomials of degree at most $\locality$ and the monomials of degree greater than $\locality$ (still viewed as a polynomial in $v_T)$.
Then, along the same lines,
\begin{align}
    \E_{v_T}[(\bar{\mu}_k^{\textup{(low)}})^2] &= \sum_{\substack{Q \in \locals \\ \abs{\supp(Q)} \leq \locality}} c_Q^2 \iver{Q_{T} \subseteq A_T,\,Q_{\overline{T}} = X_{\overline{T}}} \label{eq:barmu-low} \\
    \E_{v_T}[(\bar{\mu}_k^{\textup{(high)}})^2] &= \sum_{\substack{Q \in \locals \\ \abs{\supp(Q)} > \locality}} c_Q^2 \iver{Q_{T} \subseteq A_T,\,Q_{\overline{T}} = X_{\overline{T}}} \leq \sum_{\substack{Q \supseteq X \\ \abs{\supp(Q)} > \locality}} c_Q^2 \label{eq:barmu-high}
\end{align}
First, suppose that we are in the fail condition \eqref{eq:structure-fail}, and consider the probability that $\abs{\mu_k} > 0.1\gamma/\sqrt{6}^\locality$.
By \eqref{eq:barmu}, for every choice of $T$ and $A_T$, $\E_{v_T}[\bar{\mu}_k^2] \leq \gamma^2/400^{\locality+1}$.
By Chebyshev's inequality, $\Pr[\bar{\mu}_k^2 > \gamma^2/(400 \cdot 6^\locality)] \leq 0.1 \cdot 54^{-\locality}$.
Finally, taking
\begin{align*}
    \abs{\mu_k} \leq \abs{\bar{\mu}_k} + \abs{\mu_k - \bar{\mu}_k}
\end{align*}
and using \eqref{eq:mu-k-concentrates}, we can conclude that
\begin{align} \label{eq:structure-fail-final}
    \text{under the fail condition, } \Pr\bracks[\Big]{\abs{\mu_k} > \frac{0.1\gamma}{\sqrt{6}^\locality}} \leq 0.2 \cdot 54^{-\locality}.
\end{align}
Next, suppose that we are in the pass condition \eqref{eq:structure-success}.
Then, first, we can show
\begin{align}
    \Pr_{T, A_T}\bracks[\Big]{\E_{v_T}[(\bar{\mu}_k^{\textup{(low)}})^2] \geq \frac{\gamma^2}{6^\locality}}
    \geq \Pr_{T, A_T}\bracks[\Big]{\E_{v_T}[(\bar{\mu}_k^{\textup{(low)}})^2] \geq \frac{1}{6^\locality} \sum_{\substack{Q \supseteq X \\ \abs{\supp(Q)} \leq \locality}} c_Q^2}
    \geq \frac{1}{6^\locality}.
    \label{eq:structure-pass-1}
\end{align}
This holds since $\E_{v_T}[(\bar{\mu}_k^{\textup{(low)}})^2] \leq \sum_{\substack{Q \supseteq X; \abs{\supp(Q)} \leq \locality}} c_Q^2$ for every choice of $T$ and $A_T$ by \eqref{eq:barmu-low}, but
\begin{align*}
    & \expec{T, A_T}{\expec{v_T}{(\bar{\mu}_k^{\textup{(low)}})^2}} \nonumber\\
    &= \frac{1}{2^{\abs{\supp(X)}}}\expec{T, A_T, v_T}{(\bar{\mu}_k^{\textup{(low)}})^2 \mid T \cap \supp(X) = \varnothing} \\
    &= \frac{1}{2^{\abs{\supp(X)}}} \sum_{\substack{Q \\ \abs{\supp(Q)} \leq \locality}} c_Q^2 \expec[\Big]{T, A_T, v_T}{ \iver{Q_T \subseteq A_T,\,Q_{\overline{T}} = X_{\overline{T}}}\mid T \cap \supp(X) = \varnothing} \\
    &= \frac{1}{2^{\abs{\supp(X)}}} \sum_{\substack{Q \\ \abs{\supp(Q)} \leq \locality}} c_Q^2 \frac{1}{2^{\abs{\supp(Q)}}3^{\abs{\supp(Q) \setminus \supp(X)}}} \iver{Q \supseteq X} \\
    &= \sum_{\substack{Q \supseteq X \\ \abs{\supp(Q)} \leq \locality}} \frac{c_Q^2}{6^{\abs{\supp(Q)}}}
    \geq \frac{1}{6^\locality}\sum_{\substack{Q \supseteq X \\ \abs{\supp(Q)} \leq \locality}} c_Q^2.
\end{align*}
Suppose we have sampled such a $T$ and $A_T$ such that $\E_{v_T}[(\bar{\mu}_k^{\textup{(low)}})^2] \geq \gamma^2/6^\locality$.
Then we can lower bound the probability that $\bar{\mu}_k^{\textup{(low)}}$ is large.
\begin{align}
    \Pr_{v_T}[\abs{\bar{\mu}_k^{\textup{(low)}}} > 0.5\gamma/\sqrt{6}^\locality]
    = \Pr_{v_T}[(\bar{\mu}_k^{\textup{(low)}})^2 > 0.25\gamma^2/6^\locality]
    \geq \frac{9}{16}\frac{\expec{v_T}{(\bar{\mu}_k^{\textup{(low)}})^2}^2}{\expec{v_T}{(\bar{\mu}_k^{\textup{(low)}})^4}}
    \geq \frac{9}{16} 9^{-\locality}
    \label{eq:structure-pass-2}
\end{align}
Above, the first inequality is the Paley--Zygmund inequality, $\Pr[Z > t \E[Z]] \geq (1-t)^2\frac{\E[Z]^2}{\E[Z^2]}$, and the second inequality is the ``Bonami lemma'' consequence of hypercontractivity: $\E_x[f(x)^4] \leq 9^{\deg(f)} \E_x[f(x)^2]^2$, where $f$ is a multivariate polynomial and the expectation is over the uniform distribution on $x \sim \{\pm 1\}^k$ (see Chapter 9 of \cite{odonnell14}).

Altogether, combining \eqref{eq:structure-pass-1} with \eqref{eq:structure-pass-2}, we have shown that $\abs{\bar{\mu}_k^{\textup{(low)}}} > 0.5\gamma/\sqrt{6}^\locality$ with probability at least $\frac{9}{16}54^{-\locality}$ (over all the outer randomness).
Note that
\begin{align*}
    \mu_k &= \bar{\mu}_k^{\textup{(low)}} + \bar{\mu}_k^{\textup{(high)}} + (\mu_k - \bar{\mu}_k); \\
    \abs{\mu_k} &\geq \abs{\bar{\mu}_k^{\textup{(low)}}} - \abs{\bar{\mu}_k^{\textup{(high)}}} - \abs{\mu_k - \bar{\mu}_k}.
\end{align*}
By \eqref{eq:barmu-high}, Chebyshev's inequality, and the pass condition \eqref{eq:structure-success}, $\abs{\bar{\mu}_k^{\textup{(high)}}}$ exceeds $0.2 \gamma / \sqrt{6}^\locality \geq \sqrt{32\cdot 54^\locality(\sum_{Q \colon \abs{\supp(Q)} > \locality} c_Q^2)}$ with probability at most $\frac{1}{32}54^{-\locality}$.
By \eqref{eq:mu-k-concentrates}, $\abs{\mu - \bar{\mu}}$ exceeds $0.2 \gamma/\sqrt{6}^\locality$ with probability at most $\frac{1}{32}54^{-\locality}$.
Combining these using triangle inequality and union bound, over all randomness, we conclude that
\begin{align} \label{eq:structure-pass-final}
    \text{under the pass condition, }\Pr\bracks[\Big]{\abs{\mu_k} > \frac{0.1\gamma}{\sqrt{6}^\locality}} \geq \frac{1}{2} 54^{-\locality}.
\end{align}
Under the fail condition, the probability that $\abs{\mu_k} > \frac{0.1\gamma}{\sqrt{6}^\locality}$ is at most $0.2 \cdot 54^{-\locality}$.
To distinguish the pass and fail conditions, by a Chernoff bound, $p \gtrsim 54^\locality \log\frac{n^{\locality+1}}{\delta}$ copies of $\mu_k$ suffice to distinguish these two conditions with probability at least $1 - \frac{\delta}{n^{\locality + 1}}$.
By a union bound, for this choice of $p$, the algorithm will correctly distinguish the pass and fail conditions over all $X \in \locals_\locality$ and $P \in \locals$ with probability $\geq 1-\delta$.
\end{proof}

Finally, we remark that \cref{prop:structure-query} generalizes beyond expressions of the form $Z^\dagger P Z$.
We can prove the following.
This result is not used in later sections.
\begin{lemma} \label{lem:gl-general}
    Let $O \in \mathbb{C}^{\dims \times \dims}$ be an unknown observable with $\norm{O} \leq 1$ and a Pauli decomposition of $O = \sum_{Q \in \locals} c_Q Q$.
    Suppose we can efficiently apply the POVM $\{\frac{\id + O}{2}, \frac{\id - O}{2}\}$, and suppose we are given a natural number parameter $\locality$.
    Then, with $\bigO{82^\locality \log(\qubits/\delta) / \gamma^3}$ applications of the POVM and $\bigO{82^\locality \qubits \log(\qubits/\delta) / \gamma^3}$ additional pre-processing time, we can output a data structure, which, with probability $\geq 1-\delta$, can correctly respond to the following type of query in $\bigO{82^\locality \log(\qubits/\delta) / \gamma^3}$ time on a classical computer: given $X \in \locals_{\locality}$, output an estimate of 
    \begin{align*}
        \sum_{\substack{Q \in \locals \\ Q \supseteq X}} \frac{\abs{c_Q}^2}{6^{\abs{\supp(Q)}}}
    \end{align*}
    to $\gamma$ error.
\end{lemma}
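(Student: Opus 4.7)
My plan is to adapt the Goldreich--Levin style procedure underlying \cref{prop:structure-query} to the more general setting where $O$ is accessed through its associated POVM $\{(\id+O)/2,(\id-O)/2\}$, by replacing the measurement-basis factor $w^{\supp(P)}$ that arose for the unitary sandwich $Z^\dagger PZ$ with the single-bit POVM outcome $w\in\{\pm 1\}$. The direct analogue of \cref{fact:circuit-helper} is that if we apply this POVM to a Pauli eigenstate $\ket{A,v}$, then $\E[w\mid A,v]=\tr(O\ket{A,v}\!\bra{A,v})=\sum_Q c_Q\, v^{\supp(Q)}\,\iver{Q\subseteq A}$; this is the elementary ``query'' into the Pauli spectrum of $O$ that my post-processing will exploit.

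I will generate the dataset non-adaptively in the same nested fashion as \cref{algo:structure-query}: for each of $p$ outer rounds $k$, draw a uniformly random partition $T_k\subseteq[\qubits]$ together with a random Pauli eigenvector $(A_{T_k},v_{T_k})$ on $T_k$; then for each of $q$ inner rounds $\ell$, draw a fresh $(A_{\overline{T_k}}^{(\ell)},v_{\overline{T_k}}^{(\ell)})$ on the complement, prepare $\ket{A^{(k,\ell)},v^{(k,\ell)}}$, apply the POVM once, and record $w^{(k,\ell)}\in\{\pm 1\}$. Given an incoming query $X\in\locals_\locality$, I will form the per-sample estimator
\begin{equation*}
  \tau^{(k,\ell)} \;=\; (3v^{(k,\ell)})^{\supp(X)}\, w^{(k,\ell)}\, \iver{X\subseteq A^{(k,\ell)}},
\end{equation*}
which automatically vanishes unless $T_k\cap\supp(X)=\varnothing$ and the inner draw picks the right Pauli basis on $\supp(X)$, and return the nested $U$-statistic
\begin{equation*}
  \widehat S(X) \;=\; \frac{1}{3^{\abs{\supp(X)}}} \cdot \frac{1}{p}\sum_{k=1}^p \binom{q}{2}^{-1}\sum_{1\leq \ell<\ell'\leq q} \tau^{(k,\ell)}\tau^{(k,\ell')}.
\end{equation*}
Because $\ell\neq\ell'$, the two inner samples are independent given $(T_k,A_{T_k},v_{T_k})$, so the inner conditional expectation of their product is $\overline{\mu}_k^2$, where $\overline{\mu}_k=\E[\tau^{(k,\ell)}\mid T_k,A_{T_k},v_{T_k}]$ has exactly the same form as the intermediate $\overline{\mu}_k$ derived in the proof of \cref{prop:structure-query}.

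Unbiasedness will then follow from the same combinatorial calculation as there: $\E_{v_{T_k}}[\overline{\mu}_k^2]=\sum_Q c_Q^2 \iver{Q_{T_k}\subseteq A_{T_k},\,Q_{\overline{T_k}}=X_{\overline{T_k}}}$, and averaging over $A_{T_k}$ and $T_k$ picks up exactly $2^{-\abs{\supp(Q)}}\cdot 3^{-\abs{\supp(Q)\setminus\supp(X)}}$ per surviving $Q$, which combines with the $3^{-\abs{\supp(X)}}$ normalization to yield precisely $\sum_{Q\supseteq X} c_Q^2/6^{\abs{\supp(Q)}}$. Concentration is then routine: each $\tau^{(k,\ell)}$ is bounded in magnitude by $3^\locality$, so products are bounded by $9^\locality$, and a Hoeffding bound combined with a union bound over the $\bigO{\qubits^\locality}$ queries $X\in\locals_\locality$ yields the claimed sample complexity. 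The main obstacle I anticipate is the careful simultaneous bookkeeping of three normalizing factors---$\Pr[T\cap\supp(X)=\varnothing]=2^{-\abs{\supp(X)}}$, the uniform random Pauli basis draw on $\supp(X)$, and the $6^{-\abs{\supp(Q)}}$ weighting---together with the parallelization argument showing that a single non-adaptively generated dataset services every query $X\in\locals_\locality$ at once, which is what produces the $\log(\qubits/\delta)$ sample dependence rather than a linear one.
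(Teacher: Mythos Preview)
Your approach is essentially the paper's: the same nested-sampling dataset (random partition $T_k$, outer eigenvector fixed on $T_k$, inner resamples on $\overline{T_k}$, a single POVM bit in place of the basis measurement) and the same intermediate quantity $\bar\mu_k$. The one substantive difference is that the paper uses the biased estimator $\mu_k^2$ (the square of the inner average) and controls the bias $\E[\mu_k^2-\bar\mu_k^2]=\E[(\mu_k-\bar\mu_k)^2]\le 3^{2\locality+2}/q$ by taking $q=\Theta(9^\locality/\gamma)$, whereas you take the $U$-statistic over distinct inner pairs, which is unbiased. This is a legitimate and slightly cleaner alternative and certainly recovers the stated bound.

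There is, however, one real gap. The assertion that $\tau^{(k,\ell)}$ ``automatically vanishes unless $T_k\cap\supp(X)=\varnothing$'' is false: if some $i\in T_k\cap\supp(X)$ happens to have $(A_{T_k})_i=X_i$ (probability $1/3$ over the outer draw), the indicator $\iver{X\subseteq A^{(k,\ell)}}$ can still be $1$. Without explicitly multiplying $\tau$ by $\iver{T_k\cap\supp(X)=\varnothing}$, as the paper does in both \cref{algo:structure-query} and the proof of \cref{lem:gl-general}, the resulting $\bar\mu_k$ no longer has the form you claim; a short per-qubit calculation shows $\E[\bar\mu_k^2]$ then picks up contributions from Paulis $Q$ with $Q\not\supseteq X$ and puts the wrong weight on those with $Q\supseteq X$, so your estimator is not unbiased for the stated quantity. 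Adding that indicator repairs everything: $\bar\mu_k$ then matches the paper's expression, and your $3^{-\abs{\supp(X)}}$ normalization is exactly what is needed to get $\E[\bar\mu_k^2]/3^{\abs{\supp(X)}}=\sum_{Q\supseteq X}c_Q^2/6^{\abs{\supp(Q)}}$.
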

\begin{proof}
We first describe the algorithm.
We assume understanding of \cref{algo:structure-query}.
First, we generate a dataset $\dataset = \{(A_{(k, \ell)}, v_{(k, \ell)}, b_{(k,\ell)})\}_{k \in [p], \ell \in [q]}$, following Lines 2 through 6 of \cref{algo:structure-query} with
\begin{align*}
    p &= \bigTheta{9^{\locality}\log(\qubits^\locality/\delta)/\gamma^2} \\
    q &= \bigTheta{9^{\locality}/\gamma}.
\end{align*}
This generates the Pauli eigenvectors $A_{(k, \ell)}, v_{(k, \ell)}$ and associated partitions $T_k$.
We can then prepare these eigenvectors and apply the POVM $\{\frac{\id + O}{2}, \frac{\id - O}{2}\}$ to get a random variable $b_{(k, \ell)} \in \{\pm 1\}$ with expectation $\tr(O \ket{A_{(k, \ell)}, v_{(k, \ell)}}\bra{A_{(k, \ell)}, v_{(k, \ell)}})$.
This dataset $\dataset$ is our data structure.
It can be generated with $pq = \bigTheta{81^\locality \log(\qubits^\locality/\delta) / \gamma^3}$ measurements of $O$, and $\bigO{\qubits pq}$ quantum gates and classical pre-processing.

Now, suppose we are given an $X \in \locals_{\locality}$.
Then our estimator is as follows: for every $k \in [p]$, compute the estimator
\begin{align*}
    \mu_k = \begin{cases}
        \frac{1}{q}\sum_{\ell = 1}^{q} (3v_{(k, \ell)})^{\supp(X)}b_{(k, \ell)} \iver{X \subseteq A_{(k, \ell)}} & \text{if } T_k \cap \supp(X) = \varnothing \\
        0 & \text{otherwise}
    \end{cases}
\end{align*}
and output the average, $\mu = \frac{1}{p}\sum_{k=1}^p \mu_k^2$.
Computing this estimator costs $\bigO{\abs{\supp(X)}}$ per sample in the dataset, giving a total running time of $\bigO{pq\abs{\supp(X)}}$.
To analyze this estimator, we consider a $k$ for which $T_k \cap \supp(X) = \varnothing$ and consider a $(A, v, b) \in \dataset[k]$.
Then, following \eqref{eq:structure-inner},
\begin{align}
    & \expec{A_{\overline{T}}, v_{\overline{T}}, b}{(3v)^{\supp(X)} b \iver{X \subseteq A}} \nonumber\\
    &= \expec{A_{\overline{T}}, v_{\overline{T}}, b}{v^{\supp(X)} b \mid X \subseteq A} \nonumber\\
    &= \expec{A_{\overline{T}}, v_{\overline{T}}}{v^{\supp(X)} \tr(O \ket{A, v}\bra{A, v}) \mid X \subseteq A} \nonumber\\
    &= \tr\parens[\Big]{O \expec{A_{\overline{T}}, v_{\overline{T}}}{v^{\supp(X)} \ket{A, v}\bra{A, v}\mid X \subseteq A}} \nonumber\\
    &= \tr\parens[\Big]{O \ket{A_T, v_T}\bra{A_T, v_T} \otimes X_{\overline{T}}} 2^{-\abs{\overline{T}}}\nonumber
\intertext{
    The last equality above uses \eqref{eq:pauli-eigenvector-sum}.
    We now use the Pauli expansion of $O$.
}
    &= \sum_{Q \in \locals} c_Q \tr\parens[\Big]{Q \ket{A_T, v_T}\bra{A_T, v_T} \otimes X_{\overline{T}}} 2^{-\abs{\overline{T}}} \nonumber\\
    &= \sum_{Q \in \locals} c_Q v^{\supp(Q) \cap T} \iver{Q_T \subseteq A_T,\, Q_{\overline{T}} = X_{\overline{T}}} \label{eq:gen-structure-inner}
\end{align}
Let $\bar{\mu}_k$ denote the above expression, the expectation over the inner randomness.
With $q$ samples over the inner randomness, the average will in fact converge to its expectation, so that $\mu_k \approx \bar{\mu}_k$.
In this case, the final estimator is approximately $\frac{1}{s}\sum_{k=1}^s \bar{\mu}_k^2$, where the $\bar{\mu}_k$'s are independent and identically distributed.
So, we consider a single $\bar{\mu}_k$, and compute its second moment.
Since $\bar{\mu}_k$ already averages over the inner randomness, this expectation is over the outer randomness of $T$, $A_{T}$, and $v_T$.
\begin{align*}
    \E[\bar{\mu}_k^2]
    &= \frac{1}{2^{\abs{\supp(X)}}}\expec{T, A_T, v_T}{\bar{\mu}_k^2 \mid T \cap \supp(X) = \varnothing} \\
    &= \frac{1}{2^{\abs{\supp(X)}}}\sum_{Q, R \in \locals} c_Q c_R \E_{T, A_T, v_T}\Big[ v^{\supp(Q) \cap T}v^{\supp(R) \cap T} \iver{Q_T \subseteq A_T,\,Q_{\overline{T}} = X_{\overline{T}}}\\
    & \hspace{16em} \iver{R_T \subseteq A_T,\,R_{\overline{T}} = X_{\overline{T}}} \;\Big|\; T \cap \supp(X) = \varnothing \Big] \\
    &= \frac{1}{2^{\abs{\supp(X)}}} \sum_{Q} c_Q^2 \expec[\Big]{T, A_T}{ \iver{Q_T \subseteq A_T,\,Q_{\overline{T}} = X_{\overline{T}}}\mid T \cap \supp(X) = \varnothing} \\
    &= \frac{1}{2^{\abs{\supp(X)}}} \sum_{Q} c_Q^2 \frac{1}{2^{\abs{\supp(Q)}}3^{\abs{\supp(Q) \setminus \supp(X)}}} \iver{Q \supseteq X} \\
    &= \sum_{Q \supseteq X} \frac{c_Q^2}{6^{\abs{\supp(Q)}}},
\end{align*}
using \eqref{eq:gen-structure-inner} and that the distributions over $T$, $A_T$, and $v_T$ are uniform.
Altogether, we have shown that, at least morally, $\mu$ is an estimator of $\sum_{Q \supseteq X} \frac{c_Q^2}{6^{\abs{\supp(Q)}}}$.
We now show that, for our choices of $p$ and $q$, $\mu$ is well-concentrated.
Let us now bound the deviation.
\begin{align}
    \mu - \sum_{Q \supseteq X} \frac{c_Q^2}{6^{\abs{\supp(Q)}}}
    &= \frac{1}{p} \sum_{k=1}^p (\mu_k^2 - \E[\bar{\mu}_k^2 ]) \nonumber\\
    &= \frac{1}{p} \sum_{k=1}^p \parens[\Big]{(\mu_k^2 - \E[\mu_k^2]) + (\E[\mu_k^2 - \bar{\mu}_k^2])} \nonumber\\
    &= \frac{1}{p} \sum_{k=1}^p \parens[\Big]{\mu_k^2 - \E[\mu_k^2]} + \E[\mu_1^2 - \bar{\mu}_1^2] \label{eq:structure-conc1}
\end{align}
Above, all the expectations are over all the randomness, though $\bar{\mu}_k$ is determined by only the outer randomness.
Since $q \geq 3^{2\locality + 2}\frac{2}{\gamma}$,
\begin{align}
    \E[\mu_1^2 - \bar{\mu}_1^2] = \E[(\mu_1 - \bar{\mu}_1)^2] \leq \frac{1}{q}3^{2\locality + 2} \leq \frac{\gamma}{2}, \label{eq:structure-conc2}
\end{align}
so that
\begin{align*}
    \Pr\bracks[\Big]{\abs[\Big]{\mu - \sum_{Q \supseteq X} \frac{c_Q^2}{6^{\abs{\supp(Q)}}}} \geq \gamma}
    &= \Pr\bracks[\Big]{\abs[\Big]{\frac{1}{p} \sum_{k=1}^p \parens[\Big]{\mu_k^2 - \E[\mu_k^2]} + \E[\mu_1^2 - \bar{\mu}_1^2]} \geq \gamma} \tag*{by \eqref{eq:structure-conc1}}\\
    &\leq \Pr\bracks[\Big]{\abs[\Big]{\frac{1}{p} \sum_{k=1}^p \parens[\Big]{\mu_k^2 - \E[\mu_k^2]}} \geq \frac{\gamma}{2}} \tag*{by \eqref{eq:structure-conc2}} \\
    &\leq 2\exp\parens[\Big]{-\frac{p\gamma^2}{8\cdot 3^{2\abs{\supp(X)}}}}. \tag*{by Hoeffding's inequality}
\end{align*}
Since $p \geq \frac{3^{2\locality+2}}{\gamma^2} \log\frac{2(3\qubits)^{\locality}}{\delta}$, this probability is at most $\frac{\delta}{(3\qubits)^\locality}$, and so with probability at most $\delta$, $\mu$ is correct for all choices of $X \in \locals_{\locality}$.
\end{proof}

\section{Algorithm}

Now we can state our main theorem for learning an unknown Hamiltonian from its time evolution.

\begin{theorem} \label{thm:main}
    Let $H = \sum_{a = 1}^m \lambda_a E_a$ be some $\locality$-local Hamiltonian where the terms $E_a \in \locals_\locality$ are known but the coefficients $\lambda_a$ are unknown.\footnote{
        In the setting of structure learning, we think of the $E_a$'s as the set of terms which could possibly appear, e.g.\ all Paulis with support size at most $\locality$.
        With this choice of terms, most of the corresponding $\lambda_a$'s will be zero.
    }
    Let $\eps, \delta \in (0, 1)$ be error parameters.
    Suppose we know a bound on local one-norm $\degree$ (\cref{def:local-norm}), $\lonorm{H} \leq \degree$, and a bound on effective sparsity $\sparse$ (\cref{def:eff-sparse}), $\sparse_\eps \leq \sparse$. 
    Then we can run \cref{algo:strap}, and it satisfies the following properties for $c$ a sufficiently large universal constant:
    \begin{enumerate}
        \item \textup{(Accuracy)} with probability $1 - \delta$, it returns estimates $\wh{\lambda}$ such that $\norm{\lambda - \wh{\lambda}}_{\infty} \leq \eps$;
        \item \textup{(Evolution time)} it applies $e^{-\ii Ht}$ for a total evolution time of $\bigO{\frac{\locality^{c \locality}\sparse }{\eps}\log \frac{n}{\delta}}$;
        \item \textup{(Time resolution)} it only ever applies $e^{-\ii Ht}$ for $t \geq (\max(1, \frac{\degree}{\sqrt{\sparse}}) \sparse \locality^{c \locality})^{-1}$;
        \item \textup{(Classical overhead)} it has a classical overhead of $\bigO{\terms \sparse^2 \locality^{c \locality}\log\frac{1}{\eps}\log(\frac{\qubits}{\delta}\log\frac{1}{\eps})}$ (naive version) or
        $\bigO{\qubits^2 \sparse^3 \locality^{c \locality} \log\frac{1}{\eps}\log\parens[\big]{\frac{\qubits}{\delta}\log\frac{1}{\eps}}}$ (structure learning version);
        \item \textup{(Number of experiments)} it runs $\bigO{\locality^{c \locality}\sparse^2\log\frac{1}{\eps}\log(\frac{\qubits}{\delta}\log\frac{1}{\eps})}$ quantum circuits, all of which take the form given in \cref{fig:the-circuit}.
    \end{enumerate}
\end{theorem}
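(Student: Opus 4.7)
The plan is to prove \cref{thm:main} by implementing a bootstrapping procedure in the style of \cref{algo:intro-strap}: run $T = \lceil\log_2(1/\eps)\rceil$ outer iterations with $\eta_j \deq 2^{-j}$, starting from $\lambda^{(0)} = 0$, and in each inner loop refine $\lambda^{(j)}$ into $\lambda^{(j+1)}$ halving the $\ell_\infty$ error. Two invariants must be maintained: (i) $\infnorm{\lambda - \lambda^{(j)}} \leq \eta_j$, and (ii) $\supp(\lambda^{(j)}) \subseteq \{a : |\lambda_a| \gtrsim \eps\}$, which by \cref{def:eff-sparse} forces at most $\bigO{\sparse}$ non-zero coordinates per site. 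Invariant (ii) then yields $\ltnorm{H - H(\lambda^{(j)})} \leq \bigO{\sqrt{\sparse}\eta_j}$ and $\lonorm{H(\lambda^{(j)})} \leq \bigO{\degree + \sparse\eta_j}$, both of which are required to invoke \cref{lem:main}.

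For the inner loop at iteration $j$, set $H_0 = H(\lambda^{(j)})$, choose a time step $c \eqsim 1/(\max(\degree, \sparse\eta_j)\sqrt{\sparse}\locality^{c\locality})$, and an alternation count $s = \lfloor\alpha/(c\sqrt{\sparse}\eta_j)\rfloor$, where $\alpha \eqsim 1/(\sqrt{\sparse}\locality^{c\locality})$ is a small constant. Define $Z \deq (e^{-\ii Hc}e^{\ii H_0 c})^s$ and $\hat{H} \deq sc(H - H_0)$. Then \cref{lem:main} applied to every one-qubit Pauli $P$ yields
\begin{align*}
    Z^\dagger P Z = P + [\ii\hat{H}, P] + E^{(P)},
\end{align*}
and direct substitution of the parameter choices into the Frobenius bound in \cref{lem:main} gives $\fnorm{E^{(P)}}/\sqrt{\dims} < (\alpha/(4\sqrt{\sparse}))/20^{\locality+2}$, which is exactly the hypothesis of \cref{thm:structure} at error level $\eps' \deq \alpha/(4\sqrt{\sparse})$. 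Invoking \cref{thm:structure} returns estimates $\wh{\hat{\lambda}}$ with $\infnorm{\wh{\hat{\lambda}} - \hat{\lambda}} \leq \eps'$. Set $\tilde{\lambda}^{(j+1)} = \lambda^{(j)} + (sc)^{-1}\wh{\hat{\lambda}}$, then clip each coordinate with $|\tilde{\lambda}^{(j+1)}_a - \lambda^{(j)}_a| < \eta_j/2$ back to $\lambda^{(j)}_a$ to obtain $\lambda^{(j+1)}$. The triangle inequality preserves invariant (i) at level $\eta_{j+1}$, and clipping preserves (ii) because any newly non-zero coordinate must have $|\lambda_a - \lambda^{(j)}_a| \geq \eta_j/4$, which combined with the inductive bound on $\lambda^{(j)}$ forces $|\lambda_a| \gtrsim \eta_j \gtrsim \eps$.

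Resource accounting follows by substitution. Per iteration, \cref{thm:structure} performs $\bigOt{\sparse^2 \locality^{c\locality} \log(\qubits/\delta')}$ applications of $Z$ (with $\delta' = \delta/T$ after union-bounding the outer loop), each costing evolution time $sc \eqsim \alpha/(\sqrt{\sparse}\eta_j)$, giving per-iteration evolution time $\bigOt{\sparse\locality^{c\locality}\log(\qubits/\delta)/\eta_j}$; the geometric sum over $j$ yields the claimed $\bigO{\sparse\locality^{c\locality}\log(\qubits/\delta)/\eps}$. The time step $c$ matches the claimed $t_{\min} \eqsim 1/(\max(\sqrt{\sparse}, \degree)\sqrt{\sparse}\locality^{c\locality})$ whenever $\sparse\eta_j \leq \max(\degree, \sqrt{\sparse})$, and the $\bigO{\log \sparse}$ warm-up iterations where this fails contribute only lower-order terms. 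Experiment count and classical overhead inherit from \cref{thm:structure}. The main technical obstacle is tightly controlling $\fnorm{E^{(P)}}$: the error bound of \cref{lem:main} has leading contributions $cD \cdot \alpha$ and $\alpha^2$ after substituting $sc\cdot\ltnorm{H-H_0} = \alpha$, and both must be suppressed by a factor $\sqrt{\sparse}\locality^{c\locality}$ below the signal amplitude $\alpha/\sqrt{\sparse}$ for \cref{thm:structure} to apply. This forces $\alpha \lesssim 1/(\sqrt{\sparse}\locality^{c\locality})$, which is the origin of the $\sparse^2$ factor in the theorem's experiment count.
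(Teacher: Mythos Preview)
Your overall strategy matches the paper's proof: bootstrap via \cref{algo:intro-strap}, control the Trotter error with \cref{lem:main}, and call \cref{lem:shadows} or \cref{thm:structure} in each inner loop. The substantive difference, and the source of two gaps, is your sparsification step.

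First, a minor point: clipping a coordinate with $|\tilde{\lambda}^{(j+1)}_a - \lambda^{(j)}_a| < \eta_j/2$ back to $\lambda^{(j)}_a$ does not give $\infnorm{\lambda^{(j+1)} - \lambda} \leq \eta_{j+1}$. A clipped coordinate has error at most $|\lambda^{(j)}_a - \tilde{\lambda}^{(j+1)}_a| + |\tilde{\lambda}^{(j+1)}_a - \lambda_a| < \eta_j/2 + \eta_j/4 = 3\eta_j/4$, so you only get geometric decay at rate $3/4$. This is fixable by adjusting constants, but once you tighten the clipping threshold enough to recover rate $1/2$, your support-preservation argument for (ii) becomes vacuous unless you also tighten the inner-loop accuracy.

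Second, and more seriously: your invariant (ii) yields only $\lonorm{H(\lambda^{(j)})} \lesssim \degree + \sparse\eta_j$, and this forces the time step $c$ to depend on $j$ through the hypothesis of \cref{lem:main}. The ``warm-up iterations'' remark rescues the total-evolution-time bound but \emph{not} the time-resolution claim, since resolution is the minimum step used across all iterations; an early iteration with $\sparse\eta_j \gg \max(\degree,\sqrt{\sparse})$ would require $c \eqsim 1/(\sparse^{3/2}\locality^{c\locality})$, violating item (3) of the theorem whenever $\sparse \gg \degree$.

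The paper avoids both issues by rounding small coefficients to zero rather than to the previous iterate: set $\lambda^{(j+1)}_a \gets 0$ whenever $|\lambda^{(j+1)}_a| \leq \eta_j/4$. Combined with pre-rounding accuracy $\eta_{j+1}/5$, this yields the entrywise bound $|\lambda^{(j+1)}_a| \leq 2|\lambda_a|$, from which $\lonorm{H_{j+1}} \leq 2\degree$ uniformly in $j$, and the $\ltnorm{H_{j+1} - H}$ bound follows by direct comparison with $H^{\leq \eps/4}$. A single fixed time step $t$ then works for every iteration, and the stated resolution drops out immediately. Replacing your clip-to-previous rule with this round-to-zero rule fixes your argument.
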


As discussed in \cref{rmk:sparsity}, the output is given in the form of a list of non-zero coefficients $\wh{\lambda}$ and we will show that this list has at most $\bigO{ne^{7\locality}\sparse}$ elements.
We can further guarantee that the support of $\wh{\lambda}$ is contained in the support of $\lambda$ by rounding all terms of $\wh{\lambda}$ smaller than $\eps$ to zero; this worsens the accuracy guarantee by only a constant factor.
This gives the version of the accuracy guarantee in \cref{thm:main-informal}.

\begin{remark}[Access to time evolutions]
    For simplicity, we assume that we are given the ability to apply $e^{-\ii Ht}$ for any $t \geq 0$.
    However, we only need it for one choice of $t$, the one given in \cref{algo:strap}.
    Further, our algorithm works for any $t' < t$: instead of applying $e^{-\ii Ht}$, one can apply $e^{-\ii H t'}$ $\floor{t/t'}$ times in a row, only affecting the analysis by constant factors.
\end{remark}

\begin{remark}[Dependence on locality]
    We did not attempt to optimize the dependence on $\locality$ in the above theorem, and instead aimed to give a statement in the broadest generality possible.  \cref{lem:main} is the only source of the $\locality^\locality$ dependence.  As discussed in \cref{rem:locality}, we believe the dependence on $\locality^\locality$ can be improved to $\exp(\locality)$ if we define $\sparse_{\eps}$ to bound the $L^1$ effective sparsity instead of the $L^2$ effective sparsity and throughout the algorithm, we measure the closeness of our intermediate estimates in $L^1$-local norm instead of $L^2$-local norm.
\end{remark}

\begin{remark}[Tolerance to SPAM error]
    Our algorithm still works, even when the measurement probabilities of every quantum circuit is perturbed by error, e.g.\ error in state preparation and measurement.
    In \cref{algo:strap}, because of the requirements of \cref{algo:shadow} and \cref{algo:structure-query}, we only require estimates to be correct to $\eta_j s_j t e^{-3\locality} \gtrsim \sparse^2 \locality^{c\locality}$ in iteration $j$.
    This is the error we can tolerate.
\end{remark}

\subsection{Proof of the main theorem}

We prove the main theorem by bootstrapping \cref{thm:structure} to iteratively refine our estimate for the coefficients.

\begin{mdframed}
\begin{algorithm}[Learning a Hamiltonian from real-time evolution]
    \label{algo:strap}\mbox{}
    \begin{description}
    \item[Input:] Accuracy and failure probability parameters $\eps, \delta \in (0, 1)$, oracle access to $e^{-\ii Ht}$
    \item[Input:]  Local $L^1$ bound $\degree$, effective sparsity bound $\sparse$, locality parameter $\locality$
    \item[Output:] Estimate $\wh{\lambda}$ for $\lambda$ such that with probability at least $1-\delta$, $\norm{\lambda - \wh{\lambda}}_{\infty} \leq \eps$.
    \item[Operation:]\mbox{}
    \begin{algorithmic}[1]
        \State Let $T \leftarrow \floor{\log_2(1/\eps)}$;
        \State Let $t \gets (500 \max(1, \frac{\degree}{\sqrt{\sparse}}) \sparse \locality^{2C \locality})^{-1}$ for $C$ a sufficiently large constant;
        \State Let $\lambda^{(0)} \gets (0,0,\dots,0)$;
        \For{$j$ from $0$ to $T$}
            \Comment{We will maintain a $\lambda^{(j)}$ such that $\infnorm{\lambda^{(j)} - \lambda} \leq 2^{-j}$}
            \State Let $\delta_j \leftarrow \delta/(2(T+1-j))^2$;
            \Comment{$\delta_j$ is the failure probability of iteration $j$}
            \State Let $\eta_j \leftarrow 2^{-j}$;
            \Comment{$\eta_j$ is the desired error in iteration $j$}
            \State Let $s_j \gets \floor{\frac{1}{\eta_j}\max(1, \frac{\degree}{\sqrt{\sparse}})}$;
            \State Let $H_j = \sum_{a} \lambda_a^{(j)} E_a$;
            \For{$a$ from $1$ to $\terms$} \label{line:shadow-version}
                \State Let $P_a \in \locals_1$ be some 1-local Pauli that anticommutes with $E_a$;
                \State Let $Q_a \leftarrow \frac{\ii}{2}[P_a,E_a] = \ii P_a E_a$;
            \EndFor
            \State Use a shadow oracle (\cref{lem:shadows}) to produce estimates $\wh{\lambda}^{(j+1)} \in [-1,1]^\terms$ to $\mu^{(j)} \in [-1,1]^\terms$, where
            \[
                \mu_a^{(j)} = \frac{1}{2\dims}\tr\parens*{P_a \parens[\big]{e^{-\ii H t} e^{\ii H_j t}}^{s_j} Q_a \parens[\big]{e^{-\ii H_j t} e^{\ii H t}}^{s_j}},
            \]
            satisfying $\infnorm{\wh{\lambda}^{(j+1)} - \mu^{(j)}} \leq \frac{\eta_j s_jt}{20}$ with probability $\geq 1-\delta_j$;
            \State Set $\lambda^{(j+1)} \gets \lambda^{(j)} + \frac{1}{s_jt}\wh{\lambda}^{(j+1)}$;
            \ForAll{coefficients $\lambda_a^{(j+1)}$ with $|\lambda_a^{(j+1)}| \leq \frac{\eta_{j}}{4}$}
                \State Set $\lambda_a^{(j+1)} \gets 0$;
                \Comment{Round small coefficients to zero to reduce $\lonorm{\lambda^{(j+1)}}$}
            \EndFor
        \EndFor
        \State \Output $\wh{\lambda} = \lambda^{(T+1)}$.
    \end{algorithmic}
    \item[Modification for structure learning:] Instead of \cref{line:shadow-version} and the preceding for loop, run \cref{algo:structure} with parameters 
            \begin{align*}
                Z \leftarrow \parens[\big]{e^{-\ii H t} e^{\ii H_j t}}^{s_j}, \locality, \eps \leftarrow \frac{\eta_{j} s_j t}{10}, \delta \leftarrow \delta_j
            \end{align*}
        and let $\wh{\lambda}^{(j+1)}$ be the resulting output.
    \end{description}
\end{algorithm}
\end{mdframed}

\begin{proof}
We analyze \cref{algo:strap}.
We maintain the invariants that, after every iteration,
\begin{enumerate}[label=(\alph*)]
    \item $\norm{\lambda^{(j)} - \lambda}_{\infty} \leq \eta_{j}$;
    \item $\lonorm{H_{j}} \leq 2\degree$;
    \item $\ltnorm{H_{j} - H} \leq 4 \eta_{j} \sqrt{\sparse}$;
\end{enumerate}
where $H_{j} = \sum_a \lambda_a^{(j)} E_a$.
Note that (b) and (c) hold for $j = 0$ by assumption, with (c) holding because $\ltnorm{H} = \sqrt{\sparse_1} \leq \sqrt{\sparse}$.
To show that the invariants hold for larger $j$, we observe that the following statement about the $j$th iteration implies all three of them for $j+1$.
\begin{align} \label{eq:boot-master}
    \infnorm[\Big]{\frac{1}{s_jt}\wh{\lambda}^{(j+1)} - (\lambda - \lambda^{(j)})} \leq \frac{\eta_j}{10} = \frac{\eta_{j+1}}{5}.
\end{align}
This equation implies (a), $\infnorm{\lambda^{(j+1)} - \lambda} \leq \eta_{j+1}$.
Specifically, $\lambda^{(j+1)}$ is first set to $\lambda^{(j)} + \frac{1}{s_jt}\wh{\lambda}^{(j+1)}$, so that $\infnorm{\lambda^{(j+1)} - \lambda} \leq \eta_{j+1}/5$ before rounding, and the rounding only changes any coefficient by $\eta_j / 4 = \eta_{j+1}/2$.
It also implies (b), $\lonorm{H_{j+1}} \leq 2\degree$, since the rounding step ensures $\abs{\lambda_a^{(j+1)}} \leq 2\abs{\lambda_a} $ entrywise: if $\lambda_a^{(j+1)}$ was not rounded, then $\abs{\lambda_a} \geq \abs{\lambda_a^{(j+1)}} - \eta_{j+1}/5 \geq \eta_{j+1}(1/2 - 1/5)$.
Consequently,
\begin{align*}
    \abs{\lambda_a^{(j+1)}} \leq \abs{\lambda_a} + \abs{\lambda_a^{(j+1)} - \lambda_a} \leq \abs{\lambda_a} + \eta_{j+1}/5 \leq 2\abs{\lambda_a}.
\end{align*}
Finally, \eqref{eq:boot-master} implies (c).
We observe that the coefficients of $\frac{1}{\eta_{j+1}}(H_{j+1} - H)$ are smaller in magnitude than the corresponding coefficients of $\frac{4}{\eps}H^{\leq \eps/4}$:
\[
    \frac{1}{\eta_{j+1}}\abs{\lambda_a^{(j+1)} - \lambda_a} \leq \min(1, 4\abs{\lambda_a}/\eps).
\]
The bound of 1 follows from (a) and the other bound uses $\eps/2 \leq \eta_{j+1}$ and $\abs{\lambda_a^{(j+1)}} \leq 2\abs{\lambda_a}$.
Thus,
\[
\ltnorm{H_{j+1} - H} \leq \frac{4\eta_{j+1}}{\eps} \ltnorm{H^{\leq \eps/4}} \leq \eta_{j+1}\sqrt{\sparse_{\eps/4}} \leq 4\eta_{j+1}\sqrt{\sparse}.
\]
We have shown that we can complete the proof, provided \eqref{eq:boot-master} holds.
It remains to analyze the $\wh{\lambda}^{(j+1)}$ we receive from \cref{line:shadow-version} and from the structure learning modification, to show \eqref{eq:boot-master} from the inductive hypotheses (in fact we will only need to use (b) and (c) from the inductive hypotheses).

For both, we use \cref{lem:main}.
We first verify that the assumptions of \cref{lem:main} are satisfied for $H \gets H$, $H_0 \leftarrow H_j$, $P \in \locals_1$, $\degree \leftarrow 2\degree$, and $\eta \leftarrow 4 \eta_j \sqrt{\sparse}$.
Invariants (b), $\lonorm{H_j} \leq 2\degree$, and (c), $\ltnorm{H - H_j} \leq  4\eta_j \sqrt{\sparse}$, imply the desired norm assumptions.
Also, by definition,
\[
    t = \frac{1}{500 \max(1, \frac{\degree}{\sqrt{\sparse}}) \sparse \locality^{2C \locality}} < \frac{1}{500\degree \locality^C} \text{ and }
    s_jt \leq \frac{1}{500 \sparse \locality^{2C \locality} \eta_j} \leq \frac{1}{ 500 \sqrt{\sparse} \eta_j},
\]
so all assumptions are satisfied, and thus we conclude that for $P \in P_1$,
\[
    \parens[\big]{e^{-\ii H_j t} e^{\ii H t}}^{s_j} P \parens[\big]{e^{-\ii H t} e^{\ii H_j t}}^{s_j}
    = P + \ii s_j t[H -H_j, P] + E,
\]
where 
\begin{align} \label{eq:main-e-bd}
    \fnorm{E} &\leq (8\eta_j \sqrt{\sparse} s_j t^2 \degree + 16\eta_j^2 \sparse s_j^2t^2)\locality^{C \locality} \fnorm{P}
    \leq (\eta_j s_jt)\locality^{-C \locality} \fnorm{P}/20 \,.
\end{align}
First, we consider the un-modified algorithm.
In this version, the $\mu_a^{(j)}$ defined satisfies
\begin{align*}
    \mu_a^{(j)} &= \frac{1}{2\dims}\tr\parens*{P_a \parens[\big]{e^{-\ii H t} e^{\ii H_j t}}^{s_j} Q_a \parens[\big]{e^{-\ii H_j t} e^{\ii H t}}^{s_j}}, \\
    &= \frac{1}{2\dims}\tr\parens*{Q_a (P_a + \ii s_j t[H-H_j, P_a] + E)} \\
    &= \frac{1}{2\dims}\parens[\Big]{\tr\parens{Q_a P_a} + \ii s_j t \sum_{b=1}^\terms(\lambda_b - \lambda_b^{(j)})\tr\parens{Q_a[E_b, P_a]} + \tr\parens{Q_a E}} \\
    &= s_jt(\lambda_a - \lambda_a^{(j)}) + \frac{1}{\dims}\tr\parens{Q_a E},
\end{align*}
where the last line uses that $Q_a = \ii E_a P_a$, so that $\frac{1}{2\dims}\tr(\ii Q_a [E_b, P_a])$ is 1 when $a = b$ and 0 otherwise.
This gives us \eqref{eq:boot-master}:
\begin{align*}
    \abs[\Big]{\frac{1}{s_j t}\wh{\lambda}_a^{j+1} - (\lambda_a - \lambda_a^{(j)})}
    \leq \frac{\eta_j}{20} + \abs[\Big]{\frac{1}{s_jt}\mu_a^{(j)} - (\lambda - \lambda^{(j)})}
    \leq \frac{\eta_j}{20} + \frac{1}{s_jt\dims}\fnorm{Q_a}\fnorm{E}
    \leq \frac{\eta_j}{10}.
\end{align*}
As for the modified, structure learning algorithm, \eqref{eq:boot-master} holds directly by the consequence of \cref{thm:structure}, where $\hat{H} \gets s_jt(H - H_j)$ and we take $\eps \leftarrow 0.1\eta_{j+1}s_jt$ and $\delta \gets \delta_j$.
The assumptions for \cref{thm:structure} are satisfied by the above application of \cref{lem:main}, in particular \eqref{eq:main-e-bd}, so it returns a $\wh{\lambda}^{(j+1)}$ such that, with probability $1 - \delta_j$,
\[
\infnorm[\Big]{\frac{1}{s_jt}\wh{\lambda}^{(j+1)} - (\lambda - \lambda^{(j)})} \leq \frac{0.1\eta_{j+1}s_jt}{s_jt} = \frac{\eta_j}{10} \,,
\]
as desired.
This shows that for both versions of the algorithm, the output is correct, in that the output $\wh{\lambda}$ satisfies $\infnorm{\wh{\lambda} - \lambda} \leq \eta_{T+1} \leq \eps$, provided the shadow oracle or structure learning oracle never fails.
The total probability of failure is $\sum_{j=0}^T \delta_j \leq \delta$, by union bound.

Finally, we analyze the running time and resources used by the algorithm.
In both versions, the unknown Hamiltonian is always applied for $t$ time, making the time resolution $t = (500 \max(1, \frac{\degree}{\sqrt{\sparse}}) \sparse \locality^{2C \locality})^{-1}$.
Further, quantum resources are only consumed by the shadow oracle and structure learning algorithm, respectively, and the classical running time is also dominated by the classical costs of these subroutines.

The original version of the algorithm runs \cref{lem:shadows} with $\eps \gets \frac{\eta_j s_jt}{20}$, $\delta \gets \delta_j$, $\locality \gets 1$, and $\locality' \gets \locality$, requiring
\begin{align} \label{eq:alg-application-count}
    \bigO[\Big]{
        \frac{e^{3\locality}}{(\eta_j s_jt)^2}\log\frac{\qubits}{\delta_j}
    } = \bigO[\Big]{
        \sparse^2 \locality^{5C\locality} \log\frac{\qubits(T+1-j)}{\delta}
    }
\end{align}
applications of $Z \gets (e^{-\ii Ht}e^{\ii H_jt})^{s_j}$, which requires $ts_j$ evolution time of the unknown Hamiltonian $H$, interleaved with evolution by the known Hamiltonian $H_j$.
Summing over $j$, this gives a total evolution time of
\begin{align*}
    \bigO[\Big]{
        \sum_{j=0}^T s_j t\frac{e^{3\locality}}{(\eta_j s_jt)^2}\log\frac{\qubits}{\delta_j}
    } = \bigO[\Big]{
        \sum_{j=0}^T \frac{\sparse \locality^{3C\locality}}{\eta_j} \log\frac{\qubits}{\delta_j}
    } = \bigO[\Big]{
        \frac{\sparse \locality^{3C\locality}}{\eps} \log\frac{\qubits}{\delta}
    }.
\end{align*}
The number of experiments in iteration $j$ is \eqref{eq:alg-application-count}, since $Z$ is applied once per experiment.
Summing over $j$, this gives a bound of $\bigO{\locality^{5 C\locality}\sparse^2\log\frac{1}{\eps}\log(\frac{\qubits}{\delta}\log\frac{1}{\eps})}$ on the number of experiments.
The classical computation time is dominated by the cost of the queries, of which there are $\terms$ per iteration, with each one costing $\locality$ times the number of experiments.
This gives a running time bound of $\bigO{\terms \sparse^2 \locality^{6 C\locality}\log\frac{1}{\eps}\log(\frac{\qubits}{\delta}\log\frac{1}{\eps})}$.

With the structure learning modification, the quantum resources are almost entirely analogous.
By \cref{thm:structure}, the number of applications of $Z$, which is still $(e^{-\ii Ht}e^{\ii H_jt})^{s_j}$, is
\begin{align*}
    \bigO[\Big]{
        \frac{e^{8\locality}}{(\eta_j s_jt)^2}\log\frac{\qubits}{\delta_j}
    },
\end{align*}
giving identical bounds of $\bigO{\frac{\sparse \locality^{3C\locality}}{\eps} \log\frac{\qubits}{\delta}}$ and $\bigO{\locality^{5 C\locality}\sparse^2\log\frac{1}{\eps}\log(\frac{\qubits}{\delta}\log\frac{1}{\eps})}$ on the total time evolution and number of experiments, respectively.
The only thing that changes is the classical running time, which in iteration $j$ is now $\bigO{\qubits^2 \sparse' e^{7\locality}}$ times the number of experiments where $\sparse'$ is the effective $\frac{\eta_js_jt}{20}$-sparsity of $s_j t(H_j - H)$ (which is the Hamiltonian we are applying \cref{thm:structure} to).  Since $\ltnorm{H_{j} - H} \leq 4 \eta_{j} \sqrt{\sparse}$, this implies $\sparse' \leq \bigO{\sparse}$.  This gives us a final bound of
\begin{equation*}
    \bigO[\Big]{\qubits^2 \sparse^3 \locality^{6C \locality} \log\frac{1}{\eps}\log\parens[\big]{\frac{\qubits}{\delta}\log\frac{1}{\eps}}}.
\end{equation*}
Finally, by \cref{thm:structure}, the sparsity of the output is $\bigO{ne^{7\locality}\sparse}$.
\end{proof}

We now apply our result to learn Hamiltonians drawn from a particular class.

\subsection{Application to local Hamiltonians on bounded-degree graphs}

We consider the following class of Hamiltonians.

\begin{definition}[Low-intersection Hamiltonian~\cite{hkt24}]
\label{def:low-insersection-ham}
    For a $\locality$-local Hamiltonian $H = \sum_{a=1}^\terms \lambda_a E_a$ with $\terms$ terms on a system of $\qubits$ qubits, fix a parameter $\realdeg$.
    Let the \emph{dual interaction graph} $\graph$ be an undirected graph with vertices labeled by the $a \in [\terms]$ such that $\lambda_a \neq 0$, with an edge between $a, b \in [\terms]$ if and only if
    \begin{align*}
        \supp(E_a) \cap \supp(E_b) \neq \varnothing.
    \end{align*}
    We say that $H$ is \emph{low-intersection}\footnote{Also called a ``low-interaction''~\cite{htfs23} or ``sparsely interacting''~\cite{gcc24} Hamiltonian.} with respect to parameter $\realdeg$ if every vertex in $\graph$ has degree at most $\realdeg$.
\end{definition}

\begin{corollary}
    Let $H = \sum_{a=1}^\terms \lambda_a E_a$ be a $\locality$-local low-intersection Hamiltonian, possibly with unknown graph.
    Suppose we know $\locality$, $\realdeg$ and suppose $\locality = \bigO{1}$.
    Then we can find some $\hat{\lambda}$ such that $\infnorm{\hat{\lambda} - \lambda} < \eps$ with probability $\geq 1-\delta$ using 
    \begin{enumerate}[label=\textup{(\alph*)}]
        \item $\bigO{\frac{\realdeg}{\eps}\log\frac{\qubits}{\delta}}$ total time evolution;
        \item $\bigOmega{1/\realdeg^{1.5}}$ time resolution;
        \item $\bigOt{\realdeg^2\terms\log\frac{1}{\eps}\log\frac{\qubits}{\delta}}$ classical runtime if we know the terms $E_a$ in advance, and $\bigOt{\qubits^2 \realdeg^3\log\frac{1}{\eps}\log\frac{\qubits}{\delta}}$ if we don't;
        \item $\bigOt{\realdeg^2\log\frac{1}{\eps}\log\frac{\qubits}{\delta}}$ many experiments.
    \end{enumerate}
\end{corollary}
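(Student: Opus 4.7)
The plan is to deduce the corollary as a direct specialization of \cref{thm:main}, so the only real work is translating the ``bounded degree in the dual interaction graph'' hypothesis into bounds on the parameters $\degree$ and $\sparse$ appearing there. The key combinatorial observation is that for a low-intersection Hamiltonian, the number of terms whose support contains any particular qubit $i$ is at most $\realdeg+1$: fixing one such term $E_a$, every other term $E_b$ with $i\in\supp(E_b)$ must be adjacent to $E_a$ in the dual interaction graph $\graph$, because their supports intersect at $i$; hence the terms touching $i$ form a clique in $\graph$, whose size is bounded by $\realdeg+1$. Using the standard normalization $|\lambda_a|\le 1$, this immediately gives
\[
\lonorm{H}\ \le\ \realdeg+1,\qquad \sparse_\eps\ \le\ \max_{i\in[\qubits]}\ \big|\{a:\supp(E_a)\ni i\}\big|\ \le\ \realdeg+1,
\]
so I would invoke \cref{thm:main} with $\degree = \sparse = O(\realdeg)$ and $\locality=O(1)$.

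Next, I would substitute these values into the five conclusions of \cref{thm:main} and absorb the factors $\locality^{c\locality}$ into the big-O since $\locality=O(1)$. Part (a) is immediate: the total evolution time is $O(\sparse\log(\qubits/\delta)/\eps)=O(\realdeg\log(\qubits/\delta)/\eps)$. For (b), note that $\degree/\sqrt{\sparse}=\Theta(\sqrt{\realdeg})$, which exceeds $1$, so $\max(1,\degree/\sqrt{\sparse})\cdot\sparse=\Theta(\realdeg^{3/2})$ and the time resolution bound becomes $\Omega(\realdeg^{-3/2})$. Parts (c) and (d) are likewise a direct substitution: the naive-version overhead $\tilde O(\terms\sparse^2\log(1/\eps)\log(\qubits/\delta))$ becomes $\tilde O(\realdeg^2\terms\log(1/\eps)\log(\qubits/\delta))$ when the terms $E_a$ are known in advance, while the structure-learning version yields $\tilde O(\qubits^2\sparse^3\log(1/\eps)\log(\qubits/\delta))=\tilde O(\qubits^2\realdeg^3\log(1/\eps)\log(\qubits/\delta))$; and the experiment count $\tilde O(\sparse^2\log(1/\eps)\log(\qubits/\delta))=\tilde O(\realdeg^2\log(1/\eps)\log(\qubits/\delta))$. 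The choice between the two branches of (c) corresponds to whether we invoke \cref{algo:strap} in its naive form or with the structure-learning modification.

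There is no real obstacle: the argument is just a substitution once the clique bound on local sparsity is in place. The one small point worth being careful about is the implicit assumption $|\lambda_a|\le 1$, which is needed to pass from a ``bounded degree in $\graph$'' hypothesis (a combinatorial condition) to a ``bounded $\lonorm{H}$'' hypothesis (an analytic condition); this is standard in Hamiltonian learning, and matches the regime in which $\eps<1$ is a meaningful error target.
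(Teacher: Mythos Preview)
Your proposal is correct and follows the same approach as the paper: bound both $\lonorm{H}$ and $\sparse_\eps$ by $O(\realdeg)$ via the clique observation in the dual interaction graph, then substitute into \cref{thm:main}. You are in fact more careful than the paper's terse proof, explicitly flagging the implicit normalization $|\lambda_a|\le 1$ and working out the $\max(1,\degree/\sqrt{\sparse})=\Theta(\sqrt{\realdeg})$ factor in the time-resolution bound.
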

\begin{proof}
A low-intersection Hamiltonian $H$ satisfies $\lonorm{H} \leq \realdeg$.
Furthermore, there are at most $\realdeg$ non-zero terms supported on any given site $i \in [\qubits]$, so the effective sparsity for any $\eps$ is at most $\sparse_{\eps} \leq \realdeg$.
The desired statement now follows immediately from \cref{thm:main}.
\end{proof}

\subsection{Application to Hamiltonians exhibiting power law decay}

We now consider Hamiltonians with long-range interactions.
In particular, unlike the previous setting, we allow all interactions to have non-zero strength, provided that this strength decays polynomially with the distance of the interaction.
\begin{definition}[Hamiltonians with power law interactions]
\label{def:power-law-Ham}
    Consider a system of $\qubits$ qubits on a $d$-dimensional lattice.
    For a $\locality$-local Hamiltonian $H = \sum_{a=1}^\terms \lambda_a E_a$ with $\terms$ terms, fix a parameter $\alpha > 0$.
    We say that $H$ exhibits \emph{$\alpha$-power law decay} if, for every $i, j \in [\qubits]$,
    \begin{align*}
        \sum_{\substack{a \in [\terms] \\ \{i,j\} \in \supp(E_a)}} \abs{\lambda_a} \leq \frac{1}{\max(1,\operatorname{dist}(i,j))^\alpha},
    \end{align*}
    where $\operatorname{dist}(i,j)$ is the length of the shortest path between $i$ and $j$ on the lattice.
\end{definition}
We specialize to lattices here for simplicity, but our results easily extend to the generalization of power law decay to other graphs~\cite{hastings10a,Hastings10}.

\begin{lemma}\label{lem:power-law-sparsity}
    Let $H$ be a Hamiltonian on a $d$-dimensional lattice with $\alpha$-power law decay for $\alpha > d$.
    Then its effective sparsity satisfies
    \begin{align*}
        \sparse_\eps \leq 2^{d\locality + 1}/(\eps(\alpha - d))^{d\locality/(d\locality+(\alpha-d))} \,.
    \end{align*}
\end{lemma}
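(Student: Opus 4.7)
The plan is to fix an arbitrary site $i \in [\qubits]$ and bound the local contribution $\sum_{a \colon i \in \supp(E_a)} \min(1, \lambda_a^2/\eps^2)$; since $\sparse_\eps$ is the max of this quantity over $i$, this suffices. For each term $E_a$ containing $i$ I define its $i$-radius $r_a = \max_{j \in \supp(E_a)} \operatorname{dist}(i,j)$, fix a threshold $R \geq 1$ to be optimized later, and split the sum into $S_1$ (over terms with $r_a \leq R$) and $S_2$ (over terms with $r_a > R$).

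For $S_1$, I use $\min(1,\lambda_a^2/\eps^2) \leq 1$ and simply count. Any term contributing has support of size at most $\locality$ lying in $B_R(i)$ (the ball of radius $R$ around $i$) and containing $i$; there are at most $\abs{B_R(i)}^{\locality-1} \leq (2R+1)^{d(\locality-1)}$ such subsets, and each subset supports at most $3^\locality$ non-identity Paulis, giving $S_1 \leq 3^\locality (2R+1)^{d(\locality-1)}$, which I will loosen to a bound of the form $(cR)^{d\locality}$ for a suitable constant.

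For $S_2$, I apply the elementary inequality $\min(1,y^2) \leq y$ for $y \geq 0$ (immediate by cases $y \leq 1$ and $y > 1$) with $y = \abs{\lambda_a}/\eps$, so $\min(1, \lambda_a^2/\eps^2) \leq \abs{\lambda_a}/\eps$. Each contributing term has its furthest site $j_a$ at distance strictly greater than $R$ from $i$; grouping by $j_a$ and invoking the power-law decay hypothesis yields
\begin{align*}
    S_2 \;\leq\; \frac{1}{\eps}\sum_{a \colon r_a > R,\, i \in \supp(E_a)} \abs{\lambda_a} \;\leq\; \frac{1}{\eps}\sum_{j \colon \operatorname{dist}(i,j) > R}\sum_{a \colon \{i,j\} \subseteq \supp(E_a)} \abs{\lambda_a} \;\leq\; \frac{1}{\eps}\sum_{j \colon \operatorname{dist}(i,j) > R} \frac{1}{\operatorname{dist}(i,j)^\alpha}.
\end{align*}
Since a $d$-dimensional lattice has $\lesssim r^{d-1}$ sites at distance $r$, the outer sum is bounded by $c_d \sum_{r > R} r^{d-1-\alpha}$, which (using the integral approximation and the assumption $\alpha > d$) is at most a constant times $R^{d-\alpha}/(\alpha-d)$.

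Combining the two contributions gives $\sparse_\eps \lesssim (cR)^{d\locality} + R^{d-\alpha}/(\eps(\alpha-d))$. Balancing by equating these forces $R^{d\locality + (\alpha-d)} \sim 1/(\eps(\alpha-d))$, i.e.\ $R \sim (\eps(\alpha-d))^{-1/(d\locality + (\alpha-d))}$, and substitution yields each contribution of order $(\eps(\alpha-d))^{-d\locality/(d\locality + (\alpha-d))}$, which matches the claimed exponent. The main obstacle will be bookkeeping the constants so the final prefactor is exactly $2^{d\locality+1}$: this amounts to a careful count of the $\locality$-body supports in $B_R(i)$ (using a shifted radius like $R+1$ to absorb lower-order terms and make the bound valid without a separate small-$R$ case), together with a tight evaluation of the tail sum $\sum_{r>R} r^{d-1-\alpha}$ via its integral approximation.
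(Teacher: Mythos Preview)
Your proposal is correct and follows essentially the same argument as the paper: fix a site $i$, split the terms containing $i$ by a radius threshold $R$, count the near terms crudely by $(cR)^{d\locality}$, bound the far terms via $\min(1,y^2)\leq y$ together with the power-law tail sum $\sum_{r>R} r^{d-1-\alpha}\lesssim R^{d-\alpha}/(\alpha-d)$, and balance. The only cosmetic difference is that the paper partitions by the diameter of $\supp(E_a)$ rather than the distance from $i$ to the farthest site; your choice is in fact slightly cleaner for the far-term step, and the constant bookkeeping you flag is exactly what the paper does to arrive at the $2^{d\locality+1}$ prefactor.
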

\begin{proof}
For a set $S \subset [\qubits]$, let $\text{diam}(S)$ denote the diameter of $S$ on the lattice graph.
Fix a qubit $i \in [\qubits]$, and let $R$ be a number that we fix later.
Consider the set of terms supported on $i$, and partition them into two sets: $\mathcal{A}_i$ denotes the terms $E_a$ such that $i \in \supp(E_a)$ and $\diam(\supp(E_a)) > R$, and $\mathcal{B}_i$ denotes the terms $E_a$ such that $i \in \supp(E_a)$ and $\text{diam}(\supp(E_a)) \leq R$.
Then
\begin{align*}
    \sum_{\substack{a \in [\terms] \\ i \in \supp(E_a)}} \min(1, \abs{\lambda_a}/\eps)
    &= \sum_{E_a \in \mathcal{A}_i} \min(1, \abs{\lambda_a}/\eps) + \sum_{E_a \in \mathcal{B}_i}  \min(1, \abs{\lambda_a}/\eps) \\
    &\leq \sum_{E_a \in \mathcal{A}_i} \abs{\lambda_a}/\eps + \abs{\mathcal{B}_i}\,.
\end{align*}
We now bound the two parts, using that the number of $j$ such that $\operatorname{dist}(i, j) = \ell$ is at most $2^d \binom{d+\ell-1}{d-1} \leq 2^d\ell^{d-1}$.
First,
\[
    \sum_{E_a \in \mathcal{A}_i} \abs{\lambda_a}
    \leq \sum_{j : \operatorname{dist}(i,j) > R}\frac{1}{\operatorname{dist}(i,j)^{\alpha}}
    \leq \sum_{\ell > R}^{\infty} \frac{2^d\ell^{d-1}}{\ell^{\alpha}}
    \leq 2^d\int_{\floor{R}}^{\infty} x^{d-1-\alpha} \diff x
    \leq \frac{2^dR^{d-\alpha}}{\alpha - d} \,. 
\]
Second,
\[
    \abs{\mathcal{B}_i} \leq (2R)^{d \locality} \,.
\]
Thus,
\begin{align*}
    \sum_{\substack{a \in [\terms] \\ i \in \supp(E_a)}} \min(1, \abs{\lambda_a}/\eps)
    &\leq \frac{2^dR^{d-\alpha}}{\eps(\alpha - d)} + (2R)^{d \locality}\,.
\end{align*}
We choose $R$ to balance these terms.
\begin{align*}
    R = \parens[\big]{2^{d(\locality-1)}\eps(\alpha - d)}^{-1/(d(\locality-1)+\alpha)}
\end{align*}
This gives a final bound of
\begin{multline*}
    \sparse_{\eps}
    = \max_{i \in [\qubits]} \sum_{\substack{a \in [\terms] \\ i \in \supp(E_a)}} \min(1, \abs{\lambda_a}^2/\eps^2)
    \leq \max_{i \in [\qubits]} \sum_{\substack{a \in [\terms] \\ i \in \supp(E_a)}} \min(1, \abs{\lambda_a}/\eps) \\
    \leq 2 \cdot 2^{d\locality} \parens[\big]{2^{d(\locality-1)}\eps(\alpha - d)}^{-d\locality/(d(\locality-1)+\alpha)}
    \leq 2^{d\locality + 1}/(\eps(\alpha - d))^{-d\locality/(d\locality+(\alpha-d))} \qedhere
\end{multline*}
\end{proof}

\begin{corollary} \label{cor:power-law}
    Let $H = \sum_{a=1}^\terms \lambda_a E_a$ be a $\locality$-local Hamiltonian on a $d$-dimensional lattice with $\alpha$-power law decay.    Suppose we know $d, \locality, \alpha$, and suppose $d,\locality = \bigO{1}$ and $\alpha > d$ with $\alpha - d = \bigOmega{1}$.
    Let
    \[
        \kappa = \frac{d\locality}{d\locality + (\alpha-d)} \,.
    \]
    Then we can find some $\hat{\lambda}$ such that $\infnorm{\hat{\lambda} - \lambda} < \eps$ with probability $\geq 1-\delta$ using 
    \begin{enumerate}[label=\textup{(\alph*)}]
    \item $\bigO{\frac{1}{\eps^{1 + \kappa}}\log\frac{\qubits}{\delta}}$ total time evolution;
    \item $\bigOmega{\eps^{\kappa}}$ time resolution;
    \item $\bigOt{\frac{\qubits^2}{\eps^{3\kappa}}\log\frac{1}{\delta}}$ classical runtime;
    \item $\bigOt{\frac{1}{\eps^{2\kappa}}\log\frac{\qubits}{\delta}}$ many experiments.
    \end{enumerate}
\end{corollary}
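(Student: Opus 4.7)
The plan is to reduce Corollary \ref{cor:power-law} to a direct application of \cref{thm:main} by computing the two structural parameters $\degree$ and $\sparse$ for a Hamiltonian with $\alpha$-power-law decay on a $d$-dimensional lattice. \cref{lem:power-law-sparsity} already hands us the effective sparsity bound $\sparse_\eps \leq 2^{d\locality+1}/(\eps(\alpha-d))^{\kappa}$, which under the assumption $\alpha - d = \bigOmega{1}$, $d\locality = \bigO{1}$ simplifies to $\sparse_\eps = \bigO{\eps^{-\kappa}}$. So the main missing ingredient is a bound $\lonorm{H} = \bigO{1}$.

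First I would bound the local one-norm. Fix a site $i$, and split the sum $\sum_{a : i \in \supp(E_a)} \abs{\lambda_a}$ into the single-qubit contribution (which is trivially bounded since $\max(1,\operatorname{dist}(i,i)) = 1$) and the multi-qubit contribution. For every multi-qubit term $E_a$ with $i \in \supp(E_a)$, there exists some $j \in \supp(E_a) \setminus \{i\}$, so grouping by such a $j$ and applying \cref{def:power-law-Ham} gives
\begin{align*}
    \sum_{\substack{a : i \in \supp(E_a) \\ \abs{\supp(E_a)} \geq 2}} \abs{\lambda_a}
    \leq \sum_{j \neq i} \sum_{\substack{a : \{i,j\} \subseteq \supp(E_a)}} \abs{\lambda_a}
    \leq \sum_{j \neq i} \frac{1}{\operatorname{dist}(i,j)^{\alpha}} \lesssim \sum_{\ell \geq 1} \frac{\ell^{d-1}}{\ell^\alpha} \lesssim \frac{1}{\alpha - d},
\end{align*}
using the standard lattice shell bound $\abs{\{j : \operatorname{dist}(i,j) = \ell\}} \lesssim \ell^{d-1}$ and $\alpha > d$. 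Taking the maximum over $i$ gives $\lonorm{H} \lesssim 1/(\alpha-d) = \bigO{1}$.

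With $\degree = \bigO{1}$ and $\sparse \leq \bigO{\eps^{-\kappa}}$ in hand, I would invoke \cref{thm:main}. Conclusion (a) follows from the evolution time bound $\bigO{\sparse \log(\qubits/\delta)/\eps} = \bigO{\eps^{-(1+\kappa)}\log(\qubits/\delta)}$. For conclusion (b), note that since $\degree = \bigO{1}$ and $\sparse = \bigOmega{1}$ for small $\eps$, we have $\max(1, \degree/\sqrt{\sparse}) = \bigO{1}$, so the time resolution is $\bigOmega{1/\sparse} = \bigOmega{\eps^\kappa}$. Conclusion (c) is the structure-learning classical overhead $\bigOt{\qubits^2 \sparse^3 \log(\qubits/\delta)} = \bigOt{\qubits^2 \eps^{-3\kappa}\log(\qubits/\delta)}$, and (d) is the experiment count $\bigOt{\sparse^2 \log(\qubits/\delta)} = \bigOt{\eps^{-2\kappa}\log(\qubits/\delta)}$.

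The calculation is essentially routine once the two structural parameters are bounded; the only step requiring a little thought is the $\lonorm{H}$ estimate, since the definition of power-law decay in \cref{def:power-law-Ham} is phrased in terms of pairs $\{i,j\}$ rather than directly controlling $\sum_{a : i \in \supp(E_a)} \abs{\lambda_a}$. This is why the ``pick a neighbor $j$'' trick is needed, and why the bound degrades if $\alpha = d$ (the series diverges). All the remaining $\eps$ dependencies are then just algebra with $\kappa = d\locality/(d\locality + (\alpha - d))$.
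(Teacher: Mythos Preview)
Your proposal is correct and follows the same overall strategy as the paper: bound $\lonorm{H}$ and $\sparse_\eps$, then plug into \cref{thm:main}. The only difference is that your $\lonorm{H}$ argument is more involved than necessary. The definition of $\alpha$-power-law decay (\cref{def:power-law-Ham}) quantifies over \emph{all} $i,j \in [\qubits]$, including $i = j$; in that case $\{i,j\} = \{i\}$ and $\max(1,\operatorname{dist}(i,i)) = 1$, so the definition already reads $\sum_{a : i \in \supp(E_a)} \abs{\lambda_a} \leq 1$, i.e.\ $\lonorm{H} \leq 1$ directly. The paper's proof is therefore a one-liner, and your ``pick a neighbor $j$'' shell-summation detour, while valid, is unnecessary and yields the weaker bound $\lonorm{H} \lesssim 1 + 1/(\alpha - d)$.
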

\begin{remark}
The exponent, $1 + \kappa$ is at most 2 for all choices of $\alpha$.
As $\alpha$ increases, and the power law decay becomes stronger, $\kappa$ goes to $0$.
This recovers Heisenberg scaling in the limit.
\end{remark}
\begin{proof}
By assumption, we have $\lonorm{H} \leq 1$ and \cref{lem:power-law-sparsity} tells us that $\sparse_{\eps} \lesssim 1/\eps^{\kappa}$.  Now we can apply \cref{thm:main} and immediately get the desired bounds.
\end{proof}

\section*{Acknowledgments}
\addcontentsline{toc}{section}{Acknowledgments}

ET thanks Robin Kothari and Jeongwan Haah for discussions which sowed initial ideas that would later develop into this work.

AB is supported by Ankur Moitra's ONR grant and the NSF TRIPODS program (award DMS-2022448). AL is supported in part by an NSF GRFP and a Hertz Fellowship.  AM is supported in part by a Microsoft Trustworthy AI Grant, an ONR grant and a David and Lucile Packard Fellowship.  ET is supported by the Miller Institute for Basic Research in Science, University of California Berkeley. 

\printbibliography

\appendix
\end{document}